\documentclass[a4paper,UKenglish,cleveref,autoref,thm-restate]{lipics-v2021}

\usepackage{todonotes}
\usepackage{microtype}
\usepackage{nicefrac}

\usepackage{MnSymbol}

\usepackage[sort]{cite}
\usepackage{xspace}
\usepackage{csquotes}
\usepackage{nicefrac}
\usepackage{mathtools}
\usepackage{tikz}
\usepackage{tikz-cd}
\usetikzlibrary{calc}
\usepackage{complexity}

\graphicspath{{./}{./figures/}}
\DeclareGraphicsExtensions{.pdf,.PDF}

\usepackage{subcaption}
\usepackage[normalem]{ulem}

\crefname{observation}{Observation}{Observations}

\newcommand{\diam}{\ensuremath{\operatorname{diam}}}
\newcommand{\dist}{\ensuremath{\operatorname{dist}}}

\newcommand{\las}{\operatorname{ac}}

\title{Flip Distance of Non-Crossing Spanning Trees: \\ NP-Hardness and Improved Bounds}

\authorrunning{H.~Bjerkevik, J.~Dorfer, L.~Kleist, T.~Ueckerdt, and B.~Vogtenhuber}

\author{H\aa vard Bakke Bjerkevik}{University at Albany}{hbjerkevik@albany.edu}{https://orcid.org/0000-0001-9778-0354}{}

\author{Joseph Dorfer}{Graz University of Technology}{joseph.dorfer@tugraz.at}{https://orcid.org/0009-0004-9276-7870}{Austrian Science Fund (FWF) 10.55776/DOC183.}

\author{Linda Kleist}{Universität Hamburg}{linda.kleist@uni-hamburg.de}{https://orcid.org/0000-0002-3786-916X}{}

\author{Torsten Ueckerdt}{Karlsruhe Institute of Technology}{torsten.ueckerdt@kit.edu}{https://orcid.org/0000-0002-0645-9715}{}

\author{Birgit Vogtenhuber}{Graz University of Technology}{birgit.vogtenhuber@tugraz.at}{https://orcid.org/0000-0002-7166-4467}{Austrian Science Fund (FWF) 10.55776/DOC183.}

\Copyright{H\aa vard Bakke Bjerkevik, Joseph Dorfer, Linda Kleist,\\ Torsten Ueckerdt, and Birgit Vogtenhuber}

\ccsdesc[100]{Mathematics of computing → Discrete mathematics, Combinatorics, Combinatoric problems}

\keywords{Non-crossing, spanning tree, plane graph, flip graph, reconfiguration, diameter, complexity, \NP-hard, edge exchange, compatible flip, rotation, happy edge property} 

\nolinenumbers

\hideLIPIcs

\begin{document}
\maketitle

\begin{abstract}
    We consider the problem of reconfiguring non-crossing spanning trees on point sets.
    For a set~$P$ of $n$ points in general position in the plane, the ﬂip graph $\mathcal{F}(P)$ has a vertex for each non-crossing spanning tree on $P$ and an edge between any two spanning trees that can be transformed into each other by the exchange of a single edge (coined a flip).
    This flip graph has been intensively studied, lately with an emphasis on determining its diameter $\diam(\mathcal{F}(P))$ for sets $P$ of $n$ points in convex position.
    For this case, the current best bounds are $\nicefrac{14}{9}\cdot n - O(1) \le \diam(\mathcal{F}(P)) < \nicefrac{15}{9}\cdot n - 3$, obtained in a recent breakthrough work [Bjerkevik, Kleist, Ueckerdt, and Vogtenhuber; SODA 2025].
    The crucial tool for both the upper and lower bound are so-called \emph{conflict graphs}, which the authors stated might be the key ingredient for determining the diameter (up to lower-order terms).

    In this paper, we pick up the concept of conflict graphs from the above-mentioned work and show that this tool is even more versatile than previously hoped. 
    As our first main result, we use conflict graphs to show that computing the flip distance between two non-crossing spanning trees is \NP-hard, even for point sets in convex position.
    Interestingly, the result still holds for more constrained flip operations, concretely, compatible flips (where the removed and the added edge do not cross) and rotations (where the removed and the added edge share an endpoint).
    
    Additionally, we present new insights on the diameter of the flip graph, by this directly extending the line of research from [BKUV SODA25]. 
    Their lower bound is based on a constant-size pair of trees, one of which is of a type we refer to as \emph{stacked}.
    We show that if one of the trees is stacked, then the lower bound is indeed optimal up to a constant term, that is, there exists a flip sequence of length at most $\nicefrac{14}{9}\cdot (n-1)$ to any other tree.

    Lastly, we improve the lower bound on the diameter of the flip graph $\mathcal{F}(P)$ for $n$ points in convex position to $\nicefrac{11}{7}\cdot n-o(n)$.
\end{abstract}

\newpage

%%%%%%%%%%%%%%%%%%%%%
%%                 %%
%%  INTRODUCTION   %%
%%                 %%
%%%%%%%%%%%%%%%%%%%%%
\section{Introduction}\label{sec:intro}

Non-crossing configurations such as triangulations, spanning trees, matchings, or polygonizations on planar point sets are fundamental structures in computational geometry, with wide applications. 
In dynamic environments or interactive settings, it is often necessary to transform one configuration into another through a sequence of local modifications, while ensuring that the non-crossing property is preserved at every step and that the intermediate configurations are of the same type.
This process, known as (discrete) reconfiguration, raises natural algorithmic questions:
Given two non-crossing configurations, is it possible to convert one into the other using only basic operations? If so, it is of interest to bound the length of a shortest reconfiguration sequence between any two configurations, to compute the length of an optimal sequence between two given configurations, and to efficiently determine a short(est) sequence.

These questions
can be nicely rephrased in terms of a so-called \emph{flip graph}.
The flip graph of a discrete reconfiguration problem has a vertex for each configuration and an edge between any two configurations that can be transformed into one another by a single basic discrete operation, called a \emph{flip}.
In terms of the flip graph, the above-mentioned three questions then read as follows: 
Is the flip graph connected?
What is the diameter of the flip graph?
Can a shortest path (flip sequence) between two vertices of the flip graph be computed efficiently?

For many types of geometric graphs, the most classical flip operation is the exchange of one edge with another edge, see also \cref{fig:intro}.
For instance, it is known that the flip graph of triangulations on a set of $n$ points in the plane on convex position~\cite{Lawson72} and has diameter exactly $2n-10$~\cite{sleator1986rotation,pournin2014diameter}.
For decades, it has been a tantalizing and important open problem in theoretical computer science whether a shortest flip sequence between two triangulations
of a convex point set
can be computed in polynomial time.
This problem is equivalent to the complexity of computing the rotation distance between two rooted binary trees.

In this work, we consider flips between non-crossing geometric spanning trees.
Let $P$ denote a set of $n$ points in the plane in general position, that is, no three points of $P$ are collinear. 
A \emph{non-crossing spanning tree on $P$} is a spanning tree that has $P$ as its vertex set and whose edges are pairwise non-crossing straight-line segments. 
A \emph{flip} in a non-crossing spanning tree $T$ on $P$ is the exchange of one edge of $T$ by a different one such that the resulting graph is again a non-crossing spanning tree $T'$ on $P$; see \cref{fig:intro} for an example.
The according ﬂip graph $\mathcal{F}(P)$ has a vertex for each non-crossing spanning tree on $P$ and an edge between any two trees whenever they can be transformed into each other by a single edge ﬂip.
For two trees $T$ and $T'$ on $P$, we denote their flip distance, the length of a shortest path  between $T$ and $T'$ in $\mathcal{F}(P)$, by $\dist(T,T')$.

\begin{figure}[b]
    \centering
    \includegraphics[page=13]{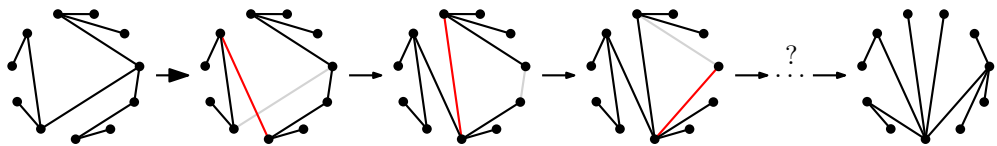}
    \caption{Reconfiguration of non-crossing spanning trees on convex point sets by edge exchange.
        The appearing edge is highlighted in red and the disappearing edge in gray.
        Can you complete the sequence to obtain the tree on the right?
        How many further flips are needed?
    }
    \label{fig:intro}
\end{figure}

In 1996, Avis and Fukuda~\cite{AvisFukuda} showed in their famous reverse search paper that $\mathcal{F}(P)$ is connected for any $n$-point set $P$ in general position, and that its radius is bounded from above by $n-2$, implying $\diam(\mathcal{F}(P)) \leq 2n-4$.
In 1999, Hernando, Hurtado, M{\'{a}}rquez, Mora, and Noy~\cite{Hernando} showed that for convex $n$-point sets $P$, it holds that $\diam(\mathcal{F}(P)) \geq \lfloor \nicefrac{3}{2}\cdot n\rfloor-5$. 
Since then, the flip graph $\mathcal{F}(P)$ of non-crossing geometric spanning trees on point sets $P$ in convex position has been subject of intensive study; see the discussion in \cref{sec:relatedwork} below for some details.
For a long time, it was conjectured that the true value for its diameter should be $\nicefrac{3}{2}\cdot n + c$ for some constant $c$.
Very recently, Bjerkevik, Kleist, Ueckerdt, and Vogtenhuber~\cite{bjerkevik2024flippingnoncrossingspanningtrees} refuted this conjecture by obtaining the currently best known bounds of $\nicefrac{14}{9}\cdot n+ O(1) \le \diam(\mathcal{F}(P)) \le \nicefrac{15}{9}\cdot n - O(1)$ for the diameter of the flip graph on non-crossing spanning trees on point sets in convex position.
The crucial tool for both the upper and lower bound are so-called \emph{conflict graphs}, which the authors introduced and stated might be the key ingredient for determining the diameter of the flip graph of spanning trees for point sets in convex position (up to lower-order terms).
Their lower bound is based on a constant-size example containing a tree belonging to a family of trees that we call \emph{stacked}. 

In this work, we pick up the concept of conflict graphs from~\cite{bjerkevik2024flippingnoncrossingspanningtrees} and show that this tool is even more versatile than previously hoped.
We prove that the flip distance between two trees can be as high as $\nicefrac{11}{7}\cdot n-o(n)$, by this directly extending the line of research from~\cite{bjerkevik2024flippingnoncrossingspanningtrees}.

\begin{theorem}
    For any set $P$ of $n \geq 3$ points in convex position, the flip graph $\mathcal{F}(P)$ of non-crossing spanning trees on $P$ has diameter at least $ (\nicefrac{11}{7}-o(1))n$.
\end{theorem}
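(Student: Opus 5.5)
We plan to follow the conflict-graph method of~\cite{bjerkevik2024flippingnoncrossingspanningtrees}. For a pair of trees $T_1,T_2$ on a convex point set, the lower bound there has the form
\[
  \dist(T_1,T_2) \ge |T_1\setminus T_2| + (\text{number of edges that must be flipped at least twice}),
\]
where the second quantity is bounded below via an acyclicity argument on the conflict graph. Concretely, the edges of $T_1\setminus T_2$ that are flipped exactly once, each directly into a crossing-compatible edge of $T_2$, must induce an acyclic subgraph of the conflict graph, i.e.\ the conflict relation between $T_1$- and $T_2$-edges forces a linear order of direct flips. Hence
\[
  \dist(T_1,T_2) \ge 2|T_1\setminus T_2| - \las,
\]
where $\las$ denotes the maximum size of an acyclic vertex set in the conflict graph, suitably adjusted for boundary edges shared by both trees. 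I will take this inequality, as stated earlier in the paper, as the black box. The task therefore reduces to constructing, for infinitely many $n$, a pair of trees whose conflict graph has acyclic sets of size at most about $\tfrac{3}{7}n$ while nearly all $n-1$ edges differ. This gives $2n-\tfrac37 n=\tfrac{11}{7}n$.

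The construction proceeds in three steps.
\begin{enumerate}
  \item \emph{Choose $T_1$ and $T_2$.} The $\tfrac{14}{9}$ bound came from a constant-size gadget with one stacked tree. The later optimality result says that stacked trees cannot beat $\tfrac{14}{9}$, so neither tree may be stacked. I will instead use a recursive, self-similar design: take $T_1$ to be a nested \enquote{zigzag of zigzags}, and take $T_2$ to be its mirror image, or a rotated copy chosen so that every interior edge of $T_2$ crosses many edges of $T_1$ in a cyclic pattern.
  \item \emph{Analyse the conflict graph.} Show that it decomposes into many directed cycles that overlap heavily. Any acyclic set must then omit a large fraction of vertices, which is a directed feedback vertex set bound.
  \item \emph{Bound the acyclic sets.} Prove $\las\le (\tfrac37+o(1))n$. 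I would do this by a discharging or charging argument: partition the vertices into local blocks of constant size $k$, show that within each block together with its neighbours any acyclic set takes at most a $\tfrac37$ fraction, and let boundary effects between blocks account for the $o(n)$ term.
\end{enumerate}

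To pass from the construction to \emph{any} $n$, I would take blocks of size growing slowly with $n$, or repeat the gadget $\lfloor n/k\rfloor$ times, pad the remaining points with shared edges, and let $k\to\infty$. This padding produces the $o(1)$ loss in the statement.

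The main obstacle is the upper bound on acyclic sets, since the conflict graph must be globally entangled enough to beat $\tfrac{14}{9}$. Local constant-size gadgets seem to max out at $\tfrac{14}{9}$, and the $o(n)$ error term suggests the optimum is only reached in a limit. I would first try to find a recursive family in which the acyclic-set ratio satisfies a recurrence converging to $\tfrac37$. Computer search over small convex configurations would be the natural way to guess such a family. After that, I would prove the recurrence by induction on the nesting depth, showing that each level contributes at least a fixed fraction of forced double flips.
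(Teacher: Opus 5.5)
Your proposal has a genuine gap. It is a research plan rather than a proof, and the one concrete tool it relies on is misstated.

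\textbf{The black box.} Neither the paper nor \cite{bjerkevik2024flippingnoncrossingspanningtrees} provides $\dist(T_1,T_2)\ge 2|T_1\setminus T_2|-\las$ for an arbitrary pair of trees. The paper stresses that, without blowups, flip sequences based on the edge-edge bijection can be far from optimal. Correspondingly, \cref{lem:lower} only gives $(\beta-2n)(2|V(H)|-\las(H))$ for the $\beta$-blowup, which is vacuous at $\beta=0$.

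The usable statement is \cref{thm:bigtheorem}\eqref{item:big-LB}. A single fixed pair $(T,T')$ yields $\diam(\mathcal{F}_{\bar n})\ge (2-\las(H)/|V(H)|)\bar n-c$ for every $\bar n$. So the right target is a family of constant-size pairs with $\las(H)/|V(H)|\to\frac37$, normalized by the number of near-near pairs, not by $n$. The blowup absorbs all shared or short edges and boundary effects into $c$. You therefore need neither padding nor the requirement that nearly all edges differ. In the paper's family one has $n=8\ell+3$ but $|V(H)|=7\ell$, so your unblown normalization would give only about $\frac{11}{8}n$ there, even granting your inequality.

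\textbf{The missing construction.} More decisively, your steps 1--3 contain no construction and no bound. The phrases \emph{zigzag of zigzags}, \emph{mirror image or rotated copy}, and \emph{discharging} are placeholders, and you defer finding the family to computer search. The missing idea is the paper's iterative pair $(T^\ell,\tilde T^\ell)$ with $7\ell$ near-near pairs and two properties:
\begin{itemize}
\item for each level $j$, the double conflicts form a bidirected path $g_7^j\leftrightarrow g_2^j\leftrightarrow g_1^j\leftrightarrow g_3^j\leftrightarrow g_5^j\leftrightarrow g_4^j\leftrightarrow g_6^j$;
\item there is a double conflict $g_6^j\leftrightarrow g_7^{j'}$ for every $j<j'$.
\end{itemize}

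An acyclic set must be independent with respect to double conflicts. Hence each $7$-path contributes at most $4$ gaps, and $4$ is attained only by $\{g_7^j,g_1^j,g_5^j,g_6^j\}$. Two levels attaining $4$ would contain some $g_6^j$ and $g_7^{j'}$ with $j<j'$, which conflict. This gives $\las\le 3\ell+1$. Plugging into \cref{thm:bigtheorem}\eqref{item:big-LB} and letting $\ell\to\infty$ yields $\frac{11}{7}-o(1)$. The $o(1)$ comes from the $+1$ and the $\ell$-dependent constant $c$, not from padding.

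Note also that the saving in this family is not local. Each level alone permits ratio $\frac47$, so a bounded-neighbourhood averaging of the kind you sketch in step 3 cannot produce $\frac37$ here. It is the interaction between all pairs of levels that pushes the global ratio down.
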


We further show that if one of the trees is stacked, then the lower bound from~\cite{bjerkevik2024flippingnoncrossingspanningtrees} is indeed the best possible.
In other words, whenever one of the trees is stacked, then there exists a flip sequence of length at most $\nicefrac{14}{9}\cdot n- O(1)$ to any other tree.

\begin{theorem} \label{thm:stackedTrees}
    Let $T$, $T'$ be non-crossing trees on $n\geq 3$ points in convex position.
    If $T$ is a stacked tree, then $\text{dist}(T,T') \leq \frac{14}{9}(n-1)$.
\end{theorem}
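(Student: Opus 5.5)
The plan is to reuse the upper-bound machinery of~\cite{bjerkevik2024flippingnoncrossingspanningtrees}. A tree $T'$ can be reached from any tree $T$ with at most $2(n-1)$ flips: first flip to a suitable intermediate tree, then to $T'$. Every edge that can be ``saved'' in this two-phase scheme reduces the count by one. The conflict graph framework makes this precise. Fix a target configuration of boundary edges, for instance the path obtained by removing one convex-hull edge. The edges of $T$ and of $T'$ that can be flipped directly (in one step) into each other, or that are already boundary edges, form the vertex set of a directed conflict graph. Any acyclic induced subgraph of that conflict graph of size $k$ then yields a flip sequence of length $2(n-1)-k$. So it suffices to show that for stacked $T$ and arbitrary $T'$ we can always find savings of $k \ge \tfrac49(n-1)$. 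This matches $2(n-1)-\tfrac49(n-1)=\tfrac{14}{9}(n-1)$.

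First, I would use the structure of a stacked tree, meaning that its edges decompose into nested ``layers'' around a boundary path. From this I would derive a direct description of the conflict graph between $T$ and $T'$. Concretely, each edge of $T'$ should conflict only with edges of $T$ in a bounded window of consecutive layers, so the conflict graph has bounded local structure. Second, I would choose which convex-hull edge is omitted, i.e.\ the rotation of the intermediate path. The $n$ possible choices give $n$ different conflict graphs, and I would average over them. A boundary edge of $T'$ or $T$ is saved for almost all choices. A non-boundary edge $e$ of $T'$ is saved for those choices in which it can be flipped directly to the boundary edge assigned to it. The heart of the argument is to show that the sum over all choices of the maximum acyclic savings is at least $\tfrac49 n(n-1)$.

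Third, I would construct the acyclic set for each fixed choice by a local greedy or charging rule. In every directed cycle of the conflict graph, I would drop one vertex, chosen so that each dropped vertex can be charged to a group of vertices that are kept. The stacked structure should ensure that directed cycles are short and well understood, say of length at most three. Each such cycle would then cost at most one vertex out of a group that, after averaging, contributes at least $\tfrac49$ per edge. The constant-size example from~\cite{bjerkevik2024flippingnoncrossingspanningtrees} shows that $\tfrac49$ is tight, so the charging has to be exact on that configuration. I would use it as a guide: the weight assigned to each edge type of $T'$ (boundary, short chord, long chord spanning several layers) is chosen so that the tight example gives exactly $\tfrac49$.

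The main obstacle is this last charging step. Edges of $T'$ that cross many layers of $T$ can participate in long cycles, and a naive cycle-breaking rule may lose too much there. If direct charging fails, I would first try an LP-style discharging over edge types. In this approach one proves only local inequalities, one per configuration of an edge of $T'$ relative to the layers of $T$, and verifies each inequality case by case. I would then sum these inequalities to obtain the global bound $k\ge\tfrac49(n-1)$.
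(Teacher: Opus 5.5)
\textbf{There is a genuine gap.} Your top-level accounting (the trivial $2(n-1)$ flips minus savings of $\tfrac49(n-1)$) is fine. However, the proposal never produces those savings. The step you call the heart of the argument is left as an unspecified charging or LP-discharging scheme, so as written there is no proof.

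The ingredients you intend to use also do not match the hypothesis. A stacked tree is not a system of nested layers around a boundary path. It is a family of chains $S_1,\dots,S_k$, each totally ordered by covering, with no covering between different chains. This property depends on where the hull is cut, so averaging over the $n$ cut positions destroys the structure you want to exploit. For example, the path $p_1p_2p_3p_4p_5$ is stacked. After cutting between $p_2$ and $p_3$, the edge $p_2p_3$ joins the two extreme points and covers the disjoint edges $p_3p_4$ and $p_1p_2$. Several further points are unsupported:
\begin{itemize}
\item You give no argument that conflicts stay within a bounded window of layers.
\item You give no argument that directed cycles have length at most three.
\item Tuning weights so that the known example comes out at exactly $\tfrac49$ is not an argument.
\item Your conflict graph is not well defined until edges of $T$ and $T'$ are paired, which needs the gap--edge bijection.
\end{itemize}

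\textbf{The missing idea} is a $3$-coloring of the stacks combined with averaging over nine acyclic sets, all in the fixed representation where $T$ is stacked. By \cref{thm:bigtheorem}\eqref{item:big-UB}, which already absorbs all non-near-near pairs at rate $\tfrac32$, it suffices to show $\las(H)\ge\tfrac49|V(H)|$ for the conflict graph $H$ of near-near pairs.
\begin{itemize}
\item \emph{Coloring.} Join stacks $S_i,S_j$ if some near-near pair $(e,e')$ has $e\in S_i$ and $e'$ crossing an edge of $S_j$. Since $T'$ is non-crossing, this graph is outerplanar and hence properly $3$-colorable.
\item \emph{Nine classes.} Put an above pair whose $T$-edge lies in a stack of color $i$ into $A_i$. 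Put a below or crossing pair into $B_{i,j}$, where $j\ne i$ is the color of the other stack crossed by $e'$ (arbitrary if there is none). This is well defined: the edges of other stacks crossed by $e'$ all cover the gap inside $e'$ at its endpoint away from the pair's gap, so they are nested and lie in one stack.
\item \emph{Acyclic sets.} For $\{x,y,z\}=\{1,2,3\}$, the sets $H_x=A_y\cup A_z\cup B_{x,y}\cup B_{x,z}$ and $H_{x,y}=A_z\cup B_{y,z}\cup B_{x,y}\cup B_{x,z}$ are acyclic, for three reasons.
  \begin{itemize}
  \item All above gaps together form an acyclic set.
  \item In $B_{i,j}\cup B_{i,k}$, the pair with the longest $T$-edge has no incoming conflict, so vertices can be peeled off one by one. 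The laminar structure of $T$ rules out conflict types~2 and~3. Type~1 is ruled out by nesting inside a stack and by the coloring across stacks.
  \item No conflict goes from a color-$i$ pair into $A_j\cup B_{j,k}$, so the topological orders concatenate.
  \end{itemize}
\item \emph{Averaging.} Each of the nine classes lies in exactly four of the nine sets. Hence one of them has at least $\tfrac49|V(H)|$ elements, giving $\dist(T,T')\le(2-\tfrac49)(n-1)=\tfrac{14}9(n-1)$.
\end{itemize}
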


Moreover, we use conflict graphs beyond the expectations of~\cite{bjerkevik2024flippingnoncrossingspanningtrees}, showing that they can also be used to prove that computing the flip distance between two non-crossing spanning trees is \NP-hard, even for point sets in convex position. 

\begin{restatable}{theorem}{hardness}\label{thm:hardness}
    Given two non-crossing spanning trees $T,T'$ on a point set  and an integer $\delta$, the decision problem of whether there exists a flip sequence of length at most~$\delta$ between $T$ and $T'$ is \NP-complete,
   even for point sets in convex position.
\end{restatable}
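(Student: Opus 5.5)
Membership in \NP{} is immediate: by Avis and Fukuda the flip distance is at most $2n-4$, so a certificate is a sequence of at most $\min(\delta,2n-4)$ flips. Each flip can be checked in polynomial time for non-crossingness and the spanning tree property.

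For hardness I would reduce from a problem whose optimum matches the combinatorics of conflict graphs. Recall the framework of \cite{bjerkevik2024flippingnoncrossingspanningtrees}. For trees $T,T'$ on a convex point set, a flip sequence of length $|T\setminus T'|+k$ exists iff one can choose a suitable structure in the conflict graph. In that graph every edge of $T\setminus T'$ must be removed and every edge of $T'\setminus T$ inserted. Extra flips are needed exactly when directed conflicts form cycles: an edge must be temporarily replaced by an edge not in $T\cup T'$, which costs one extra flip. Roughly, the distance equals $|T\setminus T'|$ plus the minimum number of vertices whose deletion makes the conflict graph acyclic, up to controlled corrections.

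This suggests reducing from \textsc{Feedback Vertex Set}, which is \NP-hard even on restricted classes such as planar digraphs of bounded degree. The plan has three steps.

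\begin{itemize}
\item[(1)] Prove a precise lemma, building on the conflict-graph machinery. For a restricted class of pairs $(T,T')$, namely those where $T\cup T'$ has a specific gadget structure along the convex boundary, $\dist(T,T') = |T\setminus T'| + \mathrm{fvs}(D)$, where $D$ is the directed conflict graph.
\item[(2)] Show that every digraph from the hard class (after standard subdivision and degree-reduction, which preserve the minimum feedback vertex set) can be realized as such a conflict graph. The vertex gadgets are pairs of crossing edges $e\in T$, $e'\in T'$ placed on the convex polygon. Arcs are realized by the order in which endpoints interleave, so that $e_1'$ crosses $e_2$ exactly when the arc $v_1\to v_2$ is present.
\item[(3)] Set $\delta = |T\setminus T'| + k$.
\end{itemize}

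The lower bound in (1) would follow the counting argument of the earlier paper. Each edge of $T\setminus T'$ is removed at least once and each edge of $T'\setminus T$ is inserted at least once. Consider the gadgets in which a direct flip (removing $e$ and inserting an edge of $T'$ simultaneously) never happens; these must receive an extra flip, and they must hit every cycle, since otherwise ordering the direct flips is impossible. The upper bound takes a feedback vertex set $S$. We flip each gadget in $S$ to a neutral boundary edge, perform the remaining direct flips in a topological order of $D-S$, and then finish the gadgets in $S$.

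I expect the main obstacle to be step (2) combined with the exactness in (1). Convex geometry severely limits which crossing patterns are realizable, so general digraphs probably cannot be embedded. The hard class therefore has to be chosen to match what convex position permits, for example a structured variant of FVS or a reduction from \textsc{3-SAT} with cycle-gadgets for variables and clauses. I would first try to design variable gadgets as long directed cycles of crossing edge pairs, with two optimal hitting patterns corresponding to true and false. Clause gadgets would be short cycles that share vertices with the variable cycles. The remaining care is ensuring that no cheaper "shortcut" flips via unrelated edges exist, which I would handle using the happy-edge-type lemmas from the earlier work. The same construction should also hold for compatible flips and rotations, since the upper-bound sequences can be chosen to use only such flips.
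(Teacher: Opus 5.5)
Your NP-membership argument is fine, and your heuristic ``differing pairs plus a minimum feedback vertex set'' is essentially the quantity $2|V(H)|-\las(H)=|V(H)|+(|V(H)|-\las(H))$ that the paper works with. But step (1) is a genuine gap. No exact formula $\dist(T,T')=|T\setminus T'|+\mathrm{fvs}(H)$ is available, and your lower-bound sketch does not give one. The conflict graph refers to the fixed gap-based pairing $(e_i,e'_i)$. A flip sequence, however, may exchange $e_i$ for a non-partner $e'_j$ in a single flip; this is exactly what resolves a type~1 conflict $\overrightarrow{g_ig_j}$. It may also detour through edges outside $T\cup T'$. So it does not follow that the pairs not flipped directly must hit every cycle, and happy-edge statements, which only concern edges of $T\cap T'$, do not exclude this. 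What the earlier work provides (\cref{lem:lower}) is only the blowup bound $\dist(\beta\cdot T,\beta\cdot T')\ge(\beta-2n)(2|V(H)|-\las(H))$, which loses $2n$ per unit.

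The missing idea is to absorb this loss rather than eliminate it. Take $\beta=4n^2+2n$. The lower-bound loss $(2n+1)(2|V(H)|-\las(H))\le 4n^2$ and the additive $2(n-1-|V(H)|)$ in the easy upper bound $(\beta+1)(2|V(H)|-\las(H))+2(n-1-|V(H)|)$ together stay below the step size $\beta+1$. Hence $\las(H)\ge k$ iff $\dist(\beta\cdot T,\beta\cdot T')\le(\beta+1)(2|V(H)|-k)+2(n-|V(H)|)$ (\cref{thm:hardbeautiful}). It then suffices to show that deciding $\las(H)\ge k$ is NP-hard for conflict graphs (\cref{prop:las-hardness}); no exact distance formula is needed.

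Step (2) fails as described, too. If every vertex gadget is a crossing near-near pair, then $V(H)\subseteq C$, which is acyclic by \cref{ac:ABC_SODA}. So $\mathrm{fvs}(H)=0$ and nothing is encoded. The same holds for above-only or below-only gadgets, so every cycle must use gaps of different types. You also cannot realize a prescribed digraph arc by arc, since unwanted conflicts of types 1--3 inevitably arise.

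The paper's remedy is to design gadgets in which every mixed conflict that is not a double conflict either leaves an above gap or enters a below gap, except for a few controlled exceptional conflicts. Then any set without double or exceptional conflicts is acyclic via the concatenated order $t_At_Ct_B$ (\cref{lem:reduction-acyclic-set}), and $\las$ becomes an independent-set-type quantity. Concretely:
\begin{itemize}
\item variable gadgets are paths of double conflicts cycling through below, crossing and above gaps;
\item clause gadgets are two gaps in double conflict with each other and with one gap of each of their variable gadgets;
\item the source problem is \textsc{Planar V-Cycle Max-2SAT}, whose aligned drawing keeps the clause edges non-crossing.
\end{itemize}
Note also that a long directed cycle cannot encode a truth value, since deleting any one of its vertices destroys it. You need chains of 2-cycles plus a weighting that forces every large acyclic set to take exactly $t(x)$ or $f(x)$; the paper uses $m$- and $2m$-blowups inside the variable gadgets for this.
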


Interestingly, this result still holds for more constrained flip operations, concretely, \emph{compatible} flips (where the removed and the added edge do not cross) and \emph{rotations} (where the removed and the added edge share an endpoint). 
In \cref{fig:intro}, while the first flip is just an edge exchange, the second flip a compatible flip, and the third flip is a rotation.

\begin{restatable}{theorem}{hardness_restricted}\label{thm:hardness_restricted_flips}
        Given two non-crossing spanning trees $T,T'$ on a point set and an integer $\delta$, the decision problem of whether there exists a compatible flip sequence / a rotation sequence of length at most~$\delta$ between $T$ and $T'$ is \NP-complete,
        even for point sets in convex position.
\end{restatable}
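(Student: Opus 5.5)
The proof splits into membership in \NP\ and \NP-hardness. Membership is immediate for compatible flips and rotations. By the result of Avis and Fukuda, the flip graph has diameter at most $2n-4$. For convex point sets the corresponding bound for compatible flips and rotations is also linear, and in any case we may assume $\delta \le 2n$ after truncation, or simply guess a sequence of length at most $\min(\delta, \mathrm{poly}(n))$. A certificate is then the sequence of at most $\delta$ flips, and each step can be checked in polynomial time: the removed edge is present, the added edge is absent, the result is a non-crossing spanning tree, and the flip is compatible (or a rotation).

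For hardness I would reuse the reduction behind \cref{thm:hardness} and show that it also controls the restricted flip types. That reduction goes from some \NP-hard problem on conflict graphs, most plausibly a feedback-vertex-set or acyclic-ordering type problem on the directed conflict graph. In it, the flip distance equals $|T\setminus T'|$ plus the minimum number of ``extra'' flips needed to break cycles in the conflict graph. There are two steps.

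\emph{Lower bound.} Any compatible or rotation sequence is in particular a flip sequence. So the lower bound $\dist(T,T') \ge |T \setminus T'| + k^*$ from the general reduction carries over verbatim: restricting the moves can only increase the distance.

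\emph{Upper bound.} The real work is to show that when the instance is a yes-instance, the optimal general flip sequence constructed in the proof of \cref{thm:hardness} can be realized with only compatible flips, or only rotations, without extra cost. To do this, I would reinspect the gadgets. Where the constructed sequence flips an edge $e$ of $T$ directly to its target edge $e'$ of $T'$, or to an intermediate ``boundary'' edge used to break a conflict cycle, I would check that the pairs can be chosen to share an endpoint. If they cannot, I would modify the gadgets, for instance by replacing each gadget edge by a fan or by choosing target edges incident to a common hub vertex, so that every scheduled flip removes and adds edges with a common endpoint. A rotation never crosses the removed edge, since two segments sharing an endpoint do not cross, so rotations are automatically compatible. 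Hence it suffices to prove the rotation version of the upper bound, and both restricted statements follow from it together with the lower bound.

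The main obstacle is this last adaptation. The sequences from the general hardness proof may use flips in which an edge is exchanged for a crossing, or at least non-adjacent, edge. Rerouting these as rotations must not create new conflicts that would cost additional flips. I would first try to design each gadget so that every $T$-edge and its matched $T'$-edge share an endpoint along the convex hull. Failing that, I would charge a constant number of extra rotations per gadget uniformly, independent of the solution, and shift the threshold $\delta$ accordingly. That preserves the equivalence, because the extra cost is the same for yes- and no-instances.
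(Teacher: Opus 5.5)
Your \NP-membership argument, the transfer of the lower bound (every compatible or rotation sequence is a flip sequence), and the reduction of the compatible case to the rotation case are all fine and match the paper. The gap is in the upper bound, exactly at the obstacle you name, and neither of your fallbacks closes it.

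Crossing gaps are essential to the instances of \cref{prop:las-hardness}. The gaps $t^i_C,f^i_C$ of each variable-gadget and one gap of every clause-gadget are crossing, and \cref{lem:reduction-acyclic-set} relies on the order $t_At_Ct_B$. For a crossing pair the two edges cross, so the step of the general sequence that exchanges such a pair in one flip is not even a compatible flip. ``Checking that the pairs share an endpoint'' therefore fails precisely here. Redesigning the gadgets would change the conflict graph and force you to re-prove \cref{prop:las-hardness}, which you do not do.

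Your second fallback rests on a wrong justification. The extra rotations appear only in the sequence you construct for yes-instances. The bound for no-instances comes from \cref{lem:lower} for unrestricted flips and contains no such term. So the extra cost is not ``the same for yes- and no-instances''. Shifting $\delta$ is legitimate only if the extra cost fits into the separation between the two bounds, which is of order $\beta$. Hence the extra cost must be bounded independently of $\beta$, and $\beta$ may have to be enlarged. A naive rerouting violates this. Treating each of the $\beta+1$ copies of a crossing gap of the acyclic set $S$ separately (say via its gap, two rotations each) costs $\beta+1$ extra rotations per such gap. That is exactly as expensive as dropping the gap from $S$, so the distance would no longer be governed by $\las(H)$.

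The missing idea is \cref{lemma:rotbound}: leave the gadgets unchanged and exploit the fans that the blowup itself creates. The $\beta+1$ copies of a crossing pair are the edges $u_iu_\ell$, $1\le i\le \ell-2$, in $\beta\cdot T$ and $u_0u_i$, $2\le i\le \ell-1$, in $\beta\cdot T'$, where $\ell=\beta+3$. The sequence is:
\begin{itemize}
\item rotate $u_1u_\ell$ to the temporary hub edge $u_0u_\ell$;
\item rotate $u_iu_\ell$ to $u_0u_i$ for increasing $i$;
\item finally rotate $u_0u_\ell$ to $u_0u_{\ell-1}$.
\end{itemize}
This takes $\beta+2$ rotations, so only one extra rotation per crossing gap of $S$, at most $|H|=|V(H)|$ overall.

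Combine this with the following, processing the blowup groups of $S$ in a topological order:
\begin{itemize}
\item direct rotations for the above/below blowup groups (shortest to longest, respectively longest to shortest);
\item two rotations for every other near-near pair;
\item at most four rotations for every non-near-near pair.
\end{itemize}
This gives a rotation sequence of length at most $(\beta+1)(2|H|-k)+|H|+4(n-|H|)$ for yes-instances. With $\beta=4n^2+4n$, \cref{lem:lower} shows that no-instances exceed this threshold (\cref{thm:hardbeautifulcr}).
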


We point out that the latter two statements constitute the first hardness results for flip graphs on point sets in convex position.
Moreover, \cref{thm:hardness_restricted_flips} refutes the long-held belief that the flip distance can be computed in polynomial time when the so-called happy edge property holds, because that property holds for compatible flips of non-crossing spanning trees on convex point sets.
See \cref{sec:relatedwork} for more background and \cref{table:2} for an overview of complexity status for various flip graphs of non-crossing structures on point sets.
\begin{table}[htb]
    \caption{Overview of complexity results. Results obtained in this work are marked by [*].}
    \begin{tabular}{|| c c || c c ||} 
        \hline
        setting & flip type & convex position & general position\\
        \hline\hline
        triangulations & edge exchange & 
        %open\footnotemark 
        \NP-hard?~\cite{dorfer2026flip}\footnotemark
        & \NP-hard ~\cite{LubiwP15,pilz2014flip}\\
        \multirow{4}{*}{non-crossing spanning trees} & edge exchange & \NP-hard [*] & \NP-hard [*]  \\
        & compatible & \NP-hard [*] & \NP-hard [*] \\
        & rotation & \NP-hard [*] & \NP-hard [*] \\
        & slide & open & open \\
        non-crossing spanning paths & edge exchange & in \P\ ~\cite{aichholzer2025linear}\, & open \\
        non-crossing perfect matchings & edge exchange & in \P~\cite{matchings} & \NP-hard \cite{MatchingHardness}\\
        non-crossing odd matchings & edge exchange & open & \NP-hard\ ~\cite{aichholzer2025flippingoddmatchingsgeometric}\\
        \hline
    \end{tabular}
    \label{table:2}
\end{table}

\pagebreak
%%%%%%%%%%%%%%%%%%%%%
%%     OUTLINE     %%
%%%%%%%%%%%%%%%%%%%%%
\subparagraph{Outline.}
We discuss further
related work in \Cref{sec:relatedwork} and introduce the central concepts from~\cite{bjerkevik2024flippingnoncrossingspanningtrees} that are required for our work in \Cref{sec:concepts}. 
In \Cref{sec:np-completeness}, we prove \NP-completeness for deciding the length of a shortest (general, compatible, or rotational) flip sequence between two spanning trees on a convex point set. 
In \Cref{sec:stacked}, we show that the flip distance between two trees on a convex $n$-point set is at most $\nicefrac{14}{9}\cdot(n-1)$ if one of them is stacked.
In \Cref{sec:newlowerbound}, we improve the lower bound on the diameter of the tree flip graph for $n$ points in convex position to $(\nicefrac{11}{7}-o(1))n$.
We conclude with a summary and open problems in \cref{sec:conclusion}.

%%%%%%%%%%%%%%%%%%%%%
%%  RELATED WORK   %%
%%%%%%%%%%%%%%%%%%%%%
\subsection{Related work}\label{sec:relatedwork}
We review the related work in two aspects, concerning the connectedness and diameter bounds on the one hand, and the computational complexity of computing a shortest flip sequence on the other hand.

\footnotetext{In a very recent arXiv preprint, Dorfer~\cite{dorfer2026flip} announced that computing the flip distance between triangulations of convex point sets is NP-complete. 
This (potential since not yet reviewed) result was inspired by and initiated after the submission of the work at hand, and uses similar techniques: 
The author introduces a notion of conflict graphs for triangulations of convex point sets and shows that the flip distance of certain pairs of triangulations is closely related to the size of largest acyclic subsets in the according conflict graph. 
\NP-hardness of the flip distance problem is then shown via proving \NP-hardness of finding largest acyclic subsets in these conflict graphs.}

%%%%%%%%%%%%%%%%%%%%%%%%%%%%%
%% CONNECTEDNESS, DIAMETER %%
%%%%%%%%%%%%%%%%%%%%%%%%%%%%%
\subparagraph{Connectedness and Diameter.}
For triangulations, connectedness of the flip graph for point sets was shown by Lawson~\cite{Lawson72} and for simple polygons by Hurtado, Noy, and Urrutia~\cite{HurtadoNU99}.
In both cases, the diameter of the flip graph is in $O(n^2)$. 
Hurtado, Noy, and Urrutia~\cite{HurtadoNU99} provided a matching lower bound of $\Omega(n^2)$ for both settings. 
For triangulations of convex point sets, Sleater, Tarjan, and Thurston~\cite{sleator1986rotation} showed that the diameter of the flip graph is at most $2n-10$, which is tight for $n>13$ as proven by Pournin~\cite{pournin2014diameter}.
There are many more results on flips in triangulations; see for example \cite{Eppstein10,Eppstein10_socg07,rainbowJournal,rainbowSoCG18,HouleHNR05,WagnerW22,WagnerW22_socg20,WagnerW22_soda20}. 

For non-crossing spanning trees of point sets in general position, Avis and Fukuda~\cite{AvisFukuda} showed that the flip graph is connected for flips, compatible flips, and rotations, and has a diameter at most $2n-4$. 
Nichols, Pilz, T\'oth, and Zehmakan~\cite{TreeTransition} gave an upper bound on the diameter of $O(n\log(n))$ for empty triangle rotations. 
Aicholzer, Aurenhammer, and Hurtado~\cite{AichholzerAH02} proved that the flip graph is connected for edge slides (where the source and the target edge together with some other edge of the tree form an empty triangle).
They also provided an exponential upper bound on the diameter, which was later improved to $O(n^2)$ by Aichholzer and Reinhart~\cite{aichholzer2007quadratic}. 
When considering point sets in convex position, determining the exact diameter of the flip graph has gained considerable attention in the last couple of years.
Starting from the benchmark upper bound of $2n-4$~\cite{AvisFukuda} and the conjectured to be essentially tight lower bound of $\lfloor\nicefrac{3n}{2}\rfloor - 5$~\cite{Hernando}, the upper bound was subsequently improved to $2n - \log(n)$ \cite{aichholzer2022reconfiguration} and $2n-\sqrt{n}$ \cite{bousquet2023noteJOURNAL}.
The leading constant factor of~$2$ was broken by Bousquet, de Meyer, Pierron, and Wesolek~\cite{bousquet2024reconfigurationSoCG}, who improved the upper bound to $1.96n$.
The latest advancement before the work at hand is due to the recent work~\cite{bjerkevik2024flippingnoncrossingspanningtrees}, which improved the upper bound to $\nicefrac{5}{3}\cdot n-3$ and the lower bound to $\nicefrac{14}{9}\cdot n-O(1)$.
We note that the lower bound of $\nicefrac{14}{9}\cdot n-O(1)$ is not only the best bound for points in general position, but also holds for restricted flip types.
When restricting the flips to compatible flips or rotations in the convex setting, the best standing upper bounds on the diameter of the flip graphs of spanning trees are $\nicefrac{5}{3}\cdot n-2$ and $\nicefrac{7}{4}\cdot(n-1)$, respectively, due to Aichholzer, Dorfer, and Vogtenhuber~\cite{aichholzer2025constrainedflipsplanespanning}. 

For plane spanning paths, connectedness of the flip graph is an interesting open problem.
So far, the flip graph has been shown to be connected for convex point sets~\cite{akl2007planar} (with diameter $2n-6$ for $n>4$ \cite{convexDiameter}), wheel sets and double circles~\cite{2023Aicholzer}, as well as point sets with two convex layers~\cite{KKR_paths}.
For matchings, the connectedness problem becomes a parity issue: While the flip graph of odd matchings is known to be connected~\cite{aichholzer2025flippingoddmatchingsgeometric}, the connectedness of the flip graph remains an open question for perfect matchings.
Partial progress has been made for convex point sets by Hernando, Hurtado, Noy~\cite{matchings} and for the case when an unbounded number of edges can be exchanged by Houle, Hurtado, Noy, and Rivera-Campo~\cite{HouleHNR05}.

%%%%%%%%%%%%%%%%%%%%%%%%%%%%%
%% COMPLEXITY, HAPPY EDGES %%
%%%%%%%%%%%%%%%%%%%%%%%%%%%%%
\subparagraph{Computational complexity of shortest flip sequences and the happy edge property.}
In several settings, there is a close connection between the existence of \NP-hardness proofs or polynomial-time algorithms and the happy edge property.
A reconfiguration problem has the \emph{happy edge property} if edges that are in the initial as well as the target configuration (and thus are \emph{happy}) 
can remain in the configuration throughout the entire flip sequence.

The happy edge property holds for triangulations of point sets in convex position~\cite{sleator1986rotation}, but not for triangulations of general point sets or polygons~\cite{bose2009flipsinplanar}.
The latter fact was used to show that the flip distance problem between triangulations of point sets and polygons is \NP-complete~\cite{LubiwP15,triangulationPolygon} and also \APX-hard on point sets~\cite{pilz2014flip}.
In contrast, on point sets without empty convex pentagons, the flip distance can be computed in polynomial time~\cite{eppstein2007happy}. 
If we consider triangulations of point sets in convex position, 
the flip graph is the 1-skeleton of a famous polytope, the \emph{associahedron}.

For the more general case of graph associahedra, where incidence relations are given by small reconfiguration steps between tubings of graphs~\cite{devadoss2009realization}, the shortest path problem has recently been shown to be \NP-hard~\cite{graphAssoHardness}.
The underlying graphs of the graph associahedra that appear in the hardness proof have high degree vertices. 
The graph associahedron whose underlying graph is a path is the associahedron,
for which the flip distance problem might also be \NP-hard
by the very recent arXiv preprint~\cite{dorfer2026flip}.

In the case of perfect matchings on general point sets, \NP-hardness of computing a shortest flip sequence was shown via counterexamples to the happy edge property~\cite{MatchingHardness}.
For convex point sets, the happy edge property holds and the flip distance between two matchings can be determined in linear time~\cite{matchings}.
Similarly, for the reconfiguration of odd matchings on general point sets, the happy edge property also does not hold and the flip distance is \NP-hard to compute~\cite{aichholzer2025flippingoddmatchingsgeometric}.

All these results support the long standing belief that the happy edge property is tied to the complexity of graph reconfiguration problems: 
If the happy edge property holds, then the problem is supposedly easy to solve.
If it doesn't hold, the problem should be hard to solve.

Most recently, this pattern has been broken, when Aichholzer and Dorfer~\cite{aichholzer2025linear} provided a linear time algorithm to determine the flip distance between non-crossing spanning paths of convex point sets, despite the existence of counterexamples to the happy edge property. 
What remained open (until now) is the other direction: 
Is there a graph reconfiguration problem for which it is \NP-hard to compute the flip distance, even though the problem has the happy edge property? 
We give an affirmative answer to this question in this work: 
The happy edge property holds for compatible flips on non-crossing spanning trees on convex point sets~\cite{aichholzer2025constrainedflipsplanespanning}, but the flip distance is \NP-hard to compute as shown in \cref{{thm:hardness_restricted_flips}}. 

%%%%%%%%%%%%%%%%%%%%%
%%                 %%
%%  FUNDAMENTALs   %%
%%                 %%
%%%%%%%%%%%%%%%%%%%%%
\section{Fundamental concepts: Edge pairs, conflict graphs, and blowups}
\label{sec:concepts}
			
Our results build on insights from Bjerkevik, Kleist, Ueckerdt, and Vogtenhuber~\cite{bjerkevik2024flippingnoncrossingspanningtrees}.
In the following, we introduce some of their concepts. 

%%%%%%%%%%%%%%%%%%%%%%%%%%%
%% LINEAR REPRESENTATION %%
%%%%%%%%%%%%%%%%%%%%%%%%%%%
\subparagraph*{Linear Representation.}
A convex point set is usually visualized by equally spaced points on a circle. 
A \emph{linear representation} of a non-crossing tree $T$ on a convex point set is constructed by cutting open the circle and unfolding the points into a horizontal line segment, called the spine.
Edges in $T$ can be illustrated nicely by semi-circles connecting their end points above the spine.
Given two trees (for example the initial tree and the target tree of a flip sequence) in the same linear representation, it is convenient to represent the edges of one tree above the spine and the edges of the other tree below the spine.
For an example, see \cref{fig:cut-open}.
If an edge between two consecutive vertices on the spine is contained in both trees we will in later parts of the paper depict this by drawing it as a straight line between the two vertices instead of two semicircles.

\begin{figure}[ht]
    \centering
    \begin{subfigure}{.3\textwidth}
    	\centering
    	\includegraphics[page=1]{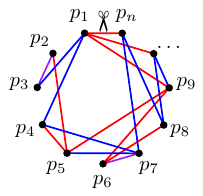}
    	\subcaption{}
    	\label{fig:cut-openA}
    \end{subfigure}\hfil
    \begin{subfigure}{.4\textwidth}
    	\centering
    	\includegraphics[page=2]{figures/Intro-2}
    	\subcaption{}
    	\label{fig:cut-openB}
    \end{subfigure}
    \caption{
        Figure reproduced from \cite{bjerkevik2024flippingnoncrossingspanningtrees}.
        \subref{fig:cut-openA} Two non-crossing trees $T,T'$ on a circularly labeled point set in convex position and \subref{fig:cut-openB} its linear representation with $T$ above and $T'$ below the spine.
    }
    \label{fig:cut-open}
\end{figure}

By labeling the vertices along the linear representation by $p_1,...,p_n$, we get a natural notion of the \emph{length} of an edge. 
Concretely, the length of the edge $p_ip_j$ is defined as $\lvert i-j\rvert$. 
We say a point $p_k$ is \emph{covered} by an edge $p_ip_j$ with $i<j$ if $i\leq k\leq j$. 
An edge \emph{covers} another edge if it covers both of its endpoints.

%%%%%%%%%%%%%%%%%%%
%% GAPS, PAIRING %%
%%%%%%%%%%%%%%%%%%%
\subparagraph*{Gaps and Edge Pairing.}
A \emph{gap} in the convex hull is the open line segment between two consecutive points $p_i$ and $p_{i+1}$ on the spine.
For every tree $T$, there is a natural \emph{gap-edge bijection} between gaps in the linear representation and edges of $T$:
Each gap is assigned to the shortest edge that covers the gap.
This assignment is indeed a bijection by Lemma 3.1 in~\cite{bjerkevik2024flippingnoncrossingspanningtrees}.
For an example, consider \cref{fig:IntroPairsA}.
Based on the assigned gap, edges of $T$ are partitioned into three groups. 
In particular, an edge is \emph{short} if it shares both vertices with its gap, \emph{near} if it shares one vertex with its gap, and \emph{long} if shares no  vertex with its gap.

\begin{figure}[ht]
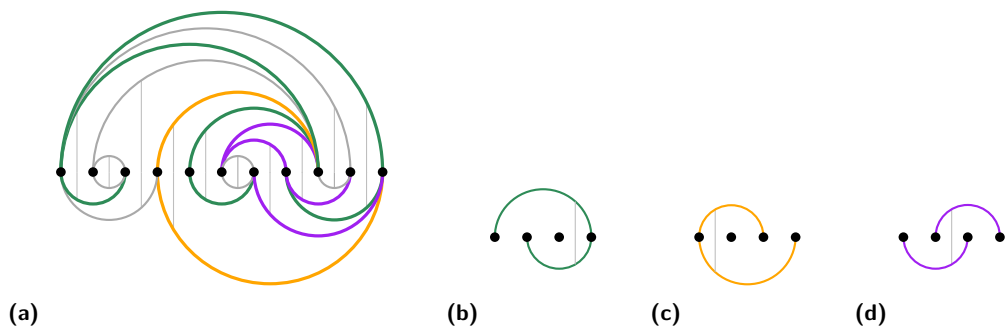

    \centering
    \begin{subfigure}
    {.4\textwidth}
     \centering
     \includegraphics[page=3]{figures/Intro-2}
    	\subcaption{}
    	\label{fig:IntroPairsA}
    \end{subfigure}\hfil
    \begin{subfigure}{.18\textwidth}
    	 \centering
    \includegraphics[page=6]{figures/Intro-2}
    	\subcaption{}
    	\label{fig:IntroPairsB}
    \end{subfigure}\hfil
    \begin{subfigure}{.18\textwidth}
    	 \centering
    \includegraphics[page=7]{figures/Intro-2}
    	\subcaption{}
    	\label{fig:IntroPairsC}
    \end{subfigure}\hfil
    \begin{subfigure}{.18\textwidth}
    	 \centering
    \includegraphics[page=8]{figures/Intro-2}
    	\subcaption{}
    	\label{fig:IntroPairsD}
    \end{subfigure}\hfil
    \caption{
        \subref{fig:IntroPairsA} Illustration of the gap-edge and edge-edge bijection.
        %Near-near pairs are colored.
        Examples of an above \subref{fig:IntroPairsB}, a below \subref{fig:IntroPairsC}, and a crossing pair \subref{fig:IntroPairsD}.
        Edges belonging to near-near pairs in $A$, $B$, and $C$ are colored green, orange, and purple, respectively (also in several other figures).
    }
    \label{fig:IntroPairs}
\end{figure}

For two trees $T$ and $T'$, we obtain an edge-edge bijection by pairing two edges that are assigned to the same gap of the linear representation.
The authors of~\cite{bjerkevik2024flippingnoncrossingspanningtrees} explain how to handle nine types of edge pairs.
For our results it is sufficient to concentrate on near-near pairs where the two edges are distinct.
We call a gap $g_i$ with edges $e_i$ from $T$ and $e'_i$ from $T'$
\begin{description}
    \item[above]  if $e_i$ and $e'_i$ are adjacent and $e_i$ is longer than $e'_i$, see \cref{fig:IntroPairsB}.
    \item[below] if $e_i$ and $e'_i$ are adjacent and $e_i$ is shorter than $e'_i$, see \cref{fig:IntroPairsC}.
    \item[crossing] if $e_i$ and $e'_i$ are not adjacent, equivalently, the two edges cross if drawn in the same convex point set, see \cref{fig:IntroPairsD}.
\end{description}
The set of all above, below, and crossing gaps is denoted by $A$, $B$, and $C$, respectively.

%%%%%%%%%%%%%%%%%%%%%
%% CONFLICT GRAPHS %%
%%%%%%%%%%%%%%%%%%%%%
\subparagraph*{Conflict Graphs.}
\emph{Conflict graphs} were introduced as a tool to find the largest set of near-near pairs where the initial edge can be flipped directly to the target edge.
The conflict graph $H$ has a vertex for each (gap associated to a) near-near pair and a directed edge from a pair $(e_i,e'_i)$ to $(e_j,e'_j)$ if the edge $e_i$ needs to be removed before the flip that adds $e_j'$ and removes $e_j$ can be performed.
We denote the set of vertices in $H$ by $V(H)$.
There exist three types of conflicts. Specifically, there is a directed edge $\overrightarrow{g_ig_j}$ in $H$, from $g_i$ to $g_j$ if
\begin{description}
    \item[type 1] $e_i$ crosses $e'_j$, see \cref{fig:conflict-edgeA}.
    \item[type 2] $e'_j$ covers $e_i$ and $e_i$ covers $g_j$, see \cref{fig:conflict-edgeB}.
    \item[type 3] $e_i$ covers $e'_j$ and $e'_j$ covers $g_i$, see \cref{fig:conflict-edgeC}.
\end{description}

\begin{figure}[ht]
    \centering
    \begin{subfigure}{.24\textwidth}
    	\centering
    	\includegraphics[page=18]{figures/Intro}
        \subcaption{}
    	\label{fig:conflict-edgeA}
    \end{subfigure}\hfil
    \begin{subfigure}{.24\textwidth}
    	\centering
    	\includegraphics[page=19]{figures/Intro}
        \subcaption{}
    	\label{fig:conflict-edgeB}
    \end{subfigure}\hfil
    \begin{subfigure}{.24\textwidth}
    	\centering
    	\includegraphics[page=20]{figures/Intro}
        \subcaption{}
    	\label{fig:conflict-edgeC}
    \end{subfigure}
    \caption{
        Figure reproduced from \cite{bjerkevik2024flippingnoncrossingspanningtrees}.
        Examples of conflicts: \subref{fig:conflict-edgeA} type~1, \subref{fig:conflict-edgeB} type~2,  \subref{fig:conflict-edgeC} type~3.
    } 
    \label{fig:conflict-edge}
\end{figure}

We slightly abuse terminology and view the vertices of $H$ interchangeably as gaps and as near-near pairs.
Let $\las(H)$ denote the largest number of vertices in $H$ which induce an acyclic subgraph.
The authors of~\cite{bjerkevik2024flippingnoncrossingspanningtrees} derive the following relations between acyclic sets in the conflict graph and flip distances.
        
\begin{theorem}[{\!\cite[Theorem 2.1]{bjerkevik2024flippingnoncrossingspanningtrees}}]
    \label{thm:bigtheorem}
    Let $T$, $T'$ be two non-crossing trees on linearly ordered points $p_1,\ldots,p_n$ with corresponding conflict graph $H=H(T,T')$.
    \begin{enumerate}[(i)]
        \item If $V(H)$ is non-empty, then $\dist(T,T')\leq \max\{\frac{3}{2}, 2-\frac{\las(H)}{\lvert V(H) \rvert}\}(n-1)\}$. If $V(H)$ is empty, then $\dist(T,T')\leq \frac{3}{2}(n-1)$.\label{item:big-UB}
        \item If $V(H)$ is non-empty, then there is a constant $c$ depending only on $T$ and $T'$ such that for all $\bar{n} \geq 1$, we have $\diam(\mathcal{F}_{\bar{n}}) \geq \left(2-\frac{\las(H)}{\lvert V(H) \rvert}\right)\bar{n}-c$.\label{item:big-LB}
    \end{enumerate}
\end{theorem}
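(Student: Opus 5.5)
Since this result is quoted from \cite{bjerkevik2024flippingnoncrossingspanningtrees}, I only sketch how I would establish both parts. The plan is to reduce everything to counting "direct" flips, meaning flips that remove an edge of $T\setminus T'$ and insert an edge of $T'\setminus T$ in one step. Every edge of $T\setminus T'$ must be removed and every edge of $T'\setminus T$ inserted. Hence a flip sequence has length at least $2|T\setminus T'|$ minus the number of direct flips. Conversely, an edge that is not flipped directly can be handled with two flips: remove it towards an intermediate structure, and later insert its target.

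For \eqref{item:big-UB}, I would fix a maximum acyclic set $S\subseteq V(H)$ and take a topological order of $H[S]$. The first claim to establish is that flipping $e_i\to e'_i$ for $g_i\in S$ in this order is always legal. By the definition of the three conflict types, the only obstructions to a direct flip $e_i\to e'_i$ are
\begin{itemize}
\item an edge $e_j$ still present that crosses $e'_i$ (type~1), or
\item a cycle closed through the gap structure (types~2 and~3).
\end{itemize}
Both disappear once all predecessors in $H[S]$ have been flipped.

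The remaining pairs of $V(H)$ are handled with two flips each, via an intermediate edge chosen as a boundary (spine-adjacent) edge, in the spirit of Avis and Fukuda. The gaps outside $V(H)$ (non-near-near pairs or identical edges) are handled by the case analysis of the nine pair types. That analysis must give an amortized cost of at most $\frac32$ per gap, for instance by flipping half of these pairs directly in a suitable order. Summing gives at most $\frac32$ per non-$H$ gap and at most $(2|V(H)|-|S|)/|V(H)|$ per $H$ gap on average, which yields the bound $\max\{\frac32,2-\frac{\las(H)}{|V(H)|}\}(n-1)$.

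For \eqref{item:big-LB}, I would build a blowup: glue many copies of the pair $(T,T')$ along the spine, nesting or concatenating them so that the copies interact through the conflicts. The aim is that the conflict graph $\bar H$ of the large pair $(\bar T,\bar T')$ satisfies $\las(\bar H)/|V(\bar H)|\le \las(H)/|V(H)|+o(1)$. At the same time, almost all of the $\bar n$ gaps should be near-near pairs. For a lower-order number of extra points, padding with shared edges costs only the constant $c$. Then I would show that in any flip sequence the set of pairs $(e_i,e'_i)$ that are resolved by a direct flip $e_i\to e'_i$ is acyclic in $\bar H$. This holds because each conflict edge $\overrightarrow{g_ig_j}$ forces $e_i$ to be gone before $e'_j$ can appear. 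Combined with the counting identity above, this gives $\dist\ge 2|V(\bar H)|-\las(\bar H)-O(1)$.

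The main obstacle is the lower bound's handling of direct flips that are not aligned with the pairing. A flip can remove $e_i$ and insert $e'_j$ with $j\neq i$, and such flips also save a step. My first attempt would be a charging argument that re-pairs edges along the sequence. I would then show that the re-pairing still induces an acyclic set in the conflict graph, or that the blowup makes such misaligned flips affect only $o(\bar n)$ pairs. For the latter, I would exploit the fact that in the linear representation a single flip can only exchange edges whose gaps lie in a controlled region.
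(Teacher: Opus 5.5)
\textbf{Part \eqref{item:big-LB} is not established.} This is the main gap.

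Gluing copies of $(T,T')$ is not the right construction. Placed side by side, the copies keep the proportion of near-near gaps at $|V(H)|/(n-1)$. So even a perfect bound in terms of $\bar H$ would not give the coefficient $2-\las(H)/|V(H)|$ in front of $\bar n$. More importantly, no variant of gluing gives a reason why $2|V(\bar H)|-\las(\bar H)$ should bound the distance from below. That quantity only measures the cost of one particular strategy, and copies of the original pair are as exposed to cleverer flip sequences as the original.

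Your claim that directly resolved pairs form an acyclic set also fails. The justification works only for type-1 conflicts, which are crossings and hence absolute. Conflicts of types 2 and 3 are connectivity obstructions relative to the configuration of the upper-bound sequence, where all other edges sit on their gaps. In an arbitrary flip sequence, $e'_j$ can be inserted while $e_i$ is still present, by removing some other edge. The locality you hope for at the end does not hold either: a flip may remove any edge and add any edge that reconnects the two components. Together with the misaligned direct flips you leave open, the core of \eqref{item:big-LB} is missing.

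The cited proof instead uses the $\beta$-blowup described in \cref{sec:concepts}. It inserts $\beta$ new points into every near-near gap and attaches each as a leaf to the far endpoint of $e$ in $\beta\cdot T$ and of $e'$ in $\beta\cdot T'$. Near-near pairs then form all but an $O(1/\beta)$ fraction of the gaps, each vertex of $H$ becomes a block of $\beta+1$ parallel pairs, and the leaf structure restricts which flips can save a step. What must be proved is the robust bound of \cref{lem:lower}, $\dist(\beta\cdot T,\beta\cdot T')\ge(\beta-2n)(2|V(H)|-\las(H))$, where the loss of $2n$ absorbs the exceptional flips. Choosing $\beta\approx\bar n/|V(H)|$ and padding then gives \eqref{item:big-LB}.

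\textbf{Part \eqref{item:big-UB}: right skeleton, unjustified $\frac32$.} Your skeleton matches the remark after \cref{thm:condupper}:
\begin{enumerate}
\item first move every edge outside the acyclic set $S$ to its gap;
\item then flip $S$ directly in topological order;
\item finally flip from the gaps to the targets.
\end{enumerate}
You should fix this phase order explicitly, since the three conflict types describe the only obstructions only in that configuration.

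The bound of $\frac32$ for the gaps outside $V(H)$ is not justified, and your suggested mechanism of flipping half of those pairs directly does not work in general; long-long pairs, for example, cannot be flipped directly. The missing fact is a counting one. In this sequence a non-identical pair $(e,e')$ costs $[e\text{ not short}]+[e'\text{ not short}]$. Under the gap-edge bijection, the covering relation forms a forest whose leaves are exactly the short edges and in which every long edge has at least two children. Hence each tree has fewer long than short edges. The pairs of cost $2$ outside $V(H)$ each contain a long edge, so they number fewer than $s+s'$, the short edges of both trees. Every short edge lies in a pair of cost at most $1$, and short-short pairs cost $0$. A short calculation then gives an average cost of at most $\frac32$.
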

	
Moreover, we will use the following lemma on the sets of above, below, and crossing gaps.

\begin{lemma}[{\!\cite[Lemma~3.2]{bjerkevik2024flippingnoncrossingspanningtrees}}]
    \label{ac:ABC_SODA}
    Each of $A$, $B$, and $C$ is an acyclic set of $H$.
\end{lemma}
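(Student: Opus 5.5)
The plan is to exhibit, for each of $A$, $B$, $C$, a strictly monotone potential along the arcs of $H$ restricted to that set. A strictly increasing quantity along every arc rules out directed cycles immediately. Throughout I use two elementary facts. First, $e_i$ is the shortest edge of $T$ covering $g_i$, and likewise $e'_i$ is the shortest edge of $T'$ covering $g_i$. Second, two edges of the same non-crossing tree are, as intervals on the spine, either nested or interior-disjoint.

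For $A$ I take the potential $|e_i|$, the length of the $T$-edge. Let $g_i,g_j\in A$ with an arc $\overrightarrow{g_ig_j}$. Since $e_j,e'_j$ share an endpoint, both cover $g_j$, and $e_j$ is longer, $e_j$ covers $e'_j$. I then check the three conflict types.
\begin{itemize}
\item Type~2: $e'_j$ covers $e_i$, so $|e_i|\le|e'_j|<|e_j|$.
\item Type~3: $e'_j$ covers $g_i$, hence so does $e_j$. By minimality of $e_i$ we get $|e_i|\le|e_j|$, and equality would force $e_i=e_j$, contradicting the bijection.
\item Type~1: $e_i$ crosses $e'_j$, so $e_i$ has an endpoint strictly inside $e'_j\subseteq e_j$. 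Thus $e_i,e_j$ are not interior-disjoint, so they are nested. If $e_i$ covered $e_j$, it would cover $e'_j$, contradicting the crossing. Hence $e_j$ strictly covers $e_i$.
\end{itemize}
In all cases $|e_i|<|e_j|$, so $A$ is acyclic. For $B$ I would run the mirror argument with the roles of $T$ and $T'$ exchanged; now $e'_i$ covers $e_i$. I expect the potential to be $|e'_j|<|e'_i|$ along arcs, i.e.\ the $T'$-length strictly decreases. The type~2 and type~3 cases swap roles, since the definitions of these types are symmetric under exchanging $T\leftrightarrow T'$ and reversing arcs. The type~1 case is symmetric as well.

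For $C$ the two edges of a pair cross and each is near to $g_i$. Hence they leave $g_i$ from opposite endpoints in opposite directions, and the union of their intervals is an interval $I_i\supsetneq g_i$. The natural first attempt is again to combine lengths: show that along an arc $\overrightarrow{g_ig_j}$ one has $e_i\subsetneq e_j$ or $e'_j\subsetneq e'_i$. The intended potential is then the lexicographic or summed quantity $|e_i|-|e'_i|$, or alternatively $I_i\subsetneq I_j$. I would verify each conflict type using the nesting fact for $T$-edges and for $T'$-edges separately.

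This case $C$ is the main obstacle, because the crossing configuration allows $e_i$ and $e'_j$ to lie on different sides of the spine relations. I would first try to prove that for crossing gaps the relation induced by $H$ refines the nesting order of the intervals $I_i$, and fall back on a case split by left/right orientation of the pairs if that fails.
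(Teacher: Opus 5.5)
\paragraph{Case $C$ is not proved.} Your arguments for $A$ and $B$ are correct. Along every arc into an above gap the length of the $T$-edge strictly increases. $B$ then follows because exchanging $T$ and $T'$ swaps above and below gaps and reverses every arc.

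The gap is case $C$: you leave it open, and none of the potentials you suggest works. Take $n=7$ with
\[
T=\{p_1p_2,p_2p_7,p_3p_4,p_4p_7,p_5p_6,p_6p_7\},\qquad T'=\{p_1p_2,p_1p_3,p_1p_5,p_3p_4,p_5p_6,p_6p_7\}.
\]
The gap $g$ between $p_2,p_3$ is crossing with pair $(p_2p_7,p_1p_3)$. The gap $h$ between $p_4,p_5$ is crossing with pair $(p_4p_7,p_1p_5)$. Since $p_2p_7$ crosses $p_1p_5$, $\overrightarrow{gh}$ is a conflict. In this example:
\begin{itemize}
\item $e_h\subsetneq e_g$ and $e'_g\subsetneq e'_h$, which is exactly the opposite of your dichotomy;
\item $I_g=I_h=[p_1,p_7]$, and $|e|+|e'|=7$ for both gaps;
\item $|e|-|e'|$ drops from $3$ to $-1$.
\end{itemize}
Conversely, take $T=\{p_1p_2,p_2p_3,p_3p_4,p_1p_5,p_5p_6,p_6p_8,p_7p_8\}$ and $T'=\{p_1p_2,p_2p_3,p_3p_4,p_4p_7,p_5p_6,p_5p_7,p_7p_8\}$. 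The conflict from the gap between $p_6,p_7$ (pair $(p_6p_8,p_5p_7)$) to the gap between $p_4,p_5$ (pair $(p_1p_5,p_4p_7)$) raises $|e|-|e'|$ from $0$ to $1$.

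Reversing the dichotomy does not help either. With $T=\{p_1p_2,p_2p_4,p_3p_4,p_4p_6,p_5p_6\}$ and $T'=\{p_1p_2,p_1p_3,p_3p_4,p_3p_5,p_5p_6\}$, there is a conflict between two crossing gaps whose $T$-edges are disjoint and whose $T'$-edges are disjoint. A left/right case split does not rescue a length or nesting potential, because conflicts inside $C$ can move the gap in either direction and can grow or shrink either edge.

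\paragraph{What is missing.} Two ideas are needed.

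First, inside $C$ only type~1 conflicts occur.
\begin{itemize}
\item A type~2 conflict $\overrightarrow{g_ig_j}$ gives $e_j\subseteq e_i\subseteq e'_j$, by minimality of $e_j$ since $e_i$ covers $g_j$. This contradicts that $e_j$ and $e'_j$ cross.
\item A type~3 conflict gives $e'_i\subseteq e'_j\subseteq e_i$, by minimality of $e'_i$. This contradicts that $e_i$ and $e'_i$ cross.
\end{itemize}

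Second, the type~1 case needs a non-metric argument. Here is one that works.
\begin{itemize}
\item For $g\in C$, pick $q_g$ in the interior of $g$. Let $\sigma_g$ be the curve formed by the semicircle below the spine over the part of $e'_g$ between its far endpoint and $q_g$, together with the semicircle above the spine over the part of $e_g$ between $q_g$ and its far endpoint. Since $e_g$ and $e'_g$ leave $g$ in opposite directions, $\sigma_g$ is $x$-monotone.
\item Shrinking $e_g$ at its gap end keeps the $T$-edges laminar: any other $T$-edge meeting the interior of $g$ covers $g$ and hence contains $e_g$. The same holds for $T'$. So distinct curves meet only in common endpoints.
\item If $e_i$ crosses $e'_j$, there is an $x$ with $\sigma_{g_i}(x)>0>\sigma_{g_j}(x)$. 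Hence $\sigma_{g_i}$ lies above $\sigma_{g_j}$ on their whole common $x$-range.
\item The relation ``lies above'' is acyclic on pairwise disjoint $x$-monotone curves. In a shortest cycle, take the curve whose right endpoint is leftmost. Its predecessor and successor both extend at least that far right, so all three are defined just left of that endpoint. There the predecessor lies above the successor, giving a shorter cycle, or a contradiction if the cycle has length~$2$.
\end{itemize}
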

    
%%%%%%%%%%%%%%%%%%%%
%%    BLOWUPS    %%
%%%%%%%%%%%%%%%%%%%%
\subparagraph*{Blowups.}
By default, a flip sequence based on the described edge-edge bijection is not necessarily a shortest one and can in fact be very far from optimal.
The authors of~\cite{bjerkevik2024flippingnoncrossingspanningtrees} overcome this by introducing \emph{blowups} for near-near pairs. 
For trees $T$ and $T'$, the $\beta$-blowup is a pair of trees $\beta\cdot T$ and $\beta\cdot T'$ obtained by the following construction. 
For every gap $g$ that corresponds to a near-near pair $(e,e')$, insert a set $V(e)=V(e')$ of $\beta$ additional points in the gap~$g$.
In $\beta\cdot T$ add an edge from each $v\in V(e)$ to the endpoint of $e$ that is not adjacent to $g$. 
Proceed similarly in $\beta\cdot T'$ for $e'$.
Part \eqref{item:big-LB} of \cref{thm:bigtheorem} is obtained by considering blowups.
    
\begin{lemma}[{\!\cite[Lemma 23, arXiv-Version]{bjerkevik2024flippingnoncrossingspanningtrees}}]
    \label{lem:lower}
    The flip distance from $\beta\cdot T$ to $\beta\cdot T'$ is at least~$(\beta-2n)(2\lvert V(H)\rvert -  \las(H) )$, where $H = H(T,T')$.
\end{lemma}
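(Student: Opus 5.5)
The plan is a counting argument with a charging scheme. Fix a near-near gap $g$ with pair $(e,e')$. Write $e=ab$ and $e'=ab'$ in the adjacent case, where $a$ is the endpoint of $e$ not adjacent to $g$ (in the crossing case the endpoints are distinct, but the roles are the same). Each inserted point $v\in V(e)$ carries the edge $va$ in $\beta\cdot T$ and the edge $vb'$ in $\beta\cdot T'$. These two edges differ, since $e\neq e'$ are paired with the same gap and so have different far endpoints. Hence every flip sequence from $\beta\cdot T$ to $\beta\cdot T'$ must remove all $\beta|V(H)|$ \emph{blown initial edges} and add all $\beta|V(H)|$ \emph{blown target edges}.

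Every flip removes exactly one edge and adds exactly one edge. Call a flip \emph{efficient} if it removes a blown initial edge and adds a blown target edge. The length of any sequence is then at least $2\beta|V(H)|-(\#\text{efficient flips})$. It therefore suffices to show
\[ \#\text{efficient flips} \le \beta\,\las(H)+2n\bigl(2|V(H)|-\las(H)\bigr), \]
since then the length is at least $2\beta|V(H)|-\beta\las(H)-2n(2|V(H)|-\las(H))=(\beta-2n)(2|V(H)|-\las(H))$.

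For the charging, attribute each efficient flip to the gap $g$ of the blown edge it \emph{adds}. Call $g$ \emph{heavy} if more than $2n$ efficient flips are attributed to it. Each non-heavy gap is charged at most $2n$, and each heavy gap at most $\beta$. So the bound follows once I show that the heavy gaps form an acyclic set in $H$: their number is then at most $\las(H)$, and all other gaps contribute at most $2n$ each.

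The heart is the following claim. Suppose $\overrightarrow{g_ig_j}$ is an edge of $H$ and $g_j$ is heavy. Then all but $O(n)$ of the points of $V(g_i)$ have their blown initial edge removed before "most" of the efficient flips at $g_j$. Conversely, if $g_i$ is heavy, then most of its efficient flips happen before most of $V(g_i)$ loses its initial edge.

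I would prove this case by case on the three conflict types, viewing the bundle of blown edges at $V(g_i)$ as $\beta$ parallel copies of $e_i$ and those at $V(g_j)$ as copies of $e'_j$:
\begin{itemize}
\item For type~1, a copy of $e_i$ crosses a copy of $e'_j$.
\item For types~2 and~3, covering together with planarity blocks the addition. As long as many points of $V(g_i)$ still hang on the far endpoint of $e_i$, a blown target edge at $g_j$ added in an efficient flip would create a crossing or a cycle with all but a few of them.
\end{itemize}
The "few exceptions" are bounded by roughly $n$, because the points of a bundle lie consecutively in one gap and any single edge of the current tree can separate them from the rest in only one place. This is where the slack $2n$ comes from.

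With the claim, I define for each heavy gap $g$ a time $t_g$, namely the median of its efficient flips. The claim makes $t_{g_i}<t_{g_j}$ along every edge of $H$ between heavy gaps, so a directed cycle among heavy gaps would give $t_g<t_g$. The main obstacle is exactly this claim. The difficulty is that points of $V(g)$ may acquire other edges in between (so they are no longer leaves), and an efficient flip may remove the edge of one point while adding the edge of a different point. I would handle this by tracking only the points of a bundle that are still leaves attached to their original far endpoint, and bounding the others by $n$ using the non-crossing tree structure.
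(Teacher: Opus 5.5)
Your reduction to counting is sound. Every blown initial edge must be removed and every blown target edge added, so the length is at least $2\beta|V(H)|$ minus the number of flips that do both. The arithmetic from ``heavy gaps form an acyclic set'' to the lemma is also correct, provided you count only the \emph{first} removal of each initial edge and the \emph{last} addition of each target edge; otherwise a gap can be charged more than $\beta$ times.

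The geometric input is actually cleaner than you expect. After the blowup, for a conflict $\overrightarrow{g_ig_j}$ of any type, every initial edge at $V(g_i)$ properly crosses every target edge at $V(g_j)$. For type~2, say $e_i=cq$ with $c<q$; then $g_j$ lies at $q$, $g_i$ lies at $c$, and $e'_j=b_jq$ with $b_j\le c$, so $b_j\le c<v<w<q$ for all $v\in V(g_i)$, $w\in V(g_j)$. Type~3 is symmetric, and type~1 crossings survive because blown points stay inside their gaps. Hence after any flip that adds a target edge at $g_j$, no initial edge at $g_i$ is present. There are no $O(n)$ exceptions. Moreover, for efficient flips that remove and add at the \emph{same} inserted point, the gaps having at least one such flip already form an acyclic set, with no threshold at all. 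So the slack $2n$ does not come from where you locate it.

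The genuine gap is the case you flag as the main obstacle and leave open. Consider an efficient flip that removes the initial edge of a point $x$ and adds the target edge of a different point $w$, possibly in a different bundle; this forces $x$ to be a non-leaf at that moment. Nothing in your argument prevents such a flip, attributed to the gap $g$ of $w$, from happening after all initial edges at $g$ are gone. So heaviness of $g$ carries no information about when $g$'s own initial edges disappear.

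The median argument needs exactly that the efficient flips at $g_i$ occur no later than the removals of initial edges at $g_i$. Your ``conversely'' claim does not say this; for same-point flips it is even the wrong comparison, since those flips coincide with removals. For cross-point flips nothing supports it.

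Your proposed patch is also false as stated. The non-crossing structure does not bound by $n$ the number of bundle points that are not leaves on their original far endpoint. For instance, flipping $v_{2i}a$ to the hull edge $v_{2i}v_{2i-1}$ for every $i$ is always legal, and produces roughly $\beta$ such points at once.

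What is missing is a real argument about cross-bundle efficient flips: either that only $O(n)$ of them can be charged to each bundle, or that they are paid for by additional flips, namely those that create and later delete the auxiliary edges at $x$ that they rely on. This is the substance of the lemma, which the present paper imports from the earlier work without proof. Without it, your bound on the number of efficient flips is not established.
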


%%%%%%%%%%%%%%%%%%%%%
%%                 %%
%% NP-COMPLETENESS %%
%%                 %%
%%%%%%%%%%%%%%%%%%%%%
\section{NP-Completeness results}
\label{sec:np-completeness}
In this section, we show that the decision problem of whether there exists a flip sequence of length at most $\delta$ between two given non-crossing spanning trees $T$ and $T'$ is \NP-complete, even for point sets in convex position.
At the end of this section, we prove that computing the flip distance remains \NP-complete when restricting the flip operation to compatible flips or rotations, respectively.

\hardness*

Our proof of \cref{thm:hardness} goes via an intermediate problem.
Specifically, we show that it is \NP-complete to decide whether a directed graph $H$ has an acyclic subset of size at least~$k$, even if $H$ is the conflict graph of two non-crossing trees with a given linear representation.

\begin{restatable}{proposition}{lashardness}\label{prop:las-hardness}
    Given two non-crossing trees $T,T'$ on a convex point set, with a linear representation and corresponding non-empty conflict graph $H = H(T,T')$, as well as an integer $k$,
    it is \NP-complete to decide whether $\las(H) \geq k$.
\end{restatable}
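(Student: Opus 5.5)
Membership in \NP\ is immediate: a vertex subset $S\subseteq V(H)$ of size at least $k$ is a certificate, and checking that $H[S]$ is acyclic takes polynomial time (topological sort). Computing $H$ from $T$, $T'$ and the linear representation is also polynomial, since the gap-edge bijection and the three conflict types can be evaluated pairwise. So all the work lies in hardness.

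For hardness I will reduce from \textsc{Directed Feedback Vertex Set}. Equivalently, I reduce from the problem of deciding whether a directed graph has an acyclic vertex subset of size at least $k$. Since this is already hard on restricted digraph classes, I will first choose a convenient source class and only then build the trees. A natural choice is Vertex Cover on, say, cubic or planar graphs. From a graph $G$ with $m$ edges, form the digraph $D$ that replaces each undirected edge $uv$ by a directed 2-cycle. Then acyclic sets of $D$ are exactly independent sets of $G$, and the only cycles are these 2-cycles. This suggests the target: encode each vertex of $G$ by a near-near pair (a gadget of constantly many pairs) and each edge of $G$ by a mutual conflict. Every other conflict the construction creates must either lie inside gadgets in a controlled way or never close a cycle. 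By \cref{ac:ABC_SODA}, conflicts inside $A$, $B$, and $C$ alone are acyclic, so I will exploit this structure.

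\textbf{Construction.} Place the vertex gadgets as near-near pairs in disjoint intervals along the spine. Realize a mutual conflict between two distant gadgets by exploiting type~2 and type~3 conflicts: a long-ish $T'$-edge $e'_j$ covering $e_i$ with $e_i$ covering $g_j$ gives $g_i\to g_j$, and symmetric nesting gives the reverse arc. Since covering is an interval relation, I expect to need a nested or interleaved arrangement, for instance mixing above and below pairs so that every 2-cycle goes between an above-type and a below-type vertex. That suggests reducing from a bipartite-structured problem. Vertex cover on bipartite graphs is polynomial, so instead I will use an intermediate gadget: each vertex of $G$ becomes a path or chain of pairs alternating $A$/$B$, with long cycles forced to pass through it. A concrete option is to reduce from Independent Set in circle graphs or in graphs with an interval-like representation where it remains hard; alternatively, I can use long directed cycles (subdivided 2-cycles) to carry "edge" constraints across the spine. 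The pieces must be glued into single trees, e.g.\ by connecting the gadgets with edges shared by $T$ and $T'$ or with far pairs that do not form near-near pairs. Shared or non-near-near edges contribute no vertices to $H$ and therefore no conflicts.

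\textbf{Correctness.} I will show that $\las(H)=\alpha'+c$, where $c$ is a fixed per-gadget contribution and $\alpha'$ is the independence number of the source graph. This has two directions. From an independent set, take everything except one vertex per forbidden configuration and exhibit a topological order. Conversely, show that any acyclic set can be normalized to one of the canonical form without decreasing its size.

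\textbf{Main obstacle.} The hardest step will be the geometric realization: controlling all type~1–3 conflicts produced by long crossing-edge gadgets so that no unintended cycles arise, while still producing the intended cycles between arbitrary gadget pairs in a one-dimensional interval layout. I will first try to make all pairs above or below and to route every edge interaction through crossing pairs. This confines unwanted conflicts to within-class arcs, which are acyclic by \cref{ac:ABC_SODA}.
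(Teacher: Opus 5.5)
\textbf{Verdict: there is a genuine gap.} Your \NP-membership argument is fine. For hardness, though, you give a plan rather than a reduction. No source problem is fixed, no pair of trees is specified, and the step you call the ``main obstacle'' is exactly the content of the proof.

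Some of your candidate routes cannot work as stated:
\begin{itemize}
\item Maximum independent set in circle graphs is solvable in polynomial time, as are interval-type classes, so it is not a valid source problem.
\item Your closing suggestion is self-contradictory: you propose to make all pairs above or below, yet route interactions through crossing pairs. With only above and below gaps, every double conflict joins $A$ to $B$. That is the bipartite situation you yourself flagged as easy.
\end{itemize}

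More fundamentally, two issues are left open, and your normalization claim $\las(H)=\alpha'+c$ cannot be proved without resolving both.
\begin{itemize}
\item[(i)] The linear representation has non-crossing $T$ above and non-crossing $T'$ below the spine. It only supports conflict patterns laid out along a line, with essentially planar interactions on each side. You give no way to realize the double-conflict graph of an arbitrary graph, even a planar cubic one.
\item[(ii)] Any realization produces many one-directional conflicts besides the intended ones. You need a structural reason why these never close a cycle. \cref{ac:ABC_SODA} only controls arcs inside one class, whereas the dangerous arcs are the mixed ones.
\end{itemize}

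The paper's proof supplies exactly these ingredients.
\begin{itemize}
\item It reduces from \textsc{Planar V-Cycle Max-2SAT} with an aligned drawing. Variable gadgets sit consecutively on the spine and clause gadgets lie above or below it, matching the two halfplanes of the linear representation.
\item Each variable becomes a path of double conflicts whose gap types cycle $B,C,A,B,C,A,\dots$. So the two maximum independent sets $t(x)$ and $f(x)$ each contain gaps of all three types.
\item Each clause becomes two gaps in double conflict with each other. Each is also in double conflict with a literal-dependent gap of the corresponding variable path, so a clause gap can be added only when its literal is true.
\item The key lemma: every remaining mixed conflict either leaves an above gap or enters a below gap, except for a few explicitly identified exceptional ones. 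Hence any set avoiding double and exceptional conflicts is acyclic, via the concatenated topological order $A$, then $C$, then $B$.
\item Finally, $m$- and $2m$-blowups on the variable-gadget gaps force any acyclic set of size at least $k$ to take exactly $t(x)$ or $f(x)$ in every variable gadget. This gives the equivalence with satisfying at least $t$ clauses.
\end{itemize}
The missing ideas in your proposal are this $A$--$C$--$B$ ordering argument and a source problem whose drawing fits the spine.
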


Our proof of \cref{prop:las-hardness} is a reduction from a variant of the \NP-complete problem \textsc{Max-2SAT}, which is defined as follows.
We are given an integer $t$ and a 2\emph{\textsc{SAT}-formula}~$\phi$ on a set $X$ of variables and a set $\mathcal{C}$ of clauses.
That is, every clause $C \in \mathcal{C}$ consists of a conjunction of two literals, each being a negated or non-negated variable.
\textsc{Max-2SAT} now asks whether there exists a truth assignment of the variables in $X$ such that at least $t$ of the clauses in~$\mathcal{C}$ are satisfied.
Every instance of \textsc{Max-2SAT} yields a bipartite variable-clause \emph{incidence graph}~$G_\phi$ with vertex set $V(G_\phi)=X \cup \mathcal{C}$ and edge set $E(G_\phi)$ containing an edge between a variable $x \in X$ and a a clause $C \in \mathcal{C}$ if and only if $x$ appears (negated or non-negated) in $C$.
We use the NP-hard variant \textsc{Planar V-Cycle Max-2SAT}~\cite[Theorem~7]{buchin2020geometric}. This variant
is the restriction of \textsc{Max-2SAT} to formulas $\phi$ whose vertex-clause incidence graph $G_\phi$ admits an
so-called
\emph{aligned drawing}, namely, a drawing with all variables on a horizontal line $L$ and no edge crossing $L$ or any other edge (where the drawing is part of the input). 
In particular, every clause lies either in the upper or lower halfplane defined by $L$.
\cref{fig:reduction} depicts an example.

\begin{figure}[htb]
\centering
\includegraphics[page=5]{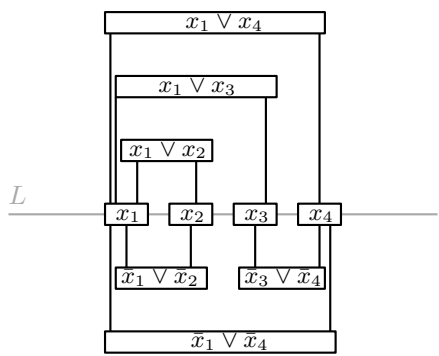}
\caption{
    An aligned drawing of a \textsc{Planar V-Cycle Max-2SAT} instance.
}
\label{fig:reduction}
\end{figure}

Let $t$ be a fixed integer and $\phi$ be a fixed \textsc{2SAT}-formula with variable set $X$, clause set~$\mathcal{C}$, planar incidence graph $G_\phi$, and an aligned drawing of~$G_\phi$.
Our task is to construct two non-crossing trees $T,T'$ with a linear representation, as well as an integer $k$, such that the following are equivalent:
\begin{itemize}
    \item The conflict graph $H = H(T,T')$ admits an acyclic subset of size at least $k$.
    \item The \textsc{2SAT}-formula $\phi$ admits a truth assignment that satisfies at least $t$ clauses.
\end{itemize}
We construct $T$ and $T'$ with a linear representation by modeling the variables and clauses in~$\phi$ with small gadgets.

In doing so, we also discuss the conflicts that occur between the gaps corresponding to the constructed edge pairs.
Recall that each gap has a type: above, below, or crossing.
In fact, it will be enough for us to focus on conflicts between gaps of different types, which we call \emph{mixed conflicts} for short.
Let us already remark that our construction only has conflicts $\overrightarrow{g_ig_j}$ of type~1, i.e., where $e_i$ properly crosses $e'_j$ as illustrated in \cref{fig:conflict-edgeA}.
In fact, for every below gap $g_j$ the common endpoint of its two edges will be a leaf in the tree $T$, and hence there is no conflict $\overrightarrow{g_ig_j}$ of type~2.
Similarly, for every above gap $g_i$ the common endpoint of its two edges will be a leaf in the tree $T'$, and hence there is no conflict $\overrightarrow{g_ig_j}$ of type~3.

%%%%%%%%%%%%%%%%%%%%%
%% VARIABLE-GADGET %%
%%%%%%%%%%%%%%%%%%%%%
\subparagraph{Variable-gadget.}
Let $x \in X$ be a variable and $d = \deg(x)$ be its degree in $G_\phi$, that is, $x$ occurs (negated or non-negated) in exactly $d$ clauses in $\phi$.
The variable-gadget for $x$ consists of $6d$ near-near pairs and $2d+1$ short-short pairs (i.e., $8d+1$ edges in each of the two trees) that we place at the position of $x$ on the horizontal line $L$ of the aligned drawing of $G_\phi$.
That is, the rightmost point of the variable-gadget for $x$ coincides with the leftmost point of the variable-gadget immediately to the right of it.

\begin{figure}[htb]%
    \begin{subfigure}[t]{\textwidth}%
        \centering%
        \includegraphics[page=1]{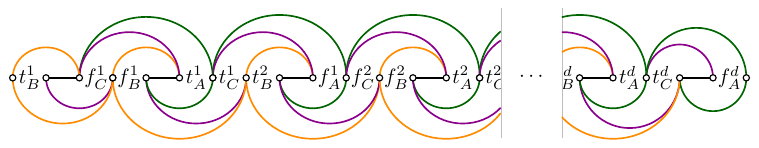}%
        \subcaption{}
            \label{fig:variable-gadget-trees}
    \end{subfigure}

    \begin{subfigure}[t]{\textwidth}%
        \centering%
        \includegraphics[page=3]{figures/variable-gadget.pdf}%
        \subcaption{}%
        \label{fig:variable-gadget-conflicts}
    \end{subfigure}%
    
    \caption{
        Illustration of a variable-gadget and the graph of its mixed conflicts.
    }
    \label{fig:variable-gadget}
\end{figure}

Recall that only the gaps corresponding to near-near pairs correspond to vertices in the conflict graph.
The $6d$ gaps corresponding to near-near pairs are organized in a sequence of $d$ groups of six gaps each.
For $i=1,\ldots,d$, the six gaps in the $i$-th group are denoted as $t^i_B,f^i_C,t^i_A,f^i_B,t^i_C,f^i_A$, where $t$ and $f$ stand for \emph{true} and \emph{false}.
The subscripts $A,B,C$ indicate the types \emph{above}, \emph{below}, and \emph{crossing} of the gaps (recall \cref{fig:cut-open} for an illustration).
For example, the gap $t^i_B$ is below, meaning that $t^i_B$ corresponds to a near-near pair $(e,e')$ of edges that are adjacent with $e' \in T'$ (in the lower halfplane) being longer than $e \in T$.

In the conflict graph, the gaps of a variable-gadget form a path of double conflicts, that is, any two consecutive of the $6d$ gaps in the sequence 
\[
    t^1_B,f^1_C,t^1_A,\;\; f^1_B,t^1_C,f^1_A, \quad t^2_B,f^2_C,t^2_A, \;\; f^2_B,t^2_C,f^2_A,\quad \ldots, \quad t^d_B,f^d_C,t^d_A, \;\; f^d_B,t^d_C,f^d_A
\]
have a double conflict.
See \cref{fig:variable-gadget-trees,fig:variable-gadget-conflicts} for an illustration.

In addition to these double conflicts, there are $6d-2$ further mixed conflicts as shown in \cref{fig:variable-gadget-conflicts}.
(Recall that a mixed conflict is a conflict between two gaps of different type.)
We observe that gaps of distinct variable-gadgets are not in conflict, while all mixed conflicts between two gaps in the same variable-gadget have the following crucial property.

\begin{observation}\label{obs:conflicts-variable-gadget}
    If $\overrightarrow{g_1 g_2}$ is a mixed conflict between a gap in a gadget for variable $x$ and a gap in a gadget for variable $y$, then $x = y$.
    Moreover, (at least) one of the following holds:
    \newline
    \textbullet \ $g_1$ is above \qquad
    \textbullet \ $g_2$ is below \qquad
    \textbullet \ $g_1$ and $g_2$ have a double conflict
\end{observation}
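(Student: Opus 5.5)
The observation is a statement about the concrete gadget geometry, so I will verify it directly from the construction in \cref{fig:variable-gadget-trees}, using that every conflict in our construction is of type~1.

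\textbf{Locality.} A type~1 conflict $\overrightarrow{g_ig_j}$ needs the edge $e_i\in T$ to cross $e'_j\in T'$. I will check that in a variable-gadget every edge of $T$ and of $T'$ has both endpoints among the points of that gadget. The gadgets occupy consecutive intervals of the spine and share only their boundary points. Two semicircles whose endpoint intervals meet in at most one point cannot cross in the linear representation. Hence no type~1 conflict joins gaps of two different variable-gadgets. Since our construction has no type~2 or type~3 conflicts, as argued just before the variable-gadget description, this gives $x=y$. The one thing to confirm is the gadget-boundary convention: the rightmost point of one gadget is the leftmost point of the next, and no edge of either tree leaves its gadget interval. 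Clause gadgets, added later, will have to be checked separately, but the observation as stated concerns only gadget gaps.

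\textbf{The trichotomy.} Within one gadget, I will sort the mixed conflicts $\overrightarrow{g_1g_2}$ by the types of their endpoints. If $g_1$ is above or $g_2$ is below, we are done. Otherwise $g_1\in B\cup C$ and $g_2\in A\cup C$, and since the conflict is mixed, $g_1\neq g_2$ in type. This leaves exactly three cases:
\begin{itemize}
\item $g_1\in B$ and $g_2\in A$;
\item $g_1\in B$ and $g_2\in C$;
\item $g_1\in C$ and $g_2\in A$.
\end{itemize}
In each case I will show that the conflict is one of the double conflicts between consecutive gaps of the sequence $t^1_B,f^1_C,t^1_A,f^1_B,\dots$.

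\textbf{Checking the three cases.} I take the explicit edges of a group of six near-near pairs from the figure and list which edges of $T$ cross which edges of $T'$. By periodicity, it suffices to inspect one group together with its junction to the next group. The claim to verify is that a below-type edge $e_i$ crosses only the $T'$-edges of its immediate neighbours in the sequence. Likewise, a crossing-type $e_i$ crosses an above-type $e'_j$ only when the two gaps are consecutive. The conflicts of this form are exactly the consecutive pairs $(B,C)$, $(C,A)$ and $(A,B)$ in the sequence, traversed in both directions, and these are double conflicts by construction. All $6d-2$ remaining mixed conflicts then go from an above gap or to a below gap. This agrees with the count in \cref{fig:variable-gadget-conflicts}: $6d-1$ consecutive double conflicts plus $6d-2$ further ones.

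The main obstacle is this last case check. It is a finite but fiddly verification that depends on the precise edge lengths chosen in the gadget. I would organise it as a crossing table for one period and then argue that edges from non-adjacent groups are nested or disjoint, and therefore cannot cross.
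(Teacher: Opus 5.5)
Your proposal is correct and follows essentially the paper's approach, which simply reads the observation off the variable-gadget figure: gadget intervals meet in at most one point, so no conflict can join two different gadgets. Inside a gadget, every mixed conflict out of a below or crossing gap into a crossing or above gap is one of the consecutive double conflicts along $t^1_B,f^1_C,t^1_A,\dots$, and the other $6d-2$ mixed conflicts leave an above gap or enter a below gap. One nitpick: your claim that a below-type edge of $T$ crosses only the $T'$-edges of its immediate neighbours is stronger than needed, since crossings with $T'$-edges of other below gaps would be non-mixed conflicts and are irrelevant here.
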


For each variable $x$, the conflict graph of the variable-gadget has two natural maximum acyclic sets, namely $t(x) = \{ t^i_A, t^i_B, t^i_C \mid i \in [d]\}$ and $f(x) = \{ f^i_A, f^i_B, f^i_C \mid i \in [d]\}$.
(We shall argue below that $t(x)$ and $f(x)$ are indeed acyclic.)
These two choices encode the two possible truth assignments of variable $x$, where $x = \textsc{True}$ corresponds to including $t(x)$ in an acyclic set and $x = \textsc{False}$ corresponds to including~$f(x)$.

Recall that the degree of variable $x$ in $G_\phi$ is $\deg(x) = d$.
Consider the above gaps in the variable-gadget for $x$ and all left endpoints of the corresponding edges in the upper halfplane.
Observe that there are exactly $2d$ such points and that each is accessible from above without crossing any edges of the variable-gadget (see \cref{fig:variable-gadget-trees}).
In fact, these are the left endpoints of the $2d$ crossing gaps $f^1_C,t^1_C,\ldots,f^d_C,t^d_C$.
Later, we shall use these $2d$ \emph{connection points} to connect the clause-gadgets for each clause $C$ that contains $x$ and lies in the upper halfplane of the aligned drawing of $G_\phi$.
We prepare $2d$ connection points in the gadget for variable $x$ as there are at most $d = \deg(x)$ clauses that contain $x$ negated and at most $d$ clauses that contain $x$ non-negated.
Symmetrically, the $2d$ right endpoints of the crossing gaps (the right endpoints of edges in the lower halfplane corresponding to the $2d$ below gaps) are accessible from below without crossing any edges, and we shall use these \emph{connection points} to connect clause-gadgets in the lower halfplane.

In the variable-gadget of each variable $x$ we choose one connection point $p_{x,C}$ for each incident clause $C$ such that $p_{x,C} \neq p_{x,C'}$ whenever $C \neq C'$, each $p_{x,C}$ is accessible from the side (upper or lower halfplane) that contains clause $C$, each $p_{x,C}$ is incident to a gap $f^i_C$ if $x$ appears negated in $C$ and incident to a gap $t^i_C$ if $x$ appears non-negated in $C$, and the variable-clause connections are non-crossing, i.e., respect the ordering around the vertices in the aligned drawing of~$G_\phi$.

%%%%%%%%%%%%%%%%%%%%%
%%  CLAUSE-GADGET  %%
%%%%%%%%%%%%%%%%%%%%%
\subparagraph{Clause-gadget.}

Let $C \in \mathcal{C}$ be a clause and let $x$ and $y$ be the two variables in $C$, with $x$ left of $y$ in the aligned drawing of $G_\phi$.
First, consider the case that clause $C$ lies in the upper halfplane.
Let $p_{x,C}$ and $p_{y,C}$ be the dedicated connection points in the variable-gadgets of $x$ and $y$.
We add new gaps $g_x$ and $g_y$ immediately right of $p_{x,C}$ and $p_{y,C}$, and two new edge pairs as illustrated in \cref{fig:aclause-above}.
Gap $g_x$ is crossing with an edge in $T$ to $p_{y,C}$ and an edge in $T'$ of length~$3$.
Gap $g_y$ is above with an edge in $T$ to $p_{x,C}$ and an edge in $T'$ of length~$3$.

\begin{figure}[htb]%
    \begin{subfigure}[t]{\textwidth}%
        \centering
        \includegraphics[page=1]{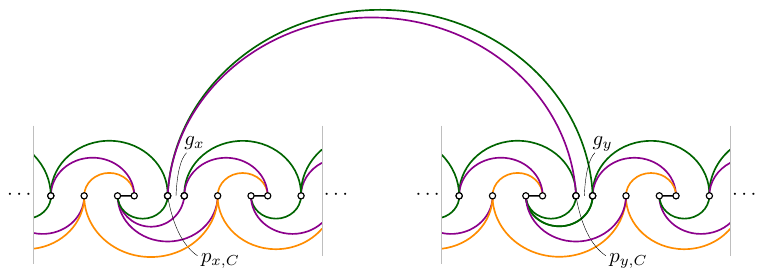}
        \subcaption{}%
        \label{fig:aclause-above}
    \end{subfigure}%

    \begin{subfigure}[t]{\textwidth}%
        \centering
        \includegraphics[page=2]{figures/clause-gadget.pdf}
        \subcaption{}%
        \label{fig:bclause-above}
    \end{subfigure}%
    
    \caption{
        Illustration of a clause-gadget for a clause above the horizontal line and its conflict graph.
        The special conflict corresponding to the last option in \cref{obs:conflicts-clause-gadget} is highlighted.
    }
    \label{fig:clause-gadget-above}
\end{figure}

In the (left) variable-gadget for $x$, we have that $g_x$ is in double conflict with the below gap immediately left of $p_{x,C}$.
This gap corresponds to $f^i_B$ (respectively $t^i_B$) for some $i$ if variable $x$ appears negated (respectively non-negated) in $C$.
Analogously, $g_y$ is in double conflict with the gap corresponding to $f^i_B$ or $t^i_B$ in the (right) variable-gadget for $y$, depending on whether variable $y$ appears negated or non-negated in $C$.
See \cref{fig:bclause-above} for an illustration of the mixed conflicts between the two gaps $g_x,g_y$ of the clause-gadget and the gaps of the variable-gadgets for $x$ and $y$.
Besides the mentioned double conflicts, there are six further mixed conflicts as shown in \cref{fig:bclause-above}.
We observe a crucial property of all these conflicts. 

\begin{observation}\label{obs:conflicts-clause-gadget}
    If $\overrightarrow{g_1 g_2}$ is a mixed conflict between a gap in the gadget for variable $x$ and a gap in the gadget for clause $C$, then $x$ appears in $C$.
    Moreover, one of the following holds:
    \newline
    \textbullet \ $g_1$ is above \qquad 
    \textbullet \ $g_2$ is below \qquad 
    \textbullet \ $g_1$ and $g_2$ have a double conflict
    \newline
    \textbullet \ there is a $g_1$-$g_2$-path of three double conflicts with three gaps of the variable-gadget for $x$
\end{observation}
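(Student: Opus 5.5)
This observation is a finite case check on the explicit clause-gadget, so I would verify it by going through the conflict types directly rather than by a structural argument. Everything is local. The clause-gadget for $C$ adds only the two gaps $g_x,g_y$, with edges $e_x,e'_x,e_y,e'_y$. The gadget edges $e'_x,e'_y\in T'$ have length $3$, so they only reach a constant number of points next to $p_{x,C}$ and $p_{y,C}$. The long edges $e_x,e_y\in T$ run between $p_{x,C}$ and $p_{y,C}$ in the upper halfplane.

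\textbf{Step 1: conflicts only involve variables in $C$.} As the paper already noted, every conflict in the construction is of type~1, so a conflict needs a $T$-edge of one gap to cross a $T'$-edge of the other. The only lower-halfplane edges the clause contributes are $e'_x,e'_y$ of length $3$. Each one can cross only $T$-edges with an endpoint among the few points around $p_{x,C}$ or $p_{y,C}$, and those points belong to the gadgets of $x$ and $y$. The upper edges $e_x,e_y$ are drawn along the clause's route in the aligned drawing. That route is non-crossing and separated from the other clauses. Inside the interval $[p_{x,C},p_{y,C}]$, the variable-gadget edges lying below the route are nested under $e_x,e_y$ and so do not cross them. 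Crossings can therefore occur only at the endpoints, that is, with edges incident to the connection-point neighbourhoods in the gadgets of $x$ and $y$. Hence any mixed conflict with a clause gap involves only a variable occurring in $C$.

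\textbf{Step 2: list the mixed conflicts.} The conflicts with a clause gap are the two double conflicts (between $g_x$ and the below gap left of $p_{x,C}$, and between $g_y$ and the corresponding below gap in the gadget of $y$) together with the six further conflicts shown in \cref{fig:bclause-above}. For each of these six I would identify the tail $g_1$ and head $g_2$ from the crossing pair of edges. Every conflict whose tail is the above gap $g_y$, or whose head is a below gap of a variable-gadget, falls under the first two bullets. What remains are the conflicts whose tail is the crossing gap $g_x$, or whose tail is a variable gap $g_1$ that is not above and whose head is not below. For each of these I would exhibit the required path in the variable-gadget explicitly. For example, if $g_x$ conflicts with the crossing or above gap two or three steps away along the path $t^i_B,f^i_C,t^i_A,f^i_B,\dots$, then $g_x$ is in double conflict with the adjacent below gap, and following the path of double conflicts inside the gadget gives a path of three double conflicts through three gadget gaps. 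This is the highlighted conflict in the figure.

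\textbf{Main obstacle.} The hard part is Step~1's claim that $e_x$ and $e_y$ cross no edges of the variable-gadgets lying strictly between $x$ and $y$, and no edges of other clause-gadgets. This depends on the gadgets being nested exactly as the aligned drawing dictates, and on each connection point being accessible from above. I would argue it by induction on the nesting depth of clauses in the upper halfplane, using that the chosen connection points respect the rotation order of the drawing. Once locality is established, the rest is reading the finite list off the figure.
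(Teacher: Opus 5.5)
Your plan is essentially the paper's own argument: the observation is obtained by inspecting the explicit gadget, that is, by noting that only the gadgets of $x$ and $y$ can interact with the clause gaps, and then checking the two double conflicts and the six further mixed conflicts of \cref{fig:bclause-above} (and symmetrically \cref{fig:bclause-below}) against the four bullets.

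Two small corrections. First, drop the claim that $e_x,e_y$ cross no edges of other clause-gadgets: it is false, since long edges of an upper and a lower clause can interleave, which is exactly why \cref{obs:conflicts-distinct-clause-gadget} is needed. It is also unnecessary, because a variable-gadget edge crossing $e_x$ or $e_y$ must contain one of their endpoints in its interior and hence lies in the gadget of $x$ or $y$, so no induction on nesting depth is required. Second, the exceptional partner must be exactly two steps along the gadget's double-conflict path from the gap that is in double conflict with the clause gap, not ``two or three''.
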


Now consider the case that clause $C$ lies in the lower halfplane, in which case we proceed symmetrically.
Again, let $x,y$ be the variables incident to $C$ with $x$ left of $y$ and the designated connection points $p_{x,C}$ and $p_{y,C}$.
As illustrated in \cref{fig:aclause-below}, we introduce a new gap $g_x$ left of $p_{x,C}$ and a new gap $g_y$ left of $p_{y,C}$.
Gap $g_x$ is below with an edge to $p_{y,C}$ in $T'$ and an edge in $T$ of length~$3$.
Gap $g_y$ is crossing with an edge to $p_{x,C}$ in $T'$ and an edge in $T$ of length~$3$.

\begin{figure}[htb]
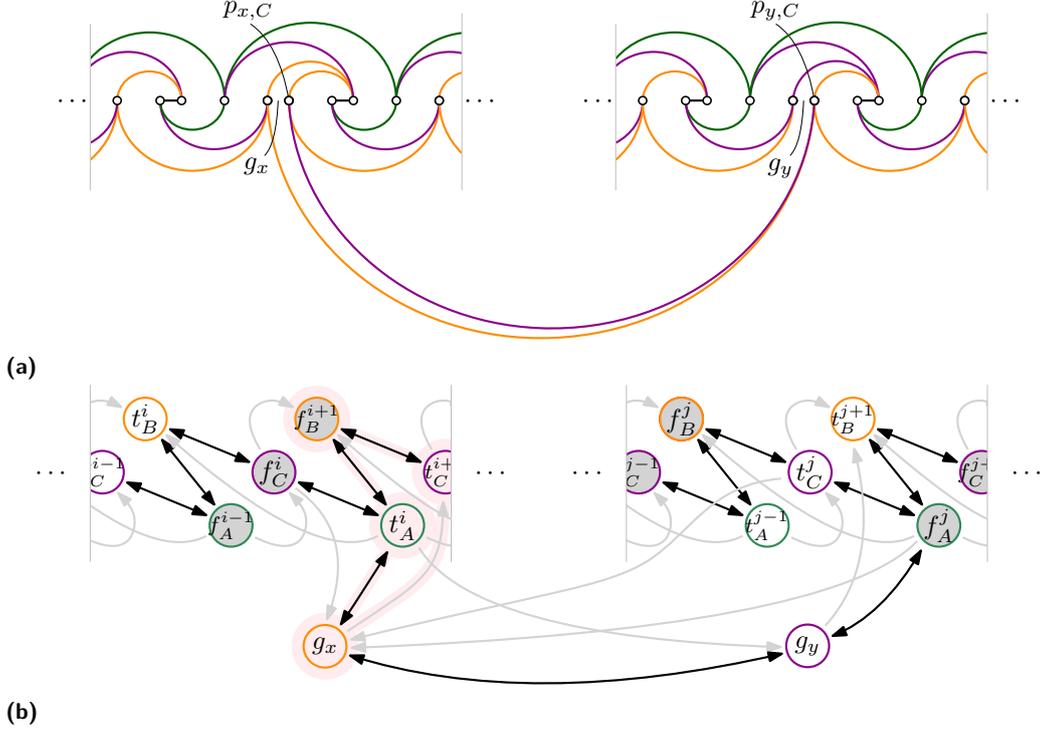
%
    \begin{subfigure}[t]{\textwidth}%
        \centering
        \includegraphics[page=3]{figures/clause-gadget.pdf}
        \subcaption{}%
        \label{fig:aclause-below}
    \end{subfigure}%

    \begin{subfigure}[t]{\textwidth}%
        \centering
        \includegraphics[page=4]{figures/clause-gadget.pdf}
        \subcaption{}%
        \label{fig:bclause-below}
    \end{subfigure}%
    
    \caption{
        Illustration of a clause-gadget for a clause below the horizontal line and its conflict graph.
        The special conflict corresponding to the last option in \cref{obs:conflicts-clause-gadget} is highlighted.
    }
    \label{fig:clause-gadget-below}
\end{figure}

As before, there are double conflicts between $g_x$ and $g_y$, between $g_x$ and the crossing gap immediately right of $p_{x,C}$, and between $g_y$ and the above gap immediately right of $p_{y,C}$.
Also, there are the six further mixed conflicts between $g_x,g_y$ and gaps in the variable-gadgets as shown in \cref{fig:bclause-below}.
Crucially, we observe that all these mixed conflicts also have the properties as described in \cref{obs:conflicts-clause-gadget}.

%%%%%%%%%%%%%%%%%%%%%%
%% PUTTING TOGETHER %%
%%%%%%%%%%%%%%%%%%%%%%
\subparagraph{Putting All Gadgets Together.}

This concludes the construction of all variable-gadgets and clause-gadgets.
In total, we obtain a linear representation with a set of edges $T_0$ in the upper halfplane and a set of edges $T_0'$ in the lower halfplane.
(The $0$-subscript indicates that these are not yet the final trees in our upcoming reduction.)
Since the underlying drawing of $G_\phi$ is planar, no two edges from clause-gadgets in the same halfplane are crossing.
Thus, each of $T_0$ and $T_0'$ is a non-crossing edge set.

Each edge pair $(e,e') \in T_0 \times T_0'$ is introduced with a corresponding gap $g$.
Observe that $e$ is the shortest edge in $T_0$ that covers gap $g$.
In particular, mapping each gap $g$ to the shortest edge in $T_0$ covering $g$ is a bijection.
This implies that $T_0$ is in fact a tree~\cite[Lemma~3.1]{bjerkevik2024flippingnoncrossingspanningtrees}.
By the same argument $T_0'$ is a non-crossing tree, i.e., $T_0,T_0'$ form a pair of non-crossing spanning trees with a linear representation.

Let $H_0 = H(T_0,T_0')$ be the corresponding conflict graph.
Its vertices are all gaps corresponding to near-near pairs of edges.
As already discussed, there are $6d$ such gaps for each variable $x$ of degree $\deg(x) = d$ in $G_\phi$, as well as two such gaps for each clause $C \in \mathcal{C}$.
\cref{obs:conflicts-variable-gadget,obs:conflicts-clause-gadget} already describe the mixed conflicts between two gaps in variable-gadgets and a gap in a variable-gadget with a gap in a clause-gadget.
The two gaps of the same clause-gadget form a double conflict.
It remains to consider a mixed conflict $\overrightarrow{g_1g_2}$ between a gap $g_1$ in the gadget for clause $C_1$ and a gap $g_2$ in the gadget for clause $C_2 \neq C_1$.
Recall that each of $g_1,g_2$ corresponds to an edge pair of which one edge has only length~$3$.
It follows that for $\overrightarrow{g_1g_2}$ to be a conflict (of type~3), their corresponding long edges must cross.
Hence, clause $C_1$ lies in the upper halfplane and $g_1$ is an above or crossing gap, while $C_2$ lies in the lower halfplane and $g_2$ is a below or crossing gap.

\begin{observation}\label{obs:conflicts-distinct-clause-gadget}
    If $\overrightarrow{g_1 g_2}$ is a mixed conflict between two gaps in clause-gadgets, then one of the following holds:
    \newline
    \textbullet \ $g_1$ is above \qquad 
    \textbullet \ $g_2$ is below \qquad 
    \textbullet \ $g_1$ and $g_2$ have a double conflict
\end{observation}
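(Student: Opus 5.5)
The plan is to argue directly from the geometry of the clause-gadgets, using the fact established just before the statement. Namely, any mixed conflict $\overrightarrow{g_1g_2}$ between gaps of two distinct clause-gadgets $C_1\neq C_2$ must be of type~3 and come from crossing long edges. Hence $C_1$ lies in the upper halfplane and $g_1$ is above or crossing, while $C_2$ lies in the lower halfplane and $g_2$ is below or crossing. In addition, the two gaps within a single clause-gadget form a double conflict, so the third bullet covers that case. It therefore remains to treat distinct clause-gadgets.

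For distinct clause-gadgets I would split on the types. If $g_1$ is above or $g_2$ is below, we are done by the first two bullets. Otherwise $g_1$ is crossing and $g_2$ is crossing. For an upper clause, the crossing gap is $g_x$, whose $T$-edge goes from $p_{x,C_1}$ to $p_{y,C_1}$ and whose $T'$-edge has length~$3$. For a lower clause, the crossing gap is $g_y$, whose $T'$-edge goes to $p_{x,C_2}$ and whose $T$-edge has length~$3$. But the mixed-conflict requirement says the gaps have different types, and here both are crossing. So this case is not a mixed conflict at all and needs no treatment.

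I expect the main obstacle to be checking the preceding claim itself: that conflicts between distinct clause-gadgets cannot be of type~1 or type~2 in some unexpected configuration. For example, a long $T$-edge of $C_1$ might cross the short length-$3$ $T'$-edge of $C_2$. I would rule this out as follows. The length-$3$ edges sit adjacent to connection points inside variable-gadgets. Distinct clauses use distinct connection points, so a length-$3$ edge covers only its own gap region. Planarity of the aligned drawing then ensures that a long edge of another clause either covers it entirely or is disjoint from it, so no crossing and no covering pattern arises beyond the one described. If needed, I would simply refine the preceding case analysis to confirm that the type-2 and type-3 covering conditions also fail for this reason.
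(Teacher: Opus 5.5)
This is essentially the paper's argument: gaps of distinct clause-gadgets can only conflict through their long edges crossing, which forces $g_1$ into an upper clause (so it is above or crossing) and $g_2$ into a lower clause (so it is below or crossing); the crossing--crossing case is not mixed, and the two gaps of a single clause-gadget form a double conflict. One correction to your type labels: by definition, $e_1$ crossing $e'_2$ is a type-1 conflict (type~3 is a covering condition, and the construction has only type-1 conflicts because of the leaf property of below and above gaps), so your verification step should rule out type-1 crossings that involve a length-3 edge, not type-1 conflicts in general.
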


Summarizing \cref{obs:conflicts-variable-gadget,obs:conflicts-clause-gadget,obs:conflicts-distinct-clause-gadget}, we see that almost all mixed conflicts in $H_0 = H(T_0,T_0')$ are double conflicts, outgoing at an above gap or incoming at a below gap.
The only exceptions are the mixed conflicts between a clause-gadget and incident vertex-gadget as described in last item of \cref{obs:conflicts-clause-gadget}, which we call \emph{exceptional conflicts} for short.

\begin{lemma}\label{lem:reduction-acyclic-set}
    Let $S \subseteq V(H_0)$ be a subset of gaps such that there is no double conflict and no exceptional conflict between any two gaps in $S$.
    Then $S$ is an acyclic set in $H_0$.
\end{lemma}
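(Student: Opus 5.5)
Suppose $S$ contains a directed cycle $K$ in $H_0[S]$, and choose $K$ shortest. We will show that $K$ must use a forbidden conflict. By \cref{ac:ABC_SODA}, each of $A$, $B$ and $C$ is acyclic. Hence $K$ cannot stay inside one type: it contains gaps of at least two types, and so at least one mixed conflict. The plan is to rule out every kind of mixed conflict that could lie on $K$, using the summary of \cref{obs:conflicts-variable-gadget,obs:conflicts-clause-gadget,obs:conflicts-distinct-clause-gadget}.

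Those observations say that every mixed conflict in $H_0$ is of one of four kinds: a double conflict, an exceptional conflict, a conflict leaving an above gap, or a conflict entering a below gap. By hypothesis, no two gaps of $S$ form a double or exceptional conflict. So every mixed arc of $K$ either leaves a gap of $A$ or enters a gap of $B$.

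Next we look at the arcs of $K$ in the order $C\to A$, $A\to C$, $B\to C$, $C\to B$, $A\to B$, $B\to A$. The available arcs are limited:
\begin{itemize}
\item an arc from $B$ to $A$ is impossible, since it neither leaves $A$ nor enters $B$;
\item an arc from $C$ to $A$ is impossible, for the same reason;
\item an arc from $B$ to $C$ is impossible, for the same reason.
\end{itemize}
So the type of a gap can only change along $K$ via $A\to B$, $A\to C$ or $C\to B$. In the linear order $A<C<B$, every mixed arc therefore strictly increases the type. Arcs between gaps of the same type keep it fixed.

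Following the cycle once around must return to the starting type. Since the type never decreases and increases on every mixed arc, $K$ has no mixed arc at all. Then $K$ lies entirely in one of $A$, $B$ or $C$, which contradicts \cref{ac:ABC_SODA}. Hence $S$ is acyclic.

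The main thing to check is that the observations really cover all mixed conflicts of $H_0$: those inside a variable-gadget, those between a variable-gadget and a clause-gadget (including the lower-halfplane version), and those between clause-gadgets. Also, "double conflict" must be read as a property of the pair, so excluding it for pairs within $S$ removes both directions of the arc. Once this is confirmed, the monotonicity argument above is routine.
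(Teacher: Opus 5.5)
Your proposal is correct and is essentially the paper's argument. The paper fixes topological orderings $t_A$, $t_C$, $t_B$ of $A\cap S$, $C\cap S$, $B\cap S$ and notes that every mixed conflict inside $S$ leaves an above gap or enters a below gap, so the concatenation $t_At_Ct_B$ is a topological ordering of $S$. That is exactly your observation that types only increase along $A<C<B$, stated constructively rather than by contradiction via a cycle.
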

\begin{proof}
    Let $A,B,C \subseteq V(H_0)$ denote the set of all above, below, and crossing gaps, respectively.
    It is known that each of $A,B,C$ is an acyclic set~\cite[Lemma~3.2]{bjerkevik2024flippingnoncrossingspanningtrees}.
    Fix topological orderings $t_A$ of $A \cap S$, $t_B$ of $B \cap S$, and $t_C$ of $C \cap S$.
    
    Let $\overrightarrow{g_1g_2}$ be any mixed conflict with $g_1,g_2 \in S$.
    By the assumptions on $S$ and \cref{obs:conflicts-variable-gadget,obs:conflicts-clause-gadget,obs:conflicts-distinct-clause-gadget} it holds $g_1 \in A$ or $g_2 \in B$ (or both).
    Hence, the concatenation $t_At_Ct_B$ is a topological ordering of $S$, i.e., $S$ is acyclic.    
\end{proof}

%%%%%%%%%%%%%%%%%%%%%
%%  THE REDUCTION  %%
%%%%%%%%%%%%%%%%%%%%%
\subparagraph{Reduction from {\textsc{Planar V-Cycle MAX-2SAT}}.} 

We are now ready for a full description of the desired reduction.
We are given a \textsc{2SAT}-formula $\phi$, an aligned drawing of the graph $G_\phi$, and an integer $t$.
First, we construct the pair $T_0,T'_0$ of non-crossing spanning trees with the linear representation consisting of variable-gadgets and clause-gadgets as described previously.
As a final modification, we apply blowups to the near-near gaps in each variable-gadget, while the two gaps of each clause-gadget remain unchanged.
Let $m = |\mathcal{C}|$ be the number of clauses in $\phi$.
In every variable-gadget we apply an $m$-blowup to its first and last gap, and a $2m$-blowup to all remaining $6d-2$ gaps corresponding to near-near pairs in the variable-gadgets.
See \cref{fig:app-reduction-blowup} for an illustration.

\begin{figure}[htb]
    \centering
    \includegraphics{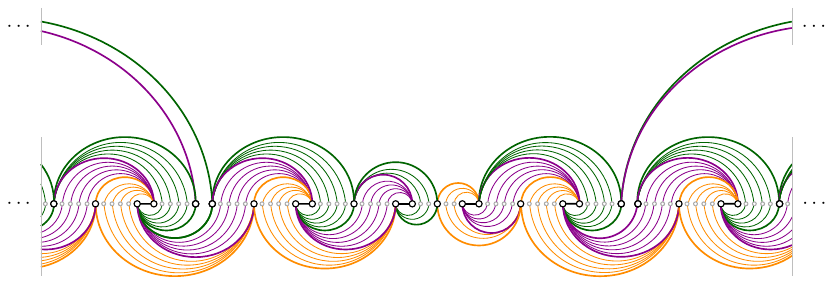}
    \caption{
        Applying an $m$-blowup (here $m = 2$) to the first and last gap of each variable-gadget and $2m$-blowups to all other gaps corresponding to near-near pairs in variable-gadgets.
    }
    \label{fig:app-reduction-blowup}
\end{figure}

It is easy to check that applying blowups maintains non-crossing trees, and we denote the resulting non-crossing pair of trees by $T,T'$.
Further let $H = H(T,T')$ be the corresponding conflict graph.
Every gap $g$ in the previous conflict graph $H_0 = H(T_0,T_0')$ corresponds to a set $B(g)$ of $1$, $m+1$, or $2m+1$ gaps in $H$, depending on the blowup applied to it.
Each gap in $B(g)$ has the same type (above, below, crossing) as $g$ and the same mixed conflicts as $g$.
(There are additional conflicts among the gaps in the same $B(g)$, but these are not mixed.)

In $H$ as well as $H_0$ there are exactly six gaps per clause, i.e., $6m$ such gaps in total.
For a variable $x \in X$ of degree $d = \deg(x)$ in $G_\phi$ there are exactly $6d$ gaps in $H_0$ corresponding to the variable-gadget for $x$.
Thus, the total number of gaps in variable-gadgets in $H_0$ is $\sum_{x\in X} 6\deg(x) = 6|E(G_\phi)| = 12|\mathcal{C}| = 12m$.
After the blowups, the number of gaps in variable-gadgets in $H$ is:
\[
     \sum_{x \in X}\left( (2m+1)6\deg(x)-2m\right) = (2m+1)6|E(G_\phi)|-2m|X| = 2m(12m+6-|X|)
\]
Setting $k = m(12m+6-|X|)+t$ completes our construction.
			
\begin{lemma}\label{lem:acyclic}
    The following are equivalent.
    \begin{itemize}
        \item The conflict graph $H = H(T,T')$ admits an acyclic subset of size at least $k$.
        \item The \textsc{2SAT}-formula $\phi$ admits a truth assignment that satisfies at least $t$ clauses.
    \end{itemize}
\end{lemma}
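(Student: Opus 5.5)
The plan is to prove both directions by counting, using that the gaps of the variable-gadgets dominate $k$ and that the double conflicts force an acyclic set to look like an independent set along each variable path. First a bookkeeping fact. In $H$, every gap of $B(g)$ has the same type and the same mixed conflicts as $g$. In particular, every gap of $B(g)$ is in double conflict with every gap of $B(g')$ whenever $g,g'$ have a double conflict in $H_0$. The argument of \cref{lem:reduction-acyclic-set} transfers verbatim to $H$: the sets $A,B,C$ of $H$ are acyclic by \cref{ac:ABC_SODA}, and all mixed conflicts are inherited from $H_0$. So the concatenation $t_At_Ct_B$ of topological orderings still works for any $S\subseteq V(H)$ that contains no two gaps in double or exceptional conflict.

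\textbf{($\Leftarrow$)} Fix an assignment satisfying at least $t$ clauses. For each variable $x$ take all gaps of $\bigcup_{g\in t(x)}B(g)$ if $x$ is true, and of $\bigcup_{g\in f(x)}B(g)$ otherwise. Each of these sets contains exactly one of the two end gaps of the path, which have blowup $m$, and $3d-1$ gaps with blowup $2m$. Its size is therefore $3d(2m+1)-m$. Summing over $X$ with $\sum_x\deg(x)=2m$ gives exactly $m(12m+6-|X|)=k-t$, i.e.\ half of the variable-gadget gaps. For each satisfied clause $C$, pick a literal of $C$ that is true and add the clause gap $g_x$ or $g_y$ attached to that literal. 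This gap is in double conflict with the variable gap next to $p_{x,C}$. That variable gap is $f^i_\cdot$ if the literal is negated and $t^i_\cdot$ otherwise, so it belongs to the unchosen set. The two clause gaps of $C$ are never both chosen. It remains to check from \cref{fig:bclause-above,fig:bclause-below} that the exceptional conflict of a clause gap goes to a variable gap of the unchosen parity. I expect this, since the exceptional conflict ends three double conflicts further along the path, i.e.\ at the opposite parity class. The resulting set has no double or exceptional conflicts and has size at least $k$, so it is acyclic by the transferred \cref{lem:reduction-acyclic-set}.

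\textbf{($\Rightarrow$)} Let $S$ be acyclic with $|S|\ge k$. Since $S$ contains no double conflict, it meets at most one class of each consecutive pair $B(g),B(g')$ along a variable path. Each clause contributes at most one gap, hence at most $m$ clause gaps in total. On a single variable path with weights $m+1,2m+1,\dots,2m+1,m+1$, a weighted independent set that is not one of the two alternating sets $t(x),f(x)$ must skip two consecutive classes somewhere. I claim this costs at least roughly $m$ relative to $3d(2m+1)-m$. Making this loss estimate precise and comparing it against the clause slack is the main obstacle: the $m$ versus $2m$ blowups were chosen exactly so that a deviation on one variable cannot be paid for by clause gaps. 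I would carry it out by an exchange argument. Repeatedly replace $S\cap(\text{gadget of }x)$ by the full alternating class that agrees with $S$ on the majority side. Show that this does not decrease $|S|$ even after deleting the at most $\deg(x)$ clause gaps that become conflicting, or show that otherwise $|S|<k$. After normalization every variable is set consistently to $t(x)$ or $f(x)$, which defines an assignment. The variable part has size exactly $k-t$, so $S$ contains at least $t$ clause gaps. Each such gap avoids a double conflict with its chosen neighbour, so its literal is true. Hence at least $t$ clauses are satisfied.
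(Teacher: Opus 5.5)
Your direction from a satisfying assignment to an acyclic set follows the paper. Applying \cref{lem:reduction-acyclic-set} directly in $H$ instead of lifting from $H_0$ is fine, and so is your parity argument for the exceptional conflicts. Your converse direction also starts as the paper does: project to $S_0$, use independence along each variable path, and note that each clause contributes at most one gap. But at the decisive point you only assert that a deviation costs ``roughly $m$'' and defer the comparison with the clause slack to an exchange argument you do not carry out. That is a genuine gap, and ``roughly'' is not enough. $S$ may contain up to $m$ clause gaps, so a deviation costing only $m-1$ could be compensated and the reduction would fail.

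The missing estimate is short and exact. For a variable $x$ of degree $d$, weight the path $P$ by $|B(g)|$: $m+1$ at the two ends and $2m+1$ inside, so half the total weight is $3d(2m+1)-m$. A maximum independent set of a path on $6d$ vertices is either one of the two alternating sets or contains both endpoints. Hence two cases arise:
\begin{itemize}
\item If $|S_0\cap P|=3d$ but $S_0\cap P\notin\{t(x),f(x)\}$, its weight is $2(m+1)+(3d-2)(2m+1)=3d(2m+1)-2m$.
\item If $|S_0\cap P|\le 3d-1$, its weight is at most $(3d-1)(2m+1)=3d(2m+1)-m-(m+1)$.
\end{itemize}
So any deviation loses at least $m$. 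One deviation already gives $|S|\le (k-t)-m+m=k-t<k$ when $t\ge 1$; the case $t=0$ is trivial. No exchange is needed. Every $S_0\cap P_x$ equals $t(x)$ or $f(x)$, the variable part of $S$ is at most $k-t$, so $S$ contains at least $t$ clause gaps, and the rest of your argument finishes as in the paper. Your exchange would also work once this bound is available, but it is a detour. The gain of at least $m$ dominates the at most $\deg(x)\le m$ deleted clause gaps, and afterwards you only need absence of double conflicts, not acyclicity.

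There is also a smaller slip in the assignment-to-acyclic-set direction. You say the double-conflict neighbour of the chosen clause gap is $f^i_\cdot$ for a negated literal and $t^i_\cdot$ otherwise, ``so it belongs to the unchosen set''. That does not follow: a true negated literal means $f(x)$ is the chosen class. For the reduction to work, the neighbour must lie in the class that falsifies the literal, namely $t(x)$ for a negated occurrence and $f(x)$ for a positive one. The paper's text mixes these labels too, so the construction has to be read this way.
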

\begin{proof}
    First assume that $\phi$ admits an assignment that satisfies at least $t$ clauses.
    We shall first find a set $S_0 \subseteq V(H_0)$ of gaps, i.e., prior to the blowups.
    
    Consider any variable $x \in X$ and let $d = \deg(x)$ be its degree in $G_\phi$.
    Recall that the gaps in the variable-gadget for $x$ partition into $t(x) = \{ t^i_A, t^i_B, t^i_C \mid i \in [d]\}$ and $f(x) = \{ f^i_A, f^i_B, f^i_C \mid i \in [d]\}$ with no double conflict between two gaps in $t(x)$ or two gaps in $f(x)$.
    If $x$ is set to \textsc{True}, we add $t(x)$ to $S_0$, otherwise we add $f(x)$ to $S_0$.
    This way, we add exactly $\sum_{x \in X} 3\deg(x) = 3|E(G_{\phi})| = 6|\mathcal{C}| = 6m$ gaps of $H_0$ to $S_0$.

    Consider any clause $C \in \mathcal{C}$ and let $z$ be one of its incident variables. 
    Let $p_{z,C}$ be the designated connection point in the variable-gadget of $z$.
    Recall that $p_{z,C}$ is incident to a crossing gap in $t(z)$ if $z$ appears negated in $C$, and incident to a crossing gap in $f(z)$ if $z$ appears non-negated in $C$.
    Moreover, the corresponding gap $g_z$ in the clause-gadget for $C$ is in double conflict with a gap in $f(z)$ if $z$ appears negated in $C$, and in double conflict with a gap in $t(z)$ if $z$ appears non-negated in $C$.
    
    If $C$ is satisfied, we fix one variable $z$ that fulfills $C$.
    If $z$ appears non-negated in $C$, then $z$ is set to \textsc{True} and $t(z) \subseteq S_0$.
    In this case we add the corresponding gap $g_z$ of the clause-gadget to $S_0$.
    Note that $g_z$ is not in double conflict with any gap in $S_0$, and not in an exceptional conflict with any gap in $S_0$.
    Since at least $t$ clauses are satisfied, we add at least $t$ further gaps to $S_0$.
    Thus $|S_0| \geq 6m+t$, and $S_0$ is acyclic by \cref{lem:reduction-acyclic-set}.

    To obtain a set $S \subseteq V(H)$ of gaps after the blowups, we take for each gap $g \in S_0$ its corresponding set $B(g)$ after the blowups.
    As $S_0$ is acyclic in $H_0$, we have that $S$ is acyclic in $H$.
    For each variable $x$ the sets $t(x)$ and $f(x)$ contain exactly one gap that is the first or last in the variable-gadget, as well as $d-1$ further gaps.
    Consequently, $S$ contains exactly half of all the gaps in variable-gadgets in addition to the $t$ gaps in clause-gadgets.
    In other words, $|S| \geq m(12m+6-|X|) + t = k$, as desired.

    \medskip
    
    Now assume that $H = H(T,T')$ admits an acyclic subset $S \subseteq V(H)$ of size at least $k$.
    Consider the set $S_0 = \{ g \in V(H_0) \mid B(g) \cap S \neq \emptyset\}$ of all gaps in $H_0$ whose blowups contain an element of $S$.
    As $S$ is acyclic, $S_0$ is an acyclic subset of $H_0$.
    
    For a variable $x$ of degree $\deg(x) = d$, the double conflicts in the variable-gadget for $x$ form a path $P$ on $6d$ vertices.
    Thus, $S_0$ forms an independent set of $P$ and hence $S_0$ contains at most $3d$ gaps of the variable-gadget for $x$.
    As every independent set of $P$ contains at least one of $P$'s endpoints, it follows that $S$ contains at most half of the gaps in the variable-gadget of $x$.
    Thus $S$ contains at most half of the gaps in all variable-gadgets in $H$, i.e., at most $m(12m+6-|X|)$ such gaps.
    
    On the other hand, our assumption is that $|S| \geq k = m(12m+6-|X|) + t$.
    This means that $S$ contains at least $t$ gaps in clause-gadgets.
    Moreover, since $1 \leq t \leq |\mathcal{C}| = m$, it follows that the number of gaps in $S$ from variable-gadgets is at least $m(12m+6-|X|) - (m-1)$, i.e., at least $m-1$ less half of the total number of these gaps.
    However, if for some path $P$ of double conflicts in a variable-gadget we have $|S_0 \cap P| < d$ or $S_0$ contains both endpoints of $P$, then $S$ misses out on at least $m$ gaps, which is impossible.

    It follows that for each variable $x$ the set $S_0$ either contains all of $t(x)$ or all of $f(x)$.
    In the former case we set $x$ to \textsc{True}, in the latter case we set $x$ to \textsc{False}.
    Now consider any clause $C$ and any variable $z$ in $C$ with its corresponding gap $g_z$ in $H$.
    Recall that if $z$ appears negated (respectively non-negated) in $C$, then $g_z$ has a double conflict with a gap in $f(z)$ (respectively $t(z)$).
    Thus, if $S$ contains $g_z$, then the truth assignment of $z$ satisfies the clause $C$.
    Since $S$ contains at least $t$ such gaps (at most one of each clause-gadget due to their double conflicts), our truth assignment satisfies at least $t$ clauses, as desired.
\end{proof}
					
\cref{lem:acyclic} now immediately implies \cref{prop:las-hardness}.

\medskip

In order to finally prove \cref{thm:hardness}, we now reduce the problem of finding large acyclic subsets in conflict graphs to the problem of finding short flip sequences in $\beta$-blowups.

\begin{theorem}\label{thm:hardbeautiful}
    Let $T$ and $T'$ be two non-crossing spanning trees on $n$ vertices with a linear representation 
    {  and non-empty conflict graph $H = H(T,T')$.
    Further, let $k>0$ be an integer} 
    and let $\beta=4n^2+2n$. 
    Then the following are equivalent.
    \begin{itemize}
        \item The conflict graph $H$ has an acyclic subset of size at least $k$.
        \item The flip distance between $\beta\cdot T$ and $\beta\cdot T'$ is at most $(\beta+1)(2|V(H)| - k) + 2(n-|V(H)|)$.
    \end{itemize}
\end{theorem}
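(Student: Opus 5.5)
The two directions will be proved separately: an explicit flip sequence for the forward direction, and \cref{lem:lower} plus a counting argument for the converse. Throughout, write $h=|V(H)|$. The number of points of $\beta\cdot T$ is $N=n+\beta h$, and its edges are those of $T$ plus $\beta h$ blowup edges.

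\textbf{Upper bound (acyclic set $\Rightarrow$ short sequence).} Fix an acyclic set $S$ with $|S|=k$ and a topological order of $S$ in $H$. The plan is to mimic the upper bound construction of~\cite{bjerkevik2024flippingnoncrossingspanningtrees} behind \cref{thm:bigtheorem}\eqref{item:big-UB}, but count flips exactly. The edges of $\beta\cdot T$ split into three groups.
\begin{itemize}
\item The $n-1-h$ edges of pairs that are not near-near: each is handled with at most $2$ flips. The $n-1$ here should be absorbed so that the total reads $2(n-h)$ as in the statement; there are $n-1$ gaps, so I expect $2(n-1-h)+O(1)$, and I will account for the slack carefully.
\item For each gap $g\in S$, its $\beta+1$ edges (the pair edge $e$ and the $\beta$ blowup edges sharing the far endpoint of $e$) should each be flipped directly to their target partner, costing $\beta+1$ flips per gap. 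Processing the gaps along the topological order ensures no conflict blocks a direct flip; this is exactly what acyclicity of $S$ buys.
\item For each $g\notin S$, the $\beta+1$ edges are first flipped to temporary edges (e.g.\ to a common hub such as a boundary edge), and later to their targets, costing $2(\beta+1)$.
\end{itemize}
The total is $(\beta+1)(k+2(h-k))+2(n-h)=(\beta+1)(2h-k)+2(n-h)$. The delicate point is the ordering: first flip all non-$S$ blowup edges to temporary positions, then perform the $S$ flips in topological order, then finish. I must verify that the intermediate graphs remain non-crossing spanning trees. This should follow from the same argument as in~\cite{bjerkevik2024flippingnoncrossingspanningtrees}, so I will cite that construction rather than redo it.

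\textbf{Lower bound (short sequence $\Rightarrow$ acyclic set).} Suppose $\las(H)\le k-1$. By \cref{lem:lower},
\[
\dist(\beta\cdot T,\beta\cdot T')\ge(\beta-2n)(2h-k+1).
\]
It suffices that this exceeds $(\beta+1)(2h-k)+2(n-h)$. The difference is
\[
(\beta-2n)-(2n+1)(2h-k)-2(n-h).
\]
Since $h\le n-1$ and $2h-k\le 2n$, this is at least $\beta-2n-(2n+1)2n-2n$. With $\beta=4n^2+2n$ that equals $4n^2+2n-2n-4n^2-2n-2n=-4n$, which is negative. So this naive bound fails, and this estimate is the main obstacle.

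To fix it, I will sharpen the bounds. First, $2h-k\le 2h-1$. Second, a cycle-free subset exists whenever $h\ge1$, so $k\ge1$ is meaningful, and every single vertex is acyclic. This gives $2h-k<2n-2$, and also $2(n-h)$ is small when $h$ is large. Carefully, with $m=2h-k\le 2n-3$, the difference is
\[
\beta-2n-(2n+1)m-2(n-h)\ge 4n^2-(2n+1)(2n-3)-2n=4n^2-4n^2+4n+3-2n>0.
\]
Hence the distance exceeds the claimed bound, a contradiction. If needed, I will also reexamine \cref{lem:lower}'s proof for the exact constant. I will double-check this arithmetic, but the choice $\beta=4n^2+2n$ is clearly tuned so that exactly this inequality closes.
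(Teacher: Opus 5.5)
Your proposal is correct and follows the paper's proof. In the forward direction you use the same count: one flip per pair in the blown-up acyclic set and at most two flips per other pair. This gives exactly $(\beta+1)(2|V(H)|-k)+2(n-1-|V(H)|)$, with no extra $O(1)$ slack needed. The paper phrases this via the acyclic set $\bigcup_{g\in S}B(g)$ of size $k(\beta+1)$ in $H(\beta\cdot T,\beta\cdot T')$, which also justifies ordering the flips inside each $B(g)$; note also that each non-$S$ edge goes to its own gap, not to a common hub. The converse is \cref{lem:lower} plus the same estimate $2|V(H)|-k\le 2n-3$, which the paper uses as $(2n+1)(2|V(H)|-k+1)\le 4n^2$, just as your corrected computation does.
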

\begin{proof}
    For convenience, let us denote $|H| = |V(H)|$.
    Assume $H$ does not contain an acyclic set of size $k$, i.e., $\las(H) < k$.
    Then by Lemma~\ref{lem:lower},
    \begin{align*}
        \dist(\beta\cdot T, \beta\cdot T') & \geq (\beta-2n)(2|H| - \las(H)) \geq (\beta+1-(2n+1))(2|H| - k + 1)\\
        & = (\beta+1)(2|H| - k) + (\beta+1) - (2n+1)(2|H| - k + 1) \\
        &> (\beta+1)(2|H| - k) + 2(n-|H|).
    \end{align*}
    For the last inequality observe that $|H| < n$ and hence $(2n+1)(2|H|-k+1) \leq (2n+1)(2n-1) \leq 4n^2$,  which together with $\beta = 4n^2 + 2n > 4n^2 + 2(n-|H|)$ gives the desired lower bound.

    Conversely, assume $H = H(T,T')$ has an acyclic subset of size $k$, i.e., $\las(H) \geq k$.
    Then, the conflict graph $H^\beta = H(\beta\cdot T, \beta \cdot T')$ of the blowup has an acyclic set of size $k\cdot(\beta+1)$, i.e., $\las(H^\beta) \geq k(\beta+1)$.
    For the original trees $T,T'$ we have $n$ vertices and $n-1$ gaps, of which $|H|$ are affected by the blowup.
    Thus, each of $\beta\cdot T$ and $\beta \cdot T'$ has $n -1 + \beta |H|$ edges.
    Moreover, $|V(H^\beta)| = (\beta+1)|H|$.
    {
    Using one flip for each edge pair in the largest acyclic subset and two flips for each other edge pair gives the desired bound of 
    \begin{align*}
        \dist(\beta\cdot T, \beta\cdot T') \leq (\beta+1)(2|H|-\las(H)) + 2 (n-1-|H|) < (\beta+1)(2|H|-k) + 2 (n-|H|).
    \end{align*}
}
\end{proof}
			
We remark that the proof of \cref{thm:hardbeautiful} produces a flip sequence that contains non-compatible flips.
Next, we modify the proof to the case of flip sequences that contain only compatible flips or only rotations. 
In fact, a direct flip for an above pair or a below pair is already a rotation.
Further, flipping any near edge to a short edge covering its associated gap is also a rotation.
We do, however, need to deal with large collections of crossing edge pairs. To this end, we use the following fact.

\begin{lemma}\label{lemma:rotbound}
    Let $T$, $T'$ be two trees that differ in a single crossing pair. Then the rotation flip distance between $\beta\cdot T$ and $\beta\cdot T'$ is at most $\beta+2$. 
\end{lemma}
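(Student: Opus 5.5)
The plan is to write down an explicit rotation sequence of length $\beta+2$ and check that every step keeps a non-crossing spanning tree. First I fix notation. Let $g$ be the gap between $p_i$ and $p_{i+1}$ carrying the crossing pair $(e,e')$. Since both edges are near and they cross, up to mirroring the linear representation we have $e=p_ap_{i+1}$ with $a<i$ and $e'=p_ip_b$ with $b>i+1$. In the blowup, the points $v_1,\dots,v_\beta$ are inserted in $g$ in left-to-right order. Then $\beta\cdot T$ consists of the common edges together with $e$ and the star $\{p_av_j\}_{j}$, and $\beta\cdot T'$ consists of the same common edges together with $e'$ and the star $\{p_bv_j\}_j$. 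So exactly $\beta+1$ edges must be exchanged, and the claim allows one extra rotation.

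The sequence has three phases.
\begin{enumerate}[(1)]
\item Rotate $e=p_ap_{i+1}$ about $p_{i+1}$ to the short edge $v_\beta p_{i+1}$. This is a valid exchange: $v_\beta$ is attached to $p_a$, so adding $v_\beta p_{i+1}$ reconnects the two components created by deleting $e$. It is non-crossing, since $v_\beta p_{i+1}$ has length one and the only edges covering it are $e$ (now removed) and nothing else inside the gap.
\item For $j=1,\dots,\beta-1$, rotate $p_av_j$ about $v_j$ to $p_bv_j$. At this point $v_j$ is a leaf hanging off $p_a$, so the exchange keeps a spanning tree. The new edge does not cross the remaining edges $p_av_{j'}$, because the intervals $[a,v_{j'}]$ and $[v_j,b]$ are either disjoint or nested. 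It also does not cross $v_\beta p_{i+1}$, and it does not cross any common edge: every common edge lies outside $g$, and each common edge crossing $p_bv_j$ would also cross $e'$, which is impossible.
\item Two rotations turn the remaining edges $p_av_\beta, v_\beta p_{i+1}$ into $p_bv_\beta, p_ip_b$. I would first rotate $v_\beta p_{i+1}$ about $v_\beta$ to $v_\beta p_b$, and then rotate $p_av_\beta$ to an edge at $p_i$, namely $p_ip_b$.
\end{enumerate}
The problem with the last step of phase (3) is that $p_av_\beta$ and $p_ip_b$ share no endpoint, so that exchange is not a rotation. If that ordering fails, I would instead route through $p_i$, either by replacing $p_av_\beta$ with $p_iv_\beta$ and then $p_iv_\beta$ with $p_ip_b$, or by merging the phase (1) step and the final step. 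In each variant I would recount so that the total is $1+(\beta-1)+2=\beta+2$. One option is to treat $v_\beta$ like the other $v_j$ in phase (2) but use $v_\beta p_{i+1}$ as a temporary edge, so that the "extra" rotation is exactly the temporary short edge.

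The main obstacle is the connectivity check in phase (3). Whether an exchange of the form "remove $f$, add $f'$" keeps a tree depends on whether $f$ lies on the tree path between the endpoints of $f'$. That in turn depends on where $p_i$ and $p_b$ sit relative to the two components of $T-e$ (equivalently of $T'-e'$). I would settle this using the defining property of the gap-edge bijection. Since $e$ is the shortest edge of $T$ covering $g$, the points $p_{a+1},\dots,p_i$ are all connected to $p_{i+1}$'s side only through edges nested under $e$. Symmetrically, since $e'$ is the shortest edge of $T'$ covering $g$, the points $p_{i+1},\dots,p_{b-1}$ are connected to the rest of $T'$ only through edges nested under $e'$. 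With this, $p_i$ lies on the $p_a$-side of $T-e$ and $p_{i+1}$ lies on the $p_b$-side of $T'-e'$. Combining this with the fact that all edges other than the blown-up ones are common to both trees determines the relevant tree paths. It then selects which of the orderings in phase (3) is valid, and I expect a short case distinction on whether the $T$-path from $p_i$ to $p_b$ passes through $p_a$.
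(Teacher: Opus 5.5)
There is a genuine gap. It lies in the crossings, not in the connectivity you flagged; your observation that $\{p_a,p_i\}$ and $\{p_{i+1},p_b\}$ form the two sides of $T-e=T'-e'$ is correct.

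\textbf{Phase~(2) fails.} Your non-crossing claim there is false. For $j<j'$ we have $a<v_j<v_{j'}<b$, so $p_av_{j'}$ and $v_jp_b$ interleave. The very first rotation $p_av_1\to p_bv_1$ therefore already crosses $p_av_2$. Reversing the order does not help, because you keep $p_av_\beta$ until phase~(3), and $p_av_\beta$ crosses $v_jp_b$ for every $j<\beta$.

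\textbf{The temporary edge $v_\beta p_{i+1}$ is a dead end.} After phase~(1) you have exactly $\beta+1$ rotations left. Each must therefore remove one of $p_av_1,\dots,p_av_\beta,v_\beta p_{i+1}$ and add one of $p_bv_1,\dots,p_bv_\beta,p_ip_b$.
\begin{itemize}
\item Every target edge other than $p_bv_\beta$ crosses $p_av_\beta$ and shares no endpoint with it.
\item $p_bv_\beta$ can only enter by removing $v_\beta p_{i+1}$. Removing $p_av_\beta$ instead closes a cycle through $p_{i+1}$.
\item After that, $p_av_\beta$ can no longer be rotated into any target edge.
\end{itemize}
So no reordering or recount of phase~(3) reaches $\beta+2$, and your fallback variants cost $\beta+3$.

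\textbf{The missing idea} is to take as the single temporary edge the long edge $p_ap_b$ joining the far endpoints of $e$ and $e'$. This edge is compatible with both stars.
\begin{enumerate}
\item Rotate $e=p_ap_{i+1}$ about $p_a$ to $p_ap_b$. This reconnects the tree because $p_{i+1}$ and $p_b$ lie on the same side of $T-e$. No common edge crosses $p_ap_b$, since such an edge would have to cross $e$ or $e'$.
\item For $j=\beta,\beta-1,\dots,1$, i.e.\ right to left, rotate $p_av_j\to p_bv_j$ about the leaf $v_j$. Each new edge is disjoint from the remaining $p_av_{j'}$ with $j'<j$, and it shares $p_b$ with $p_ap_b$. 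A common edge crossing $v_jp_b$ would either cross $e'$ or be an edge $p_ip_z$ with $i<z<b$, i.e.\ a $T'$-edge covering $g$ that is shorter than $e'$, which is impossible.
\item Rotate $p_ap_b$ about $p_b$ to $p_ip_b$.
\end{enumerate}
This uses $1+\beta+1=\beta+2$ rotations. It is exactly the paper's sequence, written there in mirrored notation $u_0,\dots,u_\ell$ with $u_0u_\ell$ as the temporary edge.
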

\begin{proof}
    We denote the vertices incident to the blowup edges by $u_0,\dots, u_\ell$ where $\ell=\beta+3$.
    Tree $\beta\cdot T$ contains the edges $u_iu_\ell$ for $i\in\{1,\dots, \ell -2\}$ and $\beta\cdot T'$ contains the edges $u_0u_i$ for $i\in\{2,\dots, \ell -1\}$.
    For an illustration, see \cref{fig:rotations}.
    In a first step, we temporarily flip the edge $u_1u_\ell$ to the edge $u_0u_\ell$.
    This is a rotation as $u_\ell$ is a common vertex.
    Then, for increasing $i\in\{2,\dots, \ell -2\}$, we flip $u_iu_\ell$ to $u_0u_i$.
    In a last step, we flip the edge $u_0u_\ell$ to $u_0u_{\ell-1}$.
    The total number of rotation flips is $\ell-1=\beta+2$.
    \begin{figure}[htb]%
        \captionsetup[subfigure]{justification=centering}%
        \begin{subfigure}[t]{\columnwidth}%
            \centering%            \includegraphics[page=17]{figures/Trees_Hardness2}%
        \end{subfigure}
        \caption{
            Illustration for the proof of \cref{lemma:rotbound}.
            The tree $\beta\cdot T$ can be transformed into $\beta\cdot T'$ by $\beta+2$ rotation flips.
            The current tree is illustrated above the spine and the edges that remain to be inserted (into the current top tree) are depicted below the spine.
        }
        \label{fig:rotations}
    \end{figure}
\end{proof}

With \cref{lemma:rotbound} in place, we prove that finding short compatible flip sequences and short rotation sequences is \NP-hard as well.
As before, we use the shorthand notation $|H| = |V(H)|$.

\begin{theorem}\label{thm:hardbeautifulcr}
    Let $T$ and $T'$ be two non-crossing spanning trees on $n$ vertices with a linear representation and non-empty conflict graph $H = H(T,T')$. Let $k>0$ be an integer and let $\beta=4n^2+4n$.
    Then the following are equivalent.
    \begin{itemize}
        \item The conflict graph $H$ has an acyclic subset of size at least $k$.
        \item The compatible flip distance $\dist_{\rm comp}(\beta \cdot T, \beta \cdot T') \leq (\beta+1)(2|H| - k) + |H| + 4(n-|H|)$.
        \item The rotation flip distance $\dist_{\rm rot}(\beta \cdot T, \beta \cdot T') \leq (\beta+1)(2|H| - k) + |H| + 4(n-|H|)$.
    \end{itemize}
\end{theorem}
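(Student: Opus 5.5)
We follow the proof of \cref{thm:hardbeautiful}, establishing the equivalences via two implications. Since every rotation is a compatible flip and every compatible flip is a flip, we have $\dist(\beta\cdot T,\beta\cdot T') \le \dist_{\rm comp}(\beta\cdot T,\beta\cdot T') \le \dist_{\rm rot}(\beta\cdot T,\beta\cdot T')$. It therefore suffices to show two things: (a) if $\las(H)\ge k$ then the rotation distance satisfies the bound, and (b) if $\las(H)<k$ then even the unrestricted flip distance exceeds the bound. Then (a) gives ``first item $\Rightarrow$ third $\Rightarrow$ second'', and (b) gives ``second $\Rightarrow$ first'' by contraposition.

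For (b), we argue exactly as in \cref{thm:hardbeautiful}. By \cref{lem:lower} and $\las(H)\le k-1$,
\[
\dist \ge (\beta-2n)(2|H|-k+1) = (\beta+1)(2|H|-k) + (\beta+1) - (2n+1)(2|H|-k+1).
\]
Using $|H|<n$, we have $(2n+1)(2|H|-k+1)\le (2n+1)(2n-1)\le 4n^2$. Since $\beta+1-4n^2 = 4n+1 > |H| + 4(n-|H|)$, the distance exceeds the bound. The choice $\beta = 4n^2+4n$ (rather than $4n^2+2n$) is exactly what absorbs the larger additive term $|H|+4(n-|H|)$.

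For (a), fix an acyclic set $S$ of $H$ with $|S|=k$ (shrink it if necessary) and a topological order of it. We must build a rotation sequence of length at most $(\beta+1)(2|H|-k)+|H|+4(n-|H|)$. The plan is to mimic the sequence of Theorem~2.1(ii) from~\cite{bjerkevik2024flippingnoncrossingspanningtrees}, used in \cref{thm:hardbeautiful}, while replacing non-rotational steps.
\begin{enumerate}
\item For each gap $g\notin S$ of $H$, whose blowup has $\beta+1$ near-near pairs, first rotate each $T$-edge to the short edge covering its gap. Flipping a near edge to the short edge of its gap is a rotation, as noted above. Later, rotate each short edge to the target edge; short to near sharing the gap's vertex is again a rotation. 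This gives the $2(\beta+1)$ flips per such gap.
\item For each gap $g\in S$, processed in topological order, flip its $\beta+1$ pairs directly. Above and below pairs are rotations. For a crossing gap, \cref{lemma:rotbound} supplies a rotation sequence of length $\beta+2=(\beta+1)+1$. This accounts for at most one extra flip per gap of $H$, which yields the $+|H|$ term.
\item The $n-1-|H|$ remaining non-near-near pairs (short, long, and so on) are handled with at most $4$ rotations each, via intermediate short edges or a long edge routed through a short edge. This gives the $4(n-|H|)$ term.
\end{enumerate}

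The main obstacle is the ordering of step~2 combined with \cref{lemma:rotbound}. That lemma is stated for trees differing in a single crossing pair, whereas here it must be applied in the middle of a sequence. We need that, when a gap of $S$ is processed, all its conflicts are resolved, so the intermediate trees locally look like $\beta\cdot T$ versus $\beta\cdot T'$ for one crossing pair. In particular, the temporary edge $u_0u_\ell$ must not cross the current tree. We would argue this using the acyclic ordering, as in~\cite{bjerkevik2024flippingnoncrossingspanningtrees}, together with the fact that $u_0u_\ell$ is covered by the pair's edges. Checking that conflicts within a blowup do not interfere is the delicate point. A secondary concern is verifying that the $4$-rotation bound per non-near-near pair holds for all nine pair types; if it does not, we would fall back on the slack in $\beta$.
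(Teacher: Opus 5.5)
Your proposal is correct and is essentially the paper's proof. It uses the same contrapositive application of \cref{lem:lower}, with $\beta=4n^2+4n$ absorbing the extra $|H|+4(n-|H|)$. It also uses the same three-phase rotation sequence with the same counts: move every edge outside $S^\beta$ to its gap, process the gaps of $S$ in topological order (one rotation per pair, plus one extra per crossing gap via \cref{lemma:rotbound}), then rotate from the gaps to the targets, for $2$, $1$, and at most $4$ rotations per pair respectively.

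The paper does not elaborate either of the two points you flag. For wide edges it uses two rotations: keep the right endpoint, move the left one to the gap, then move the right one. So the $4$-rotation bound holds, and no slack in $\beta$ is needed; none would be available in that direction anyway. It also applies \cref{lemma:rotbound} mid-sequence without comment, so your planned justification via the topological order, once phase one has moved every edge outside $S^\beta$ to its gap, would only make the argument more complete.
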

\begin{proof}
    Assume first that $\las(H) < k$.
    Then by \cref{lem:lower}
    \begin{align*}
        \dist(\beta\cdot T, \beta\cdot T') & \geq (\beta-2n)(2|H| - \las(H)) \geq (\beta+1-(2n+1))(2|H|-k+1)\\
        &= (\beta+1)(2|H|-k) + (\beta+1) - (2n+1)(2|H|-k+1) \\
        &> (\beta+1)(2|H|-k) + |H| + 4\cdot(n-|H|).
    \end{align*}
    For the last inequality observe that $|H| < n$ and hence $(2n+1)(2|H|-k+1) \leq (2n+1)(2n-1) \leq 4n^2$, which together with $\beta = 4n^2 + 4n > 4n^2 + 4(n-|H|) + |H|$ gives the desired.
    
    Conversely, assume $H$ has an acyclic subset $S$ of size $k$.
    Each gap $g \in V(H)$ corresponds to a set $B(g)$ of $\beta+1$ gaps in $H^\beta = H(\beta\cdot T, \beta \cdot T')$.
    Then $S^\beta = \bigcup_{g \in S} B(g)$ is an acyclic set in $H^\beta$ of size $(\beta+1)k$. 
    We execute the following flip sequence from $\beta\cdot T$ to $\beta \cdot T'$.
    \begin{description}
        \item[(1)] Flip each edge $e \in T$ that does not correspond to a gap $g \in S^\beta$ to its gap $g$.
        For a near edge $e$ this is a rotation, since $e$ shares a vertex with its gap $g$.
        For a wide edge $e$ we do this in two rotations.
        The first rotation keeps the right endpoint of $e$ fixed and moves the left endpoint of $e$ to the left endpoint of the gap $g$.
        The second rotation then moves the right endpoint to the gap.
        
        \item[(2)] Consider the gaps $g \in S$ in a topological ordering $t$.
        If $g$ is above (respectively below), flip all edges in $\beta \cdot T$ corresponding to gaps in $B(g)$ directly to their target location in $\beta\cdot T'$ in order from shortest to longest edge (respectively longest to shortest edge).
        All these flips are rotations.
        If $g$ is crossing, we use the $\beta+2$ rotations (for the $\beta +1$ gaps in $B(g)$) as given by \cref{lemma:rotbound}.
        
        \item[(3)] Flip each edge $e$ that is flipped to its gap $g$ in (1) to its target location in $\beta \cdot T'$ in at most two rotations.
    \end{description}
    
    In step (1) we spend at most four rotations for every edge pair that is not a near-near pair.
    The maximum is attained for wide-wide pairs $(e,e')$ where we need two rotations to get $e$ to the convex hull and two to rotate from there to $e'$.
    In total, these are at most $4\cdot(n-|H|)$ flips for all non-near-near pairs.
    For each of the $(\beta+1)(|H|-k)$ near-near pairs in step (1), we spend two flips.
    
    In step (2) we spend one flip per gap in $S^\beta$, that is, $|S^\beta| = (\beta+1)k$ flips.
    But for each crossing gap $g \in S$ we spend one extra flip, giving at most $|S| \leq |H|$ additional flips, and thus at most $(\beta+1)k + |H|$ flips in step (2) in total.
        
    In consequence, we obtain an upper bound on the length of the rotation (and thus compatible) flip distance of 
    \[
        (\beta+1)(2|H|-k) + |H| + 4(n-|H|),
    \]
    as desired.
\end{proof}

%%%%%%%%%%%%%%%%%%%%%
%%                 %%
%%  STACKED TREES  %%
%%                 %%
%%%%%%%%%%%%%%%%%%%%%
\section{Stacked trees}\label{sec:stacked}
Let $T$ and $T'$ be two non-crossing spanning trees on a set of $n$ linearly ordered points.
According to \cref{thm:bigtheorem} and \cref{ac:ABC_SODA}, for the flip distance between $T$ and $T'$ we have $\dist(T,T') 
%\leq \max\{\frac{3}{2}, 2-\frac{\las(H)}{\lvert V(H) \rvert}\}(n-1)\} 
\leq 
\nicefrac53(n-2)$, where $H = H(T,T')$ is the associated conflict graph.
In this section, we improve this upper bound 
%on $\dist(T,T')$
for the special case that in one of the trees, say $T$, the (relevant) edges can be partitioned into independent ``stacks'' of nested edges.
In fact, any two edges $e_1,e_2$ in a non-crossing tree $T$ in a linear representation are either nested ($e_1$ covers $e_2$ or vice versa) or disjoint (the subset of gaps covered by $e_1$ and $e_2$ are disjoint).
We call a non-crossing tree $T$ in a linear representation \emph{stacked} if the edge set of $T$ can be partitioned into sets $S_1,\ldots,S_k$ such that
\begin{romanenumerate}
    \item no edge in $S_i$ covers an edge in $S_j$ for $i\neq j$, and\label{item:stacked-1}
    \item each $S_i$ is totally ordered by the covering relation.\label{item:stacked-2}
\end{romanenumerate}
In other words, $T$ is stacked if and only if there are no three edges $e_1,e_2,e_3$ in $T$ such that $e_1$ covers both $e_2$ and $e_3$, while $e_2$ and $e_3$ are disjoint.

A stacked tree can be seen as a natural generalization of a so-called separated caterpillar, that is, a tree with at most two stacks.
In fact, it is known that if $T$ is a separated caterpillar, then $\dist(T,T') \leq \nicefrac32(n-2)$~\cite[Section~6, arXiv-Version]{bjerkevik2024flippingnoncrossingspanningtrees}.
On the other hand, the best previously known lower bound example in~\cite{bjerkevik2024flippingnoncrossingspanningtrees}, depicted in \cref{fig:old}, is a pair $T,T'$ with $\dist(T,T') \geq \frac{14}{9}n - O(1)$ where one tree $T$ is stacked.
In this section, we show that this lower bound of $\frac{14}{9}n$ is essentially tight whenever $T$ is stacked, by this proving  \cref{thm:stackedTrees}.

Throughout this section, we refer to a non-crossing spanning tree with a linear representation simply as a tree.
Let $T,T'$ be a pair of trees.
We say that $T$ is \emph{stacked with respect to $T'$} if the subset $N$ of edges in $T$ that are in near-near pairs of $(T,T')$ admits a partition into sets $S_1,\ldots,S_k$ satisfying properties \eqref{item:stacked-1} and \eqref{item:stacked-2} in the definition of a stacked tree.
Note that if $T$ is stacked then $T$ is stacked with respect to any other tree $T'$.
We refer to the edge sets $S_i$ of $T$ as \emph{stacks of $T$ (with respect to $T'$)}.

\begin{figure}[htb]
    \centering
    \includegraphics[page=1]{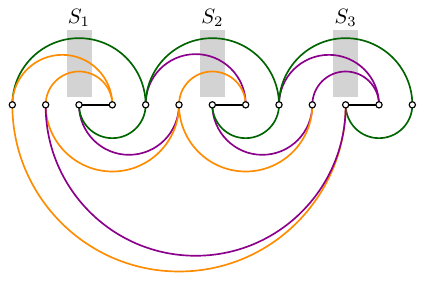}
    \caption{
        The best previously known lower bound example from~\cite{bjerkevik2024flippingnoncrossingspanningtrees}. 
        The top tree is a stacked tree with three stacks.
        The bottom tree is neither stacked, nor stacked with respect to the top tree.
    }
    \label{fig:old}
\end{figure}

%%%%%%%%%%%%%%%%%%%%%%
%%   THREE STACKS   %%
%%%%%%%%%%%%%%%%%%%%%%
\subsection{The Special Case: Three stacks}
Fix trees $T$ and $T'$. 
We first assume that $T$ has exactly three stacks $S_1$, $S_2$ and $S_3$ with respect to $T'$.
Let $\mathcal{P}_N$ be the set of near-near pairs of $(T,T')$.
We partition $\mathcal{P}_N$ into $9$ different sets. %$A_1,A_2,A_3,B_{1,2},B_{1,3},B_{2,1},B_{2,3},B_{3,1},B_{3,2}$.
In particular, pairs containing edges from $S_i$ are partitioned into three sets $A_i, B_{i,j}, B_{i,k}$ where $\{i,j,k\} = \{1,2,3\}$.
A pair $(e,e')\in\mathcal{P}_N$ with $e\in S_i$ is put into a set according to the following rules.
\begin{itemize}
    \item If $(e,e')$ is an above pair, then $(e,e') \in A_i$.
    \item If $(e,e')$ is a below pair or a crossing pair, and $e'$ crosses an edge in a different stack $S_j$, $i\neq j$, then $(e,e') \in B_{i,j}$.
    \item If  $(e,e')$ is a below pair or a crossing pair, and $e'$ does not cross any edge in any other stack, we assign $(e,e')$ arbitrarily to one of the two sets $B_{i,j}$ with $j \neq i$.
\end{itemize}
For this to be well-defined, we need to verify that $(e,e')$ cannot belong to both $B_{i,j}$ and $B_{i,j'}$ for $j\neq j'$.
The edge $e'$ covers two gaps adjacent to its endpoints.
One of these gaps is also covered by~$e$, since $(e,e')$ is a near-near pair.
Now, if $e'$ crosses an edge $f\in S_j$ with $j \neq i$, then the gap at the other end of $e$ is covered by $f$.
Hence every edge of $T$ that crosses $e'$ belongs to the same stack, and our partition is well-defined.

\begin{lemma}\label{lem:threestack}
    For all choices of $x,y,z$ such that $\{x,y,z\} = \{1,2,3\}$, both $H_x: = A_y \cup A_z \cup B_{x,y}\cup B_{x,z}$ and $H_{x,y} := A_z \cup B_{y,z}\cup B_{x,y} \cup B_{x,z}$ are acyclic. 
\end{lemma}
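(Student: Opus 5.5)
We will prove acyclicity in the same way as \cref{lem:reduction-acyclic-set}: take a topological ordering of each of the pieces and concatenate them in a suitable order. The pieces are each contained in $A$ or in $B\cup C$. By \cref{ac:ABC_SODA}, $A_y\cup A_z\subseteq A$ is acyclic. For the $B$-type pieces, $B_{x,y}$, $B_{x,z}$ and $B_{y,z}$ each lie in $B\cup C$, which need not be acyclic, so we first show that the pairs of any single stack $S_i$ that are below or crossing form an acyclic set.

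\textbf{Single stack.} The edges of $S_i$ are totally ordered by covering, i.e.\ they are nested. The plan is to order these pairs by the length of $e$ and check that every conflict between them goes in one direction.
\begin{itemize}
\item A type~1 conflict $\overrightarrow{g_ag_b}$ needs $e_a$ to cross $e'_b$. Since $e'_b$ shares an endpoint with its gap, which lies inside $e_b$, and $e'_b$ is longer than $e_b$ or crosses it, the edge $e'_b$ escapes $e_b$ on one side. It can then only cross nested edges $e_a$ of the stack that lie outside $e_b$ but not outside the escape point. So conflicts go from longer to shorter (or the reverse, to be fixed).
\item Types~2 and~3 should be checked the same way.
\end{itemize}
As a fallback we simply invoke \cref{ac:ABC_SODA} for $B$ and $C$ separately and order within $B_{x,\cdot}$ as $t_Ct_B$.

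\textbf{Ordering for $H_x$.} We aim for the concatenation $t_A$ (on $A_y\cup A_z$) followed by the order on $B_{x,y}\cup B_{x,z}$, which lives entirely in stack $S_x$. It suffices that no conflict goes from a pair in $B_{x,\cdot}$ to a pair in $A_y\cup A_z$. Such a conflict $\overrightarrow{g_1g_2}$ with $e_1\in S_x$ and $e_2\in S_y$ an above pair (type~1, or type~2/3) would require $e_1$ to interact with $e'_2$. But $e'_2$ is shorter than $e_2$ and adjacent to it, so $e'_2$ is covered by $e_2$. Then $e_1$ either crosses $e'_2$, forcing $e_1$ to cover a gap inside $e_2$ and hence cover or cross $e_2$, which contradicts property~(i) and planarity. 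Or $e_1$ covers $e'_2$, which again forces covering across stacks. Type~2 needs $e'_2$ to cover $e_1$, so $e_2$ covers $e_1$, also impossible.

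\textbf{Ordering for $H_{x,y}$.} We take $A_z$ first, then $B_{y,z}$, then $B_{x,y}\cup B_{x,z}$. The argument above kills all back-conflicts into $A_z$ from stacks $x$ and $y$. The key remaining issue, and the main obstacle, is conflicts between $B_{x,\cdot}$ and $B_{y,z}$ in both directions.
\begin{itemize}
\item \emph{From $B_{y,z}$ into $B_{x,\cdot}$.} Here $e'$ of the $B_{y,z}$ pair crosses only edges of $S_z$ (or nothing), never $S_x$, so no type~1 conflict arises. Types~2/3 across stacks should be excluded by non-nesting of stacks.
\item \emph{From $B_{x,\cdot}$ into $B_{y,z}$.} These would be allowed by the ordering, so we need nothing here.
\end{itemize}
So the key check is: a conflict $\overrightarrow{g_1g_2}$ with $g_1\in B_{y,z}$ and $g_2\in B_{x,\cdot}$ requires $e_1\in S_y$ to cross $e'_2$ (type~1), meaning $e'_2$ crosses an edge of $S_y$, hence $g_2\in B_{x,y}$. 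We then need to rule it out or orient it. I expect this to be where the labelling matters: the order must be chosen as $A_z$, then $B_{x,y}\cup B_{x,z}$, then $B_{y,z}$, or its reverse, so that type~1 conflicts from $S_x$-edges into $B_{y,z}$ pairs (which need $e'$ of the $B_{y,z}$ pair to cross $S_x$, impossible since it crosses only $S_z$) are the excluded direction. I will pick whichever ordering makes all cross-stack type~1 conflicts point forward, verifying the type~2/3 cases via the near-near geometry.
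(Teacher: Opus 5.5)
Your overall plan matches the paper's: concatenate topological orders of the $A$-blocks and of single-stack $B$-blocks, and control cross-stack conflicts by the fact that stacks do not nest. Your $H_x$ argument is fine once the single-stack step holds. However, two steps are not established, and the second one, as you finally state it, is wrong.

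\textbf{Single stack.} You leave the direction open and do not check types~2 and~3. Your fallback $t_Ct_B$ is invalid. \cref{ac:ABC_SODA} says nothing about conflicts between below and crossing gaps, and inside one stack a longer below pair can conflict into a shorter crossing pair. For instance, take $e=p_2p_{10}$, $e'=p_1p_{10}$ (below) and $f=p_3p_6$, $f'=p_1p_4$ (crossing), completed by short edges. Here $e$ crosses $f'$, giving a conflict that is backward in $t_Ct_B$.

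What you need is the following: if $(e,e'),(f,f')\in B_{i,j}\cup B_{i,k}$ have gaps $g,h$ and $e$ covers $f$, then there is no conflict $\overrightarrow{hg}$.
\begin{itemize}
\item Type~2 is impossible because $f$ cannot cover $g$, since $e$ is the shortest edge of $T$ covering $g$.
\item Type~3 is impossible because $f$ covering $e'$ would make $e$ cover $e'$, which cannot happen for a below or crossing pair.
\item Type~1 is impossible. For a below pair, $e'$ covers $e$ and hence $f$. For a crossing pair, write $g=p_kp_{k+1}$ and w.l.o.g.\ $e=p_ap_{k+1}$. Then $e'=p_kp_b$ with $b>k+1$, while $f$ has both endpoints among $p_a,\dots,p_k$, so $f$ and $e'$ share at most $p_k$.
\end{itemize}
Hence ordering by decreasing length of $e$ is a topological order.

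\textbf{Ordering for $H_{x,y}$.} Your bookkeeping of directions is inverted: in a concatenation, only backward conflicts must be excluded. The order you first wrote, $A_z$, $B_{y,z}$, $B_{x,y}\cup B_{x,z}$, is the right one, and it is the paper's. Its backward conflicts would be of two kinds.
\begin{itemize}
\item From an $S_x$- or $S_y$-pair into $A_z$. Your $H_x$ argument excludes these.
\item From an $S_x$-pair $(e,e')$ into a $B_{y,z}$-pair $(f,f')$ with gap $h$. Types~2 and~3 would force $e$ to cover $h$ (directly, or via $f'$, which covers $h$). Then $e$ and $f$ would be nested, contradicting property~(i). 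Type~1 would need $f'$ to cross the $S_x$-edge $e$, but by definition of $B_{y,z}$ the edge $f'$ crosses only $S_z$-edges or none.
\end{itemize}
The conflicts from $B_{y,z}$ into $B_{x,y}$ that you found are forward in this order and harmless.

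Your final choice, $A_z$, $B_{x,y}\cup B_{x,z}$, $B_{y,z}$, makes exactly those conflicts backward, so it fails. Its reverse fails too: an $A_z$-pair $(e,e')$ conflicts into a $B_{y,z}$-pair $(f,f')$ whenever $f'$ crosses $e$, and the definition of $B_{y,z}$ allows this.
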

\begin{proof}
    By  \cref{ac:ABC_SODA},  $A_1\cup A_2\cup A_3$  is acyclic because it only has above gaps. 
    % \linda{shouldn't we ref to \cref{ac:ABC_SODA}}\cite[Lemma 3.2]{bjerkevik2024flippingnoncrossingspanningtrees}, $A_1\cup A_2\cup A_3$ is non-crossing\linda{acyclic???}, since it only has above gaps. 
    Hence, any subset of $A_1\cup A_2\cup A_3$ is acyclic.
    We show that $B_{i,j} \cup B_{i,k}$ for any $i,j,k$ such that $\{i,j,k\} = \{1,2,3\}$
    %, \sout{with two sets $B_{i,j}$ and $B_{i,k}$ of edges from the same stack}, 
    is acyclic. 
    For this, 
    pick the gap $g$ corresponding to the pair $(e,e')$ in $B_{i,j} \cup B_{i,k}$ where $e$ is of maximal length among all such pairs. 
    It suffices to show that in the conflict graph $H = H(T,T')$ there is no edge $h\to g$, where $h$ is a gap associated to another pair $(f,f')$ in $B_{i,j} \cup B_{i,k}$. 
    Note that since $e$ is of maximal length and $f$ is in the same stack as $e$ (namely, $S_i$), $e$ covers $f$. 
    We rule out each type of conflict between $h$ and $g$:
\begin{description}
    \item [type 1] If $(e,e')$ is a below pair, $e'$ covers $e$, which covers $f$, so $e'$ and $f$ do not cross.
    \item[type 2] Since $e$ covers $f$ and $e$ the minimal edge in $T$ covering $g$, $f$ does not cover $g$.
    \item [type 3] Since $(e,e')$ is either below or crossing, $e$ does not cover $e'$, and thus, neither does $f$.
\end{description}
%    Type 2: Since $e$ covers $f$ and $e$ the minimal edge in $T$ covering $g$, $f$ does not cover $g$.

 %   Type 1: If $(e,e')$ is a below pair, $e'$ covers $e$, which covers $f$, so $e'$ and $f$ do not cross.
   % If $(e,e')$ is a crossing pair, $e'$ and $f$ do not cross, since $f$ does not cover $g$.

  %  Type 3: Since $(e,e')$ is either below or crossing, $e$ does not cover $e'$, and thus, neither does $f$.

    With this, we have shown that $g$ has no incoming conflict edge from any gap in $B_{i,j} \cup B_{i,k}$.
    Removing $g$ from the set and repeating the procedure proves acyclicity.
    By these acyclicity observations, we can choose topological orderings $t_i$, $t_{i,j}$ and $t_{i,j,k}$ for $A_i$, $B_{i,j}$ and $B_{i,j} \cup B_{i,k}$, respectively, for all values of $i,j,k$.

    Let $(e,e')$ be the edge pair of a gap $g$ and let $(f,f')$ be the edge pair of a gap $h$, where $g \in A_i \cup B_{i,j} \cup B_{i,k}$ and $h\in A_j \cup B_{j,k}$, and $i$, $j$ and $k$ are all different.
    Since $e$ neither covers nor crosses $f'$, and does not cover $h$, there is no edge from $g$ to $h$.
    Using this observation for different values of $i$, $j$ and $k$, we get that the concatenation $t_y t_z t_{x,y,z}$ is a topological ordering of $H_x$, and $t_z t_{y,z} t_{x,y,z}$ is a topological ordering of $H_{x,y}$.
    Thus, $H_x$ and $H_{x,y}$ are acyclic.
\end{proof}

We are now ready to obtain the following upper bound on the flip distance $\dist(T,T')$ of two non-crossing trees if one, say $T$, is a stacked tree with three stacks.

\begin{theorem}\label{thm:condupper}
    Let $T$, $T'$ be non-crossing trees on $n\geq 3$ points in convex position.
    Let $T$ be a stacked tree with three stacks.
    Then $\dist(T,T') \leq \nicefrac{14}{9}(n-1)$.
\end{theorem}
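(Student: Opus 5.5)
The plan is to apply \cref{thm:bigtheorem}\eqref{item:big-UB}. Since $2-\frac{14}{9}=\frac49$, it suffices to show that $\las(H)\geq \frac49|V(H)|$ for $H=H(T,T')$. If $V(H)$ is empty, the bound $\frac32(n-1)\le\frac{14}{9}(n-1)$ is immediate. To get the acyclicity ratio, I will average over the six acyclic sets supplied by \cref{lem:threestack}.

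Write $a_i=|A_i|$ and $b_{i,j}=|B_{i,j}|$, so that $|V(H)|=\sum_i a_i+\sum_{i\neq j}b_{i,j}$. Consider all nine sets from \cref{lem:threestack}: the three sets $H_x$ and the six sets $H_{x,y}$ (ordered pairs).
\begin{itemize}
\item Each $a_i$ lies in two of the $H_x$ (those with $x\neq i$) and in two of the $H_{x,y}$ (those with $\{x,y\}=\{1,2,3\}\setminus\{i\}$).
\item Each $b_{i,j}$ lies in exactly one $H_x$, namely $H_i$. Among the $H_{x,y}$ it lies in those with $x=i$ (two of them), and in the one with $(y,z)=(i,j)$, i.e.\ $x=k$. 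That gives three.
\end{itemize}
Taking $\lambda$ times the sum over the $H_x$ plus $\mu$ times the sum over the $H_{x,y}$, the vertices of $A$ receive weight $2\lambda+2\mu$ and the vertices of $B$ receive weight $\lambda+3\mu$. Equalizing gives $\lambda=\mu$, so each vertex is counted four times in total across nine sets. Hence
\[
\max\Big(\max_x |H_x|,\ \max_{x,y}|H_{x,y}|\Big)\ \geq\ \tfrac49|V(H)|.
\]
Therefore $\las(H)\geq\frac49|V(H)|$, and \cref{thm:bigtheorem}\eqref{item:big-UB} yields $\dist(T,T')\le\max\{\frac32,\frac{14}{9}\}(n-1)=\frac{14}{9}(n-1)$.

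The main obstacle is the counting itself: I must check carefully that each $B_{i,j}$ indeed appears in exactly three of the six sets $H_{x,y}$. The terms $B_{x,y}$ and $B_{x,z}$ each contribute when $x=i$, and the term $B_{y,z}$ contributes for the single ordered pair with $y=i$, $z=j$. A second point to verify is that the partition of $\mathcal{P}_N$ really covers every vertex of $H$, which holds since the vertices of $H$ are exactly the near-near pairs.

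If the theorem is meant with fewer nonempty stacks, I would take the remaining stacks to be empty, as \cref{lem:threestack} still applies. Alternatively, that case follows from the separated-caterpillar bound of $\frac32$.
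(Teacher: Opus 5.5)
Your proposal is correct and follows essentially the same route as the paper. You average over the nine acyclic sets $H_x$ and $H_{x,y}$ from \cref{lem:threestack}, observe that every near-near pair lies in exactly four of them (which your count verifies), conclude $\las(H)\geq\frac49|V(H)|$, and plug this into \cref{thm:bigtheorem}\eqref{item:big-UB}, treating the empty-$V(H)$ case separately.
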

\begin{proof}
    Let $H = H(T,T')$ be the conflict graph.
    If $V(H)$ is empty, then $\dist(T,T') \leq \nicefrac32(n-1)$ by \cref{thm:bigtheorem}\eqref{item:big-UB}.
    So assume that $V(H)$ is non-empty.
    Then each element of $V(H)$ is contained in four of the nine acyclic sets $H_1,H_2,H_3$, $H_{1,2},H_{2,1},H_{1,3},H_{3,1},H_{2,3},H_{3,2}$ from \cref{lem:threestack}.
    Thus the average size of these sets is $\nicefrac 19\Sigma_{i=1}^9 \lvert H_i \rvert = \nicefrac 49\lvert V(H)\rvert$ and, consequently, the largest acyclic set is of size at least~$\nicefrac{4}{9}\lvert V(H)\rvert$.
    Plugging this into \cref{thm:bigtheorem}\eqref{item:big-UB}, we obtain $\dist(T,T') \leq \nicefrac{14}{9}(n-1)$.
\end{proof}

\begin{remark}
    For the sake of self-containment, we explain how a flip sequence from $T$ to $T'$ for \cref{thm:condupper} looks like.
    Let $A \subseteq V(H)$ be an acyclic subset of size $|A| \geq \nicefrac{4}{9}|V(H)|$.
    First, flip all edges of $T$ corresponding to pairs that are not in $A$ to their gap on the convex hull in arbitrary order. 
    Then, flip the edges of $T$ corresponding to gaps in $A$ directly to their target location in $T'$ according to a topological ordering of $A$. 
    Lastly, flip all edges that were flipped to their gaps in the first phase to their target location in $T'$, again in arbitrary order.
\end{remark}

%%%%%%%%%%%%%%%%%%%%%%
%%    MANY STACKS   %%
%%%%%%%%%%%%%%%%%%%%%%
\subsection{Trees with arbitrarily many stacks}
We now consider a pair of trees $T$ and $T'$ where $T$ is a stacked tree with any number of stacks.
Let $S_1,\ldots,S_k$ denote the stacks of $T$ with respect to $T'$. 
Let $G$ be the graph with $S_1,\ldots,S_k$ as its vertices and an edge between $S_i$ and $S_j$ with $i\neq j$ if there is a near-near pair $(e,e')$ such that $e$ is in $S_i$ and $e'$ crosses an edge in $S_j$.
As $T'$ is non-crossing, $G$ is a subset of a triangulation of a convex polygon with $k$ vertices.
Therefore $G$ is admits a proper $3$-coloring of its vertices.
Fix any such coloring and let the colors be denoted by $1$, $2$ and $3$.

\begin{figure}[ht]
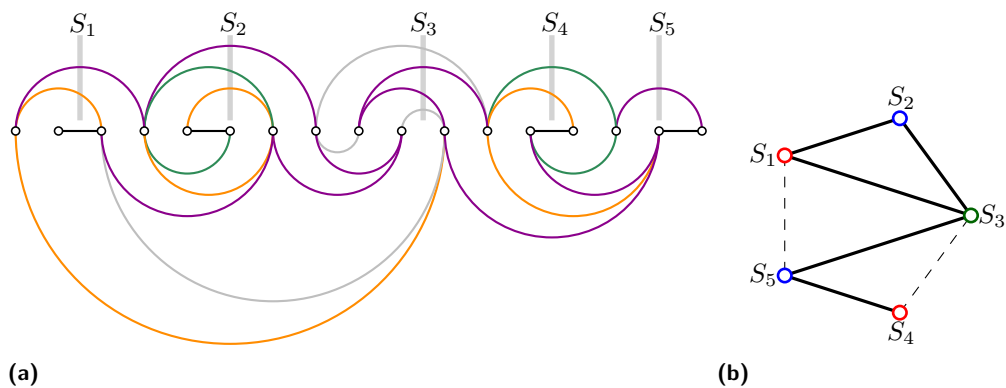

    \centering
    \begin{subfigure}{.65\textwidth}
        \centering
        \includegraphics[page=3]{figures/Stacks-1}
        \subcaption{}
        \label{fig:StacksA}
    \end{subfigure}%
    \hfil
    \begin{subfigure}{.3\textwidth}
        \centering
        \includegraphics[page=2]{figures/Stacks}
        \subcaption{}
        \label{fig:StacksB}
    \end{subfigure}
    \caption{
        (a) A pair of trees where one tree is stacked and (b) the resulting graph $G$ with a proper $3$-coloring. 
    }
    \label{fig:Stacks}
\end{figure}

We partition $\mathcal{P}_N$ into $9$ different sets $A_1,A_2,A_3,B_{1,2},B_{1,3},B_{2,1},B_{2,3},B_{3,1},B_{3,2}$.
A pair $(e,e')\in\mathcal{P}_N$ where  $e$ belongs to stack with color $i$ is put into a set according to the following rules.
\begin{itemize}
    \item If $(e,e')$ is an above pair, then $(e,e') \in A_i$.
    \item If  $(e,e')$ is a below pair or a crossing pair, and $e'$ crosses an edge in a different stack of color $j$, $i\neq j$, then $(e,e') \in B_{i,j}$.
    \item If  $(e,e')$ is a below pair or a crossing pair, and $e'$ does not cross any edge in any other stack, we assign $(e,e')$ arbitrarily to one of the two sets $B_{i,j}$ with $j \neq i$.
\end{itemize}

\cref{lem:allstack}, which is a slight extension of \cref{lem:threestack}, can be used to prove \cref{thm:stackedupper} in a similar way as \cref{thm:condupper} is derived from \cref{lem:threestack}.

\begin{lemma}\label{lem:allstack}
    Let $\{x,y,z\} = \{1,2,3\}$.
    Then $H_x = A_y \cup A_z \cup B_{x,y}\cup B_{x,z}$ and $H_{x,y} = A_z \cup B_{y,z}\cup B_{x,y} \cup B_{x,z}$ are acyclic.
\end{lemma}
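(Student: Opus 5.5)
The plan is to follow the proof of \cref{lem:threestack} almost verbatim, replacing ``stack $S_i$'' by ``the union of all stacks of color $i$''. First I check that the partition is well defined. For a near-near pair $(e,e')$ with $e$ in a stack $S$, every edge of $T$ crossing $e'$ lies in a single stack: $e'$ covers the gap at one end of $e$, and an edge crossing $e'$ must cover the gap at the other end of $e'$. Since stacks are non-nested and totally ordered, that stack is determined. By definition of $G$ this stack is adjacent to $S$, so it has a different color $j\neq i$, and $B_{i,j}$ is well defined.

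Acyclicity of subsets of $A_1\cup A_2\cup A_3$ follows from \cref{ac:ABC_SODA}. For $B_{i,j}\cup B_{i,k}$ I would peel off sources one stack at a time. Take a stack $S$ of color $i$ and its pair $(e,e')$ in $B_{i,j}\cup B_{i,k}$ with $e$ longest. Consider a gap $h$ of another pair $(f,f')$ in $B_{i,j}\cup B_{i,k}$.
\begin{itemize}
\item If $f$ lies in the same stack, the three conflict types are excluded exactly as in \cref{lem:threestack}.
\item If $f$ lies in a different stack $S''$ of the same color $i$, then $f$ is disjoint from $e$, so $f$ does not cover $g$ and does not cover $e'$; this rules out types 2 and 3. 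For type 1, $f$ would cross $e'$. Then $S$ and $S''$ would be adjacent in $G$, contradicting that the coloring is proper.
\end{itemize}
Hence $g$ has no incoming conflict from within $B_{i,j}\cup B_{i,k}$, and removing it and repeating yields a topological order $t_{i,j,k}$. Likewise we get orders $t_i$ for $A_i$ and $t_{i,j}$ for $B_{i,j}$.

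The key step is the cross-color claim. Let $g\in A_i\cup B_{i,j}\cup B_{i,k}$ with pair $(e,e')$, and $h\in A_j\cup B_{j,k}$ with pair $(f,f')$, where $i,j,k$ are distinct. I would show there is no conflict $g\to h$.
\begin{itemize}
\item Type 2 needs $e$ to cover $h$. This is impossible because $e$ and $f$ lie in different, hence disjoint, stacks, and $f$ is the shortest edge covering $h$.
\item Type 1 needs $e$ to cross $f'$. This is impossible: if $h\in B_{j,k}$, its $f'$ crosses only edges of color $k$ (or of no other stack). If $h\in A_j$, then $f'$ is covered by $f$, so it crosses no edge of another stack.
\item Type 3 needs $e$ to cover $f'$, where $f'$ covers $h$. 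Then $e$ would cover $h$, which was just excluded.
\end{itemize}
This is the step I expect to need the most care, since the $A_j$ and $B_{j,k}$ cases must be checked separately and it uses that crossing stacks receive different colors.

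With this claim, the concatenations $t_y t_z t_{x,y,z}$ and $t_z t_{y,z} t_{x,y,z}$ are topological orders of $H_x$ and $H_{x,y}$, respectively. The argument is the same as in \cref{lem:threestack}, with the claim applied to the relevant triples of indices.
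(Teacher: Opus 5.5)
Your proposal is correct and follows the paper's proof. Like the paper, you reuse \cref{lem:threestack} and handle the only genuinely new case, two pairs of $B_{i,j}\cup B_{i,k}$ from different stacks of the same color, by disjointness (types 2 and 3) and by properness of the coloring of $G$ (type 1); taking the longest edge per stack instead of globally is an inessential variation, and your explicit check that no conflict goes from $A_i\cup B_{i,j}\cup B_{i,k}$ to $A_j\cup B_{j,k}$ fills in what the paper only asserts carries over verbatim.
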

\begin{proof}
    The proof of \cref{lem:threestack} goes through verbatim, except that we need to show that $B_{i,j} \cup B_{i,j'}$ is acyclic in this altered setting.
    We again pick a gap $g$ given by $(e,e')$ in $B_{i,j} \cup B_{i,j'}$ where $e$ is of maximal length, and want to show that there is no edge $h\to g$, where $h$ is a gap associated to another pair $(f,f')$ in $B_{i,j} \cup B_{i,j'}$.
    In \cref{lem:threestack}, we dealt with the case where $e$ and $f$ are in the same stack.
    We now handle the case where they are in different stacks.
    We immediately have that $f$ does not cover $g$ or $e'$, so there is no edge $h\to g$ of type 2 or 3.
    The stacks of $g$ and $h$ have the same color, so by definition of the coloring, $e'$ does not cross $f$, which means that there is no edge $h\to g$ of type 1, either.
    We conclude that $B_{i,j} \cup B_{i,j'}$ is acyclic for the same reasons as in the proof of \cref{lem:threestack}.
\end{proof}

\begin{theorem}\label{thm:stackedupper}
    Let $T$, $T'$ be non-crossing trees on $n\geq 3$ points in convex position.
    Let $T$ be stacked with respect to $T'$.
    Then $\dist(T,T') \leq \nicefrac{14}{9}(n-1)$.
\end{theorem}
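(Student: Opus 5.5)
The plan is to mirror the proof of \cref{thm:condupper}, with \cref{lem:allstack} taking the place of \cref{lem:threestack}.

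First I dispose of the degenerate case. If $V(H)$ is empty for $H = H(T,T')$, then \cref{thm:bigtheorem}\eqref{item:big-UB} gives $\dist(T,T') \leq \nicefrac32(n-1) \leq \nicefrac{14}{9}(n-1)$. So from now on I assume $V(H) \neq \emptyset$.

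Next I fix the setup. I take the stacks $S_1,\ldots,S_k$ of $T$ with respect to $T'$ and build the auxiliary graph $G$. I fix a proper $3$-coloring of $G$ and partition $\mathcal{P}_N$, which I identify with $V(H)$, into the nine sets $A_1,A_2,A_3$ and $B_{i,j}$ for $i \neq j$, exactly as described before \cref{lem:allstack}. Two points need checking here.

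\begin{itemize}
\item \emph{The coloring exists.} The paper claims that $G$ is a subgraph of a triangulation of a convex $k$-gon. I would verify this by ordering the stacks along the spine according to the leftmost gap each one covers. Each adjacency in $G$ is witnessed by an edge $e'$ of $T'$, and these witnesses are pairwise non-crossing. This yields an outerplanar drawing of $G$, and outerplanar graphs are $3$-colorable.
\item \emph{The partition is well defined.} For a below or crossing pair $(e,e')$ with $e \in S_i$, all edges of $T$ crossing $e'$ lie in a single stack. The argument is the one used in the three-stack case: the gap at the far end of $e'$ is covered by any crossing edge $f$. Since stacks are disjoint or nested only within a stack, this forces $f$ into one stack. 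That stack differs from $S_i$, so it has color $j \neq i$, which makes $B_{i,j}$ well defined.
\end{itemize}

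The counting step is as follows. Every element of $V(H)$ lies in exactly four of the nine sets $H_1,H_2,H_3,H_{1,2},H_{2,1},H_{1,3},H_{3,1},H_{2,3},H_{3,2}$, and I verify this by direct inspection.

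\begin{itemize}
\item An element of $A_i$ lies in the two sets $H_x$ with $x \neq i$. It also lies in the two sets $H_{x,y}$ with $\{x,y\} = \{1,2,3\}\setminus\{i\}$.
\item An element of $B_{i,j}$ lies in $H_i$. It also lies in $H_{i,y}$ for both $y \neq i$, and in $H_{x,i}$ where $x$ is the remaining color, because that set contains $B_{i,z}$ for $z \neq i,x$, namely $z=j$.
\end{itemize}

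By \cref{lem:allstack}, all nine sets are acyclic. Averaging gives $\las(H) \geq \nicefrac{4}{9}|V(H)|$, so
\[
    2 - \frac{\las(H)}{|V(H)|} \leq \frac{14}{9}.
\]
Plugging this into \cref{thm:bigtheorem}\eqref{item:big-UB} yields $\dist(T,T') \leq \max\{\nicefrac32, \nicefrac{14}{9}\}(n-1) = \nicefrac{14}{9}(n-1)$.

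I expect the main obstacle to be the counting step, specifically confirming that each $B_{i,j}$ lands in exactly four sets, since the membership pattern of $B$-sets in $H_{x,y}$ is asymmetric. Much of the real content sits in \cref{lem:allstack}, which I take as given. A secondary subtlety is the $3$-colorability of $G$, which relies on the non-crossing witnesses from $T'$. Given the ordering above, that part should be routine.
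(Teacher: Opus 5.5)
Your proposal is correct and follows the paper's proof. The paper derives this theorem exactly as \cref{thm:condupper}: it uses the nine acyclic sets from \cref{lem:allstack}, observes that each vertex of $H$ lies in four of them, and averages to get $\las(H)\geq \nicefrac{4}{9}|V(H)|$ before applying \cref{thm:bigtheorem}\eqref{item:big-UB}. Your explicit membership count and your outerplanarity sketch for $G$ only fill in details the paper leaves implicit.
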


We remind the reader that if $T$ is stacked, then it is stacked with respect to any $T'$.
Thus, the same statement without ``with respect to $T'$'' is true a fortiori.

% \newpage

%%%%%%%%%%%%%%%%%%%%%
%%                 %%
%%   LOWER BOUND   %%
%%                 %%
%%%%%%%%%%%%%%%%%%%%%
\section{A new lower bound}\label{sec:newlowerbound}
In this section, we improve the lower bound on the diameter $\diam(\mathcal{F}_n)$ of the flip graph of non-crossing spanning trees on $n$ points in convex position.
To this end, we iteratively construct for each integer $\ell \geq 1$ a pair of trees $(T^\ell,\tilde T^\ell)$, and show that the corresponding conflict graph $H(T^\ell,\tilde T^\ell)$ has $7\ell$ vertices and the largest acyclic subsets of size at most $3\ell+1$.
The latter divided by the former tends to~$\nicefrac 37$ as $\ell$ tends to infinity, which allows us to use \cref{thm:bigtheorem}\eqref{item:big-LB} to prove that $\diam(\mathcal{F}_n) \geq \nicefrac{11}{7}\cdot n - o(n)$.
            
We start with the trees $T^1$ and $\tilde T^1$ illustrated in \cref{fig:first}.
These are trees on vertices $v_1, \dots, v_8, x, y, v_9$ labeled from left to right along the spine and have short edges $v_2v_3$, $v_6v_7$ and $xy$ as well as edges $e_i$ and $\tilde e_i$ for $i=1,\dots, 7$, respectively.
For completeness, we give a formal definition of $T^1$ and $\tilde T^1$.
Both trees have short edges $v_2v_3$, $v_6v_7$ and $xy$. 
Tree $T^1$ has near edges $e_1 = v_1v_5, e_2 = v_2v_4, e_3 = v_2v_5, e_4 = v_5x, e_5 = v_6v_8, e_6 = v_6x,  e_7 = xv_9$.
Tree $\tilde T^1$ has near edges $\tilde e_1 = v_1v_3, \tilde e_2 = v_3y, \tilde e_3 = v_4v_7, \tilde e_4 = v_5v_7, \tilde e_5 = v_4v_8, \tilde e_6 = v_8y,  \tilde e_7 = v_3v_9$.
The near-near pairs are $(e_i, \tilde e_i)$ with corresponding gap $g_i$ for all $i$.
Define $e_j^1=e_j$, $\tilde e_j^1=\tilde e_j$ and $g_j^1=g_j$ for all $1\leq j\leq 7$.

\begin{figure}[ht]
    \centering
    \begin{subfigure}{.45\textwidth}
        \centering
        \includegraphics[page=1]{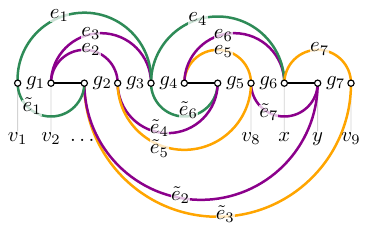}
        \subcaption{}
    \end{subfigure}\hfil\begin{subfigure}{.55\textwidth}
        \centering
        \includegraphics[page=2]{figures/Low_constructionN}
        \subcaption{}
    \end{subfigure}
    \caption{
        Illustration of $T^1$ and $\tilde T^1$ and the double conflicts in the corresponding conflict graph.
    }
    \label{fig:first}
\end{figure}

The trees $T^1$ and $\tilde T^1$ are shown in \cref{fig:first}.
The double conflicts in their conflict graph $H(T^1,\tilde T^1)$ form a path as illustrated in the figure.
Clearly, $H(T^1,\tilde T^1)$ has $7$ vertices and the largest acyclic subsets of size at most $4$.

Next, we construct $T^\ell$ and $\tilde T^\ell$ inductively for $\ell\geq 2$.
Assume that $T^{\ell-1}$ and $\tilde T^{\ell-1}$ are given. 
Add vertices $v_1^\ell, \dots, v_5^\ell$ from left to right to the left of $v_1^{\ell-1}$ and let $v_6^\ell=v_2^{\ell-1}$.
Further, add a vertex $v_7^\ell$ between $v_3^{\ell-1}$ and $v_4^{\ell-1}$, a vertex $v_8^\ell$ between $v_8^{\ell-1}$ and $x$, and a vertex $v_9^\ell$ to the right of $v_9^{\ell-1}$.
For an illustration consider \cref{fig:second}.

\begin{figure}[ht]
    \centering
    \includegraphics[page=3]{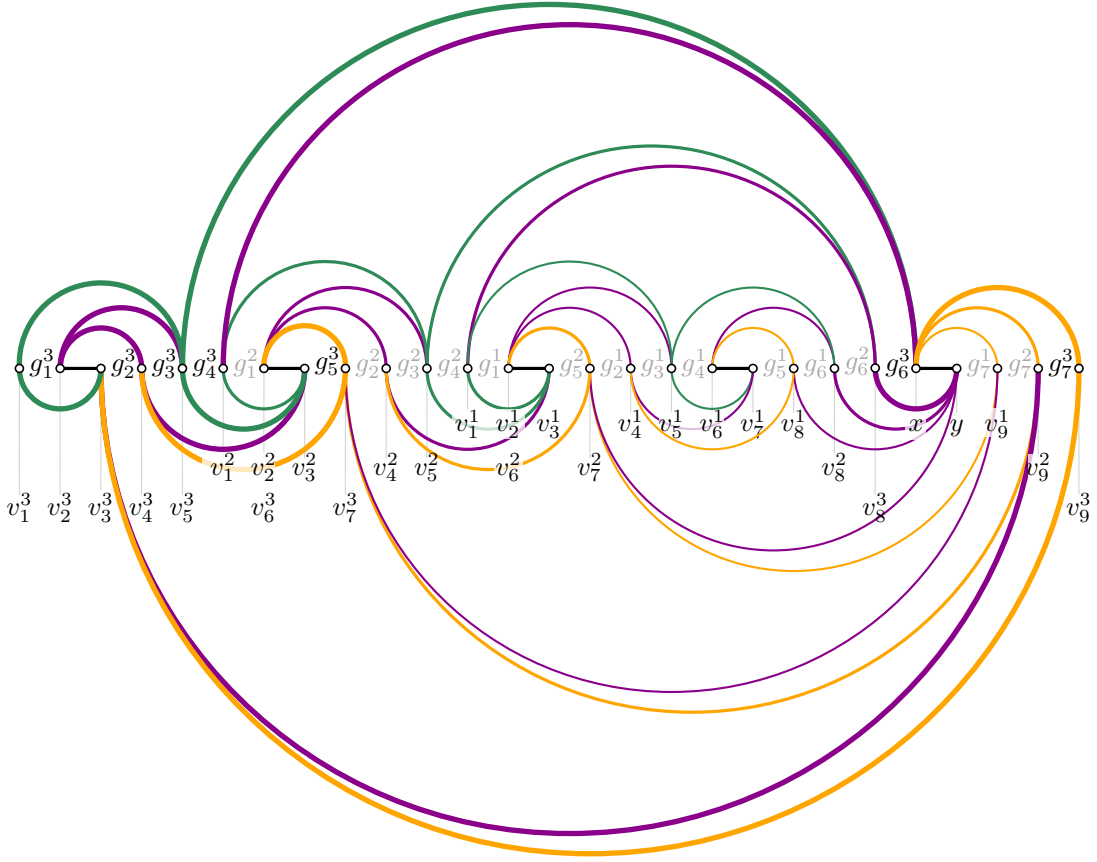}
    \caption{
        Illustration for the iterative construction of $T^\ell$ and $\tilde T^\ell$ from $T^{\ell-1}$ and $\tilde T^{\ell-1}$ for $\ell = 3$.
    }
    \label{fig:second}
\end{figure}
            
The edges in $T^{\ell-1}$ and $\tilde T^{\ell-1}$ are also present in $T^\ell$ and $\tilde T^\ell$, respectively, except that the right endpoint of $e_4^{\ell-1}$ and $e_6^{\ell-1}$ (which was $x$) is changed to $v_8^\ell$, and the left endpoint of $\tilde e_2^{\ell-1}$ and $\tilde e_7^{\ell-1}$ (which was $v_3^{\ell-1}$) is changed to $v_7^\ell$.
Correspondingly, we change $g_2^{\ell-1}$ to $v_7^\ell v_4^{\ell-1}$ and $g_6^{\ell-1}$ to $v_8^{\ell-1} v_8^\ell$.
We put a short edge $v_2^\ell v_3^\ell$ in both $T^\ell$ and $\tilde T^\ell$.
Lastly, we add the following edges to $T^\ell$ and $\tilde T^\ell$
\begin{align*}
    &e_1^\ell = v_1^\ell v_5^\ell, \hspace{.4cm} e_2^\ell = v_2^\ell v_4^\ell, \hspace{.4cm} e_3^\ell = v_2^\ell v_5^\ell, \hspace{.4cm} e_4^\ell = v_5^\ell x, \hspace{.4cm} e_5^\ell = v_6^\ell v_7^\ell, \hspace{.4cm} e_6^\ell = v_1^{\ell-1} x, \hspace{.4cm} e_7^\ell = xv_9^\ell,\\
    &\tilde e_1^\ell = v_1^\ell v_3^\ell, \hspace{.4cm} \tilde e_2^\ell = v_3^\ell v_9^{\ell-1}, \hspace{.1cm} \tilde e_3^\ell = v_4^\ell v_3^{\ell-1}, \hspace{.1cm} \tilde e_4^\ell = v_5^\ell v_3^{\ell-1}, \hspace{.1cm} \tilde e_5^\ell = v_4^\ell v_7^\ell, \hspace{.4cm} \tilde e_6^\ell = v_8^\ell y, \hspace{.6cm} \tilde e_7^\ell = v_3^\ell v_9^\ell,
\end{align*}
with corresponding gaps
\begin{align*}
    &g_1^\ell = v_1^\ell v_2^\ell, \hspace{.4cm} g_2^\ell = v_3^\ell v_4^\ell, \hspace{.4cm} g_3^\ell = v_4^\ell v_5^\ell, \hspace{.4cm} g_4^\ell = v_5^\ell v_1^{\ell-1}, \hspace{.4cm} g_5^\ell = v_3^{\ell-1} v_7^\ell, \hspace{.4cm} g_6^\ell = v_8^\ell x, \hspace{.4cm} g_7^\ell = v_9^{\ell-1} v_9^\ell.
\end{align*}
            
We show the following properties for $T^\ell$ and $\tilde T^\ell$.

\begin{lemma}\label{lem:lower_bound_construction}
    The following are true for $T^\ell$ and $\tilde T^\ell$:
    \begin{enumerate}[(1)]
        \item $T^\ell$ and $\tilde T^\ell$ are both non-crossing spanning trees,\label{item:new-LB-1}
        \item their near-near pairs are $(e_i^j,\tilde e_i^j)$, with associated gaps $g_i^j$, for $1\leq i\leq 7$ and $1\leq j\leq \ell$,\label{item:new-LB-2}
        \item for $1\leq j\leq \ell$, the conflict graph has bidirected edges $g_7^j \leftrightarrow g_2^j \leftrightarrow g_1^j \leftrightarrow g_3^j \leftrightarrow g_5^j \leftrightarrow g_4^j \leftrightarrow g_6^j$,\label{item:new-LB-3}
        \item for $1\leq j< j' \leq \ell$, the conflict graph has a bidirected edge $g_6^j \leftrightarrow g_7^{j'}$.\label{item:new-LB-4}
    \end{enumerate}
\end{lemma}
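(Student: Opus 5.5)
We argue by induction on $\ell$, proving all four items at once. For $\ell=1$ everything is read off from the explicit definition of $T^1$ and $\tilde T^1$ (\cref{fig:first}). We list the eleven vertices in spine order, check pairwise that no two edges of $T^1$ cross and likewise for $\tilde T^1$, and compute for each of the ten gaps the shortest covering edge in each tree. This gives the three short–short pairs and the seven near–near pairs $(e_i,\tilde e_i)$ at $g_i$. A count of $10$ edges on $11$ vertices together with the gap–edge bijection and \cite[Lemma~3.1]{bjerkevik2024flippingnoncrossingspanningtrees} shows that both edge sets are trees. The double conflicts of item \eqref{item:new-LB-3} for $j=1$ are then verified one at a time. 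In each case we exhibit a type~1 crossing (for example, $e_i$ crosses $\tilde e_{i'}$) or a type~2 or type~3 covering configuration in each direction.

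For the inductive step from $\ell-1$ to $\ell$, we first record how the modification affects the old structure. The only old edges that change are $e_4^{\ell-1},e_6^{\ell-1}$, whose endpoint $x$ becomes $v_8^\ell$, and $\tilde e_2^{\ell-1},\tilde e_7^{\ell-1}$, whose endpoint $v_3^{\ell-1}$ becomes $v_7^\ell$. Since $v_8^\ell$ and $v_7^\ell$ are inserted immediately next to the old endpoints, the relative order of all old endpoints is unchanged, so crossing and covering relations among old edges are preserved, with one exception: relations involving the new points themselves must be checked. We then add the new vertices and edges, check non-crossing of the new edges against all edges in the same tree, and verify that each new gap $g_i^\ell$ has $e_i^\ell$, respectively $\tilde e_i^\ell$, as its shortest covering edge. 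We also verify that the two relabeled gaps $g_2^{\ell-1}=v_7^\ell v_4^{\ell-1}$ and $g_6^{\ell-1}=v_8^{\ell-1}v_8^\ell$ are still assigned to the same old edges, and that every other old gap keeps its edges because no new edge is shorter over it. A vertex and edge count (each tree gains $8$ edges and $8$ vertices) together with the bijection then gives item~\eqref{item:new-LB-1}. Each new pair shares exactly one endpoint with its gap, so it is near–near, and the old short–short and near–near pairs are unchanged; this gives item~\eqref{item:new-LB-2}.

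For item~\eqref{item:new-LB-3}, the old double conflicts survive because the conflict types depend only on crossing and covering, which the modification preserves. We must check in particular that the modified edges $e_4^{\ell-1},e_6^{\ell-1},\tilde e_2^{\ell-1},\tilde e_7^{\ell-1}$ still realize their old conflicts; this holds since $v_8^\ell$ and $v_7^\ell$ lie in the same intervals relative to the old edges. The new level is built as a shifted copy of the level-$1$ pattern, so its six bidirected edges are checked as in the base case.

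For item~\eqref{item:new-LB-4}, we show $g_6^{\ell-1}\leftrightarrow g_7^\ell$ directly. The edge $e_6^{\ell-1}=v_6^{\ell-1}v_8^\ell$ crosses $\tilde e_7^\ell=v_3^\ell v_9^\ell$? It does not: $\tilde e_7^\ell$ covers it. So here we use a covering type instead. The new $e_7^\ell=xv_9^\ell$ crosses $\tilde e_6^{\ell-1}=v_8^{\ell-1}y$, giving a type~1 conflict in one direction. In the other direction, $\tilde e_7^\ell$ covers $e_6^{\ell-1}$, and $e_6^{\ell-1}$ covers $g_7^\ell$? It does not, so instead we use type~3. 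We also need this for $j<\ell-1$. Our plan is to show by induction an invariant: for every $j<\ell$, the edge $e_6^j$ has right endpoint inside $[v_8^{j}, x]$, and $\tilde e_6^j$ ends at $y$. Since every $e_7^{j'}$ starts at $x$ and every $\tilde e_7^{j'}$ starts left of all older vertices, the same crossing and covering relations hold for all pairs $j<j'$. The main obstacle is this bookkeeping: correctly identifying, for each claimed double conflict, which of the three types certifies each direction, and confirming that the endpoint reassignments never destroy one. We would organize the verification as a table of the $7$ intra-level and the cross-level conflicts with their certifying type.
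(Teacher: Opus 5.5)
Your treatment of items (1)--(3) follows the paper's induction, and all conflicts in (3) are indeed type~1 crossings. Two points in that part need correcting:
\begin{itemize}
\item For $\ell\ge 3$ the left end of $e_6^{\ell-1}$ is $v_1^{\ell-2}$, not $v_6^{\ell-1}=v_2^{\ell-2}$.
\item The new level is not literally a shifted copy of level~1: $e_5^\ell$, $e_6^\ell$, $\tilde e_2^\ell$, $\tilde e_3^\ell$, $\tilde e_4^\ell$ end at old vertices. The six new crossings should therefore be checked directly; they do hold.
\end{itemize}

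The genuine gap is item~(4). You correctly note that $\tilde e_7^{j'}$ covers $e_6^j$ and that type~2 fails, but then write ``so instead we use type~3''. A type~3 conflict $\overrightarrow{g_6^jg_7^{j'}}$ requires $e_6^j$ to cover $\tilde e_7^{j'}$, and the covering goes the other way. In fact, for $j<j'$ there is no conflict $\overrightarrow{g_6^jg_7^{j'}}$ of any type:
\begin{itemize}
\item The left end of $\tilde e_7^{j'}$ is $v_3^{j'}$, or $v_7^{j'+1}$ once level $j'+1$ exists. Either way it lies left of $v_1^{j'-1}$, and its right end $v_9^{j'}$ lies right of $x$.
\item Both endpoints of $e_6^j$ (namely $v_6^1$ or $v_1^{j-1}$, and $v_8^{j+1}$) lie in $[v_1^{j'-1},x)$.
\item So $e_6^j$ is strictly nested under $\tilde e_7^{j'}$, which rules out types~1 and~3.
\item Since $e_6^j$ ends left of $x$, it does not cover $g_7^{j'}=v_9^{j'-1}v_9^{j'}$, which rules out type~2.
\end{itemize}
Already for $\ell=2$, $e_6^1=v_6^1v_8^2$ lies under $\tilde e_7^2=v_3^2v_9^2$. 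Your proposed invariant only reconfirms this nesting. So item~(4) as printed cannot be established by any bookkeeping: with the construction as defined, the indices are swapped. The paper's own proof asserts that all conflicts in (3)--(4) are type~1 and checks crossings, so it too only goes through for the swapped pair.

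What is true, with both directions of type~1, is $g_7^j\leftrightarrow g_6^{j'}$ for $j<j'$:
\begin{itemize}
\item $e_7^j=xv_9^j$ crosses $\tilde e_6^{j'}=v_8^{j'}y$, since $v_8^{j'}<x<y<v_9^j$.
\item $e_6^{j'}$ runs from $v_1^{j'-1}$ to $x$, or to $v_8^{j'+1}$ once level $j'+1$ exists. It crosses $\tilde e_7^j=v_7^{j+1}v_9^j$, since $v_1^{j'-1}\le v_1^j<v_7^{j+1}<v_8^1$ and the right end of $e_6^{j'}$ lies strictly between $v_8^1$ and $v_9^j$.
\item Later endpoint shifts preserve both crossings, as in your argument for (3).
\end{itemize}
This swapped version is exactly what \cref{lem:lower_bound_ac} needs. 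Two levels $j<j'$ that each contribute the four gaps $g_7,g_1,g_5,g_6$ put both $g_7^j$ and $g_6^{j'}$ into the set. Replace your argument for (4) with this crossing check rather than looking for a covering-type certificate, which does not exist.
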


\begin{figure}[h]
    \centering
    \includegraphics[page=4]{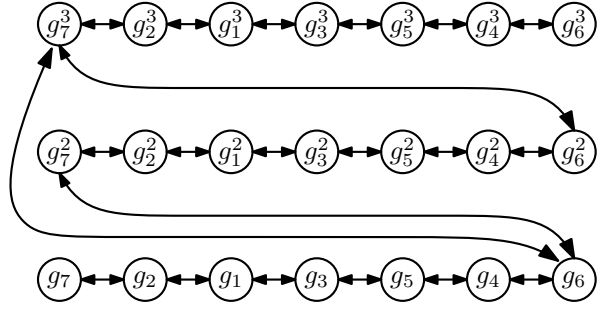}
    \caption{
        Double conflicts in the conflict graph $H(T^\ell,\tilde T^\ell)$ for $l=3$.
    }
    \label{fig:conflictGraph}
\end{figure}

\begin{proof}
    Checking that \eqref{item:new-LB-1}--\eqref{item:new-LB-3} hold for $\ell=1$ we leave to the reader.
    For $\ell \geq 2$ we proceed by induction.

    \begin{description}
        \item[\eqref{item:new-LB-1}:] 
            First, we show that $T^\ell$ is a non-crossing spanning tree.
            In fact, the edges $xy, e_7^1,\dots, e_7^\ell$ form a spanning tree $S$ on the vertices $x,y, v_9^1,\dots, v_9^\ell$, which is non-crossing, since they all share the vertex $x$.
            We claim that the rest of the edges of $T^\ell$ form a non-crossing spanning tree on the vertices to the left of $x$, including $x$.
            This is true for $i=1$, so we can take this as part of the induction hypothesis.
            The edges $v_2^\ell v_3^\ell, e_1^1,\dots, e_4^\ell, e_6^\ell$ form a non-crossing spanning tree $S'$ on the vertices $v_1^\ell,\dots, v_5^\ell, v_1^{\ell-1}, x$.
            Moreover, by the induction hypothesis, the edges we have not yet accounted for except $e_5$ form a spanning tree $S''$ on the vertices from $v_1^{\ell-1}$ to $v_8^\ell$ (taking into account the shifting of the endpoints of $e_4^\ell$ and $e_6^\ell$) except $v_7^\ell$.
            Lastly, $e_5^\ell$, connects this tree to $v_7^\ell$.
            We have that $S'$ shares one vertex with $e_5^\ell$ and one with $S''$, which shares one vertex with $S$, so the union is a spanning tree on the full set of vertices.

            There are no distinct vertices $a,b,c,d$ in order from left to right or from right to left with $a,c\in S$ and $b,d\in S'$, so no edge in $S$ crosses an edge in $S'$.
            Similar arguments exclude all crossings except between $e_5^\ell$ and an edge in $S''$.
            But the only edge in $T^\ell$ with an endpoint strictly between the endpoints of $e_5^\ell$ is short, so $e_5^\ell$ does not cross any edge.
            Thus, $T^\ell$ is a non-crossing spanning tree.
            
            Next, we show that $\tilde T^\ell$ is a non-crossing spanning tree.
            The only edges incident to the two leftmost vertices $v_1^\ell$ and $v_2^\ell$ are $\tilde e_1^\ell$ and $v_2^\ell v_3^\ell$.
            Thus, it suffices to show that what remains after removing $v_1^\ell$, $v_2^\ell$, $\tilde e_1^\ell$ and $v_2^\ell v_3^\ell$ forms a non-crossing spanning tree, and we can take this as part of our induction hypothesis.
            The edges $\tilde e_2^\ell, \dots, \tilde e_5^\ell, \tilde e_7^\ell$ form a non-crossing spanning forest whose connected components have vertex sets $\{v_4^\ell, v_5^\ell, v_3^{\ell-1}, v_7^\ell\}$ and $\{v_3^\ell, v_9^{\ell-1}, v_9^\ell\}$.
            By the induction hypothesis, the rest of the edges except $\tilde e_1^{\ell-1}$, $v_2^{\ell-1}v_3^{\ell-1}$ and $\tilde e_6^\ell$ form a non-crossing spanning tree on the vertices from $v_7^\ell$ to $v_9^{\ell-1}$ except $v_8^\ell$.
            (Note that we shifted the left endpoint of $\tilde e_2^{\ell-1}$ and $\tilde e_7^{\ell-1}$ from $v_3^{\ell-1}$ to $v_7^\ell$.)
            Taking the union of the tree and the forest, we get a spanning tree on the union of the vertex sets.
            This tree is non-crossing for reasons similar to our argument above that no edge in $S$ crosses any edge in $S'$.
            Finally, we add $v_1^{\ell-1}$ and $v_2^{\ell-1}$ with $\tilde e_1^{\ell-1}$ and $v_2^{\ell-1}v_3^{\ell-1}$, and $v_8^\ell$ with $\tilde e_6^\ell$ to obtain a non-crossing spanning tree on the full set of vertices.

        \item[\eqref{item:new-LB-2}:]
            The gaps $g_1^\ell, \dots, g_4^\ell, g_6^\ell, g_7^\ell$ are only covered by edges of the form $e_i^\ell$ in $T^\ell$, and the gaps $g_1^\ell, \dots, g_5^\ell, g_7^\ell$ are only covered by edges of the form $\tilde e_i^\ell$ in $\tilde T^\ell$.
            Moreover, the shortest edge in $T^\ell$ covering $g_5^\ell$ is $e_5^\ell$, and the shortest edge in $\tilde T^\ell$ covering $g_6^\ell$ is $\tilde e_6^\ell$.
            Thus, all the edges $e_1^\ell, e_7^\ell$ and $\tilde e_1^\ell, \tilde e_7^\ell$ are associated to the gaps $g_1^\ell, g_7^\ell$, and one can check that $e_i^\ell$ and $\tilde e_i^\ell$ are associated to $g_i^\ell$ for all $1\leq i\leq 7$.
            By the induction hypothesis, the same holds for $j<\ell$ in $T^{\ell-1}$ and $\tilde T^{\ell-1}$.
            Since we have not changed any covering relations between edges and gaps $e_i^j$, $g_i^j$, or between edges and gaps $\tilde e_i^j$, $g_i^j$ for $1\leq i\leq 7$ and $1\leq j<\ell$, even with our changes of endpoints, we preserve all the pairings of edges and gaps that we need to prove \eqref{item:new-LB-2}.

        \item[\eqref{item:new-LB-3} and \eqref{item:new-LB-4}:]
            First note that all conflicts mentioned in \eqref{item:new-LB-3} and \eqref{item:new-LB-4} are of type 1.
            Thus, it suffices to check that for every edge $g_i^j \leftrightarrow g_{i'}^{j'}$ that we claim, $e_i^j$ crosses $\tilde e_{i'}^{j'}$, and $e_{i'}^{j'}$ crosses $\tilde e_i^j$.
            The changing of endpoints in our construction does not destroy any crossings, so \eqref{item:new-LB-3} can be checked straightforwardly for $j=\ell$ and similarly, \eqref{item:new-LB-4} can be checked for $j'=\ell$.\qedhere
    \end{description}
\end{proof}
		
\begin{lemma}
    \label{lem:lower_bound_ac}
    The conflict graph $H(T^\ell, \tilde T^\ell)$ has $7\ell$ vertices and its largest acyclic sets have size $3\ell+1$.
\end{lemma}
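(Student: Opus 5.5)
The plan is to treat the two bounds separately. The vertex count is immediate from \cref{lem:lower_bound_construction}\eqref{item:new-LB-2}: the near-near pairs, and hence the vertices of $H = H(T^\ell,\tilde T^\ell)$, are exactly the gaps $g_i^j$ with $1\le i\le 7$ and $1\le j\le \ell$, so $|V(H)| = 7\ell$.

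For the upper bound $\las(H)\le 3\ell+1$, I will use only the double conflicts from \cref{lem:lower_bound_construction}\eqref{item:new-LB-3} and~\eqref{item:new-LB-4}. Any two gaps joined by a bidirected edge form a $2$-cycle, so an acyclic set $S$ is an independent set in the graph $D$ of double conflicts.

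\begin{itemize}
\item For each level $j$, the gaps $g_7^j, g_2^j, g_1^j, g_3^j, g_5^j, g_4^j, g_6^j$ form a path on $7$ vertices in $D$. Hence $|S \cap \{g_1^j,\dots,g_7^j\}| \le 4$.
\item The only independent $4$-set of a $7$-vertex path consists of positions $1,3,5,7$, namely $\{g_7^j,g_1^j,g_5^j,g_6^j\}$. In particular, level $j$ contributes $4$ only if both $g_6^j\in S$ and $g_7^j\in S$.
\item Suppose two levels $j<j'$ both contributed $4$. Then $g_6^j, g_7^{j'}\in S$, contradicting the double conflict $g_6^j\leftrightarrow g_7^{j'}$ from \eqref{item:new-LB-4}.
\end{itemize}

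So at most one level contributes $4$ and every other level contributes at most $3$, giving $|S|\le 3\ell+1$.

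For the matching lower bound, I will exhibit an explicit acyclic set of size $3\ell+1$. A natural candidate takes $\{g_7^1,g_1^1,g_5^1,g_6^1\}$ at level $1$, and at each level $j\ge 2$ three pairwise non-adjacent gaps of the path that avoid $g_7^j$, for instance $\{g_2^j,g_3^j,g_4^j\}$. These choices avoid every double conflict: within a level the chosen gaps are non-adjacent on the path, and across levels $g_7^{j'}$ is never chosen for $j'\ge 2$, so no edge $g_6^j\leftrightarrow g_7^{j'}$ is used.

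It remains to rule out longer directed cycles through one-directional conflicts, which I expect to be the main obstacle. I would mimic the proof of \cref{lem:reduction-acyclic-set}. First classify each $g_i^j$ as above, below or crossing; this classification is periodic in $j$ by the construction. Then check that every remaining mixed conflict between chosen gaps either starts at an above gap or ends at a below gap. If so, concatenating topological orders of the chosen $A$-, $C$- and $B$-gaps (each acyclic by \cref{ac:ABC_SODA}) gives a topological order of $S$.

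If some level's triple fails this check, I would try the other independent $3$-sets of the $7$-path, for example $\{g_2^j,g_3^j,g_6^j\}$ or $\{g_1^j,g_5^j,g_6^j\}$. Any of these avoids the cross-level conflicts as long as $g_7^j$ is excluded for $j\ge 2$.

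The cross-level one-way conflicts come only from the nesting of long edges such as $e_4^j, e_6^j, e_7^j$ and $\tilde e_2^j,\tilde e_7^j$. This suggests processing the levels in increasing order of $j$ to obtain the topological order.
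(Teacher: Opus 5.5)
Your vertex count and your proof of $\las(H(T^\ell,\tilde T^\ell))\le 3\ell+1$ are exactly the paper's argument. The paper's proof also stops there: it never exhibits an acyclic set of size $3\ell+1$, because only the upper bound is used for the $11/7$ theorem. So the new part of your proposal is the matching lower bound, and that part does not work as written.

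\textbf{Missing double conflicts.} Item~(4) of \cref{lem:lower_bound_construction} only asserts that certain bidirected edges exist; it does not list all of them. Consecutive levels have further double conflicts, for instance $g_2^j\leftrightarrow g_4^{j+1}$. Here $e_2^j=v_2^jv_4^j$ crosses $\tilde e_4^{j+1}=v_5^{j+1}v_3^j$. Also $e_4^{j+1}$ (from $v_5^{j+1}$ to $v_8^{j+2}$, or to $x$) crosses $\tilde e_2^j$ (from $v_7^{j+1}$ to $v_9^{j-1}$, or to $y$ if $j=1$). Likewise $g_1^j\leftrightarrow g_5^{j+1}$. Hence your witness, which uses $\{g_2^j,g_3^j,g_4^j\}$ at every level $j\ge 2$, contains the $2$-cycle $g_2^2\leftrightarrow g_4^3$ as soon as $\ell\ge 3$.

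\textbf{The $g_6/g_7$ conflict runs the other way.} A direct check shows it is $g_7^j\leftrightarrow g_6^{j'}$ for $j<j'$. For $\ell=2$, $e_6^2=v_1^1x$ crosses $\tilde e_7^1=v_7^2v_9^1$ and $e_7^1=xv_9^1$ crosses $\tilde e_6^2=v_8^2y$. Meanwhile $\tilde e_7^2=v_3^2v_9^2$ covers $e_6^1=v_6^1v_8^2$, so $g_6^1$ and $g_7^2$ are not in double conflict. This is harmless for the upper bound, since a four-level contains both $g_6$ and $g_7$. But with $g_7^1$ in your set it rules out your fallback triples $\{g_2^j,g_3^j,g_6^j\}$ and $\{g_1^j,g_5^j,g_6^j\}$.

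\textbf{The $t_At_Ct_B$ certificate cannot succeed.} It fails already at the forced four-level. There $g_7^1$ and $g_5^1$ are below and $g_6^1$ is crossing. Since $e_7^1$ crosses $\tilde e_6^1$ and $e_6^1$ crosses $\tilde e_5^1$, the path $g_7^1\to g_6^1\to g_5^1$ goes $B\to C\to B$. So no ordering by types exists, and you must give a topological order directly.

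\textbf{A witness that works.} All conflicts in this construction are of type~1. Between levels $j<j'$ they are exactly $g_4^{j'},g_6^{j'},g_7^{j'}\to g_2^j,g_6^j,g_7^j$ and $g_7^j\to g_6^{j'}$. When $j'=j+1$ there are additionally $g_5^{j+1}\to g_1^j$ and $g_1^j,g_2^j,g_3^j\to g_3^{j+1},g_4^{j+1},g_5^{j+1}$. With this pattern, the set $\{g_1^1,g_5^1,g_6^1,g_7^1\}\cup\{g_1^j,g_4^j,g_7^j : 2\le j\le \ell\}$ has $3\ell+1$ elements. Every conflict among its elements goes forward in the order
$g_1^1,g_4^2,g_1^2,g_4^3,\dots,g_4^\ell,g_1^\ell,g_7^\ell,g_7^{\ell-1},\dots,g_7^1,g_6^1,g_5^1$,
which therefore certifies acyclicity and gives the equality claimed in the statement.
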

\begin{proof}
    That the conflict graph has $7\ell$ vertices follows immediately from \cref{lem:lower_bound_construction}\eqref{item:new-LB-2}.
    Observe that from every bidirected path of length seven given in \cref{lem:lower_bound_construction}\eqref{item:new-LB-3}, we can pick at most four gaps for the acyclic set.
    Furthermore, the only way to pick four is to pick the gaps $g_7^j$, $g_1^j$, $g_5^j$ and $g_6^j$, see also \cref{fig:conflictGraph}.
    Now assume the cardinality of an acyclic set exceeds~$3\ell+1$.
    Then, by the pigeonhole principle, there exist at least two paths of length seven from which there are four pairs in the acyclic set.
    In particular, the set contains $g_6^j$ and $g_7^{j'}$ for some $j<j'$ which are in double conflict by \cref{lem:lower_bound_construction}\eqref{item:new-LB-4}, contradicting acyclicity.
\end{proof}

Together, \cref{lem:lower_bound_ac,thm:bigtheorem} imply the following.

\begin{theorem}
    As a function of $n$, $\diam(\mathcal{F}_n) \geq \nicefrac{11}{7}\cdot n-o(n)$.
\end{theorem}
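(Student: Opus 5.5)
The plan is to combine \cref{lem:lower_bound_ac} with \cref{thm:bigtheorem}\eqref{item:big-LB}, and then let $\ell$ grow slowly with $\bar n$.

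For each fixed $\ell\geq 1$, apply \cref{thm:bigtheorem}\eqref{item:big-LB} to the pair $(T^\ell,\tilde T^\ell)$. By \cref{lem:lower_bound_construction}, these are non-crossing trees on a linearly ordered point set. By \cref{lem:lower_bound_ac}, their conflict graph $H^\ell$ satisfies $|V(H^\ell)|=7\ell$ and $\las(H^\ell)=3\ell+1$; in particular $V(H^\ell)$ is non-empty. Hence there is a constant $c_\ell$ such that for every $\bar n\ge 1$,
\[
\diam(\mathcal{F}_{\bar n}) \;\ge\; \Bigl(2-\tfrac{3\ell+1}{7\ell}\Bigr)\bar n - c_\ell \;=\; \Bigl(\tfrac{11}{7}-\tfrac{1}{7\ell}\Bigr)\bar n - c_\ell .
\]

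This bound alone only gives $\diam(\mathcal{F}_n)\ge(\nicefrac{11}{7}-\varepsilon)n-O_\varepsilon(1)$ for every fixed $\varepsilon>0$. To reach the $o(n)$ form, I will use a standard diagonal argument. Set $f(\ell)=\min\{m : c_{\ell'}\le m/\ell' \text{ for all } \ell'\le \ell\}$, which is finite and nondecreasing in $\ell$. Choose $\ell(n)$ to be the largest $\ell$ with $f(\ell)\le n$; then $\ell(n)\to\infty$ as $n\to\infty$. Plugging $\ell=\ell(n)$ into the bound above gives a total error of at most $n/(7\ell(n))+c_{\ell(n)}\le n/(7\ell(n))+n/\ell(n)=o(n)$. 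This yields $\diam(\mathcal{F}_n)\geq \nicefrac{11}{7}\cdot n-o(n)$.

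The only potentially delicate point is that the constant $c_\ell$ in \cref{thm:bigtheorem}\eqref{item:big-LB} depends on the pair of trees, and hence on $\ell$. The diagonalization above handles this without any quantitative control of $c_\ell$. If one wants an explicit rate, one can trace \cref{lem:lower}: with $N_\ell=|V(T^\ell)|=O(\ell)$ and a blowup factor $\beta$, the number of points is about $\beta\cdot 7\ell$, and the distance is at least $(\beta-2N_\ell)(14\ell-3\ell-1)$. Choosing $\beta\approx\ell^2$ then gives an error term of roughly $O(n^{2/3})$. This refinement is not needed for the stated result.
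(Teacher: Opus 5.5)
Your proposal is correct and follows the paper's proof: you apply \cref{thm:bigtheorem}\eqref{item:big-LB} to $(T^\ell,\tilde T^\ell)$ via \cref{lem:lower_bound_ac} to get $(\frac{11}{7}-\frac{1}{7\ell})n-c_\ell$ for each fixed $\ell$, and then let $\ell$ grow. Your explicit diagonal choice of $\ell(n)$ just unpacks the paper's $\epsilon$-argument, which already suffices because ``$\ge(\frac{11}{7}-\epsilon)n$ for all large $n$, for every $\epsilon>0$'' is by definition of $o(n)$ equivalent to the claim. One small fix: take $\ell(n)$ to be the largest $\ell\le n$ with $f(\ell)\le n$, so that this maximum exists even if $f$ happens to be bounded.
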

\begin{proof}
    By \cref{lem:lower_bound_ac} and \cref{thm:bigtheorem}\eqref{item:big-LB}, for any fixed $\ell$, we have
    \[
        \diam(\mathcal{F}_{n}) \geq \Big(2-\frac{3\ell+1}{7\ell}\Big)n - c = \Big(\frac{11}{7}-\frac{1}{7\ell}\Big)n - c
    \]
    for some $c$ depending on $\ell$.
    As a consequence, for any $\epsilon>0$, $\diam(\mathcal{F}_{n}) \geq \big(\frac{11}{7}-\epsilon\big)n$ for sufficiently large $n$.
    In fact, choose $\ell$ such that $\frac{1}{7\ell} < \epsilon$ and $n$ large enough that $(\epsilon-\frac{1}{7\ell})n\geq c$, and calculate
    \[
        \diam(\mathcal{F}_{n}) \geq \Big(\frac{11}{7}-\frac{1}{7\ell}\Big)n - c = \Big(\frac{11}{7}-\epsilon\Big)n + \Big(\epsilon - \frac{1}{7\ell}\Big)n - c \geq \Big(\frac{11}{7}-\epsilon\Big)n.
    \]
    
    It follows that $\diam(\mathcal{F}_{n}) \geq \frac{11}{7}n-o(n)$.
\end{proof}

%%%%%%%%%%%%%%%%%%%%%
%%                 %%
%%   CONCLUSIONS   %%
%%                 %%
%%%%%%%%%%%%%%%%%%%%%
\section{Conclusion}\label{sec:conclusion}
We provide new insights on the flip distance of non-crossing spanning trees.
Firstly, we show that computing the flip distance of two trees on points in convex position is \NP-hard. 
To the best of our knowledge this constitutes the first hardness result of a reconfiguration problem of plane spanning graphs for convex point sets, see also \cref{table:2}. 
The hardness also remains true when restricting to flips that are compatible or rotations.
This insight disproves the belief that computing the flip distance lies in \P{} whenever the happy edge property holds.

Secondly, our new lower bound on the diameter of the flip graph of non-crossing spanning trees on convex point sets improves the current best bounds in several related settings, including points in general position, compatible and rotation flips for both convex and general points sets, as well as slides for convex point sets.

Interesting future directions include determining the exact diameters for flip graphs on point sets in convex as well as general position.
We are also hopeful that our techniques for NP-hardness are fruitful in other settings. Very exciting is a recently announced breakthrough by Dorfer~\cite{dorfer2026flip} concerning the complexity of computing the flip distance of triangulations on convex point sets.
\bibliography{citationnew,lit}
	
\end{document}